\newtheorem{theorem}{Theorem}[section]
\newtheorem{definition}[theorem]{Definition}
\newtheorem{proposition}[theorem]{Proposition}
\newtheorem{claim}[theorem]{Claim}
\newtheorem{corollary}[theorem]{Corollary}
\newtheorem{lemma}[theorem]{Lemma}
\newtheorem{remark}[theorem]{Remark}
\newtheorem{conjecture}{Conjecture}
\title{On the Mysteries of MAX NAE-SAT \thanks{A preliminary version of this paper appeared in SODA 2021.}}
\author{Joshua Brakensiek\thanks{University of California, Berkeley. Work was primarily completed while JB was affiliated with Stanford University, supported in part by an NSF Graduate Research Fellowship and a Microsoft Research PhD Fellowship.}\and Neng Huang\thanks{University of Michigan. Work was primarily completed while NH was affiliated with the University of Chicago, supported in part by NSF grant CCF:2008920}\and Aaron Potechin\thanks{University of Chicago, supported in part by NSF grant CCF:2008920}\and Uri Zwick\thanks{Blavatnik School of Computer Science, Tel Aviv University, Israel}}
\date{}
\newcommand{\change}[1]{{{#1}}}
\newcommand{\changeB}[1]{{{#1}}}
\newcommand{\strikeout}[1]{}
\newcommand{\GE}{\;\ge\;}
\newcommand{\LE}{\;\le\;}
\newcommand{\EQ}{\;=\;}
\newcommand{\third}{\frac{1}{3}}
\newcommand{\intd}{\int\displaylimits}
\newcommand{\eps}{\varepsilon}
\newcommand{\E}{\mathbb E}
\newcommand{\scD}{\mathcal D}
\newcommand{\Cov}{\mathrm{Cov}}
\newcommand{\Var}{\mathrm{Var}}
\newcommand{\U}{\mathrm{U}}
\newcommand{\ba}{\mathbf{a}}
\newcommand{\bb}{\mathbf{b}}
\newcommand{\be}{\mathbf{e}}
\renewcommand{\bf}{\mathbf{f}}
\newcommand{\bg}{\mathbf{g}}
\newcommand{\br}{\mathbf{r}}
\newcommand{\bv}{\mathbf{v}}
\newcommand{\bu}{\mathbf{u}}
\newcommand{\bx}{{\mathbf{x}}}
\newcommand{\beps}{{\bm{\epsilon}}}
\newcommand{\prob}{\mbox{\sl prob}}
\newcommand{\RELAX}{\mbox{\sl relax}}
\newcommand{\hard}{\text{hard}}
\newcommand{\cF}{\mathcal{F}}
\newcommand{\cC}{\mathcal{C}}
\newcommand{\NAE}{\mbox{\sl NAE}}
\newcommand{\round}{\operatorname{round}}
\newcommand{\even}{\mathrm{even}}
\newcommand{\odd}{\mathrm{odd}}
\newcommand{\MN}[1]{MAX NAE-$\{#1\}$-SAT}
\begin{document}

\maketitle

\begin{abstract}
MAX NAE-SAT is a natural optimization problem, closely related to its better-known relative MAX SAT. 
The approximability status of MAX NAE-SAT is almost completely understood if all clauses have the same size~$k$, for some $k\ge 2$. We refer to this problem as MAX NAE-$\{k\}$-SAT. For $k=2$, it is \change{a slight extension of} the celebrated MAX CUT problem.
For $k=3$, it is related to the MAX CUT problem in graphs that can be fractionally covered by triangles. 
For $k\ge 4$, it is known that an approximation ratio of $1-\frac{1}{2^{k-1}}$, obtained by choosing a random assignment, is optimal, assuming $P\ne NP$. For every $k\ge 2$, an approximation ratio of at least $\frac{7}{8}$ can be obtained for MAX NAE-$\{k\}$-SAT.
There was some hope, therefore, that there is also a $\frac{7}{8}$-approximation algorithm for MAX NAE-SAT, where clauses of all sizes are allowed simultaneously. 

Our main result is that there is \emph{no} $\frac{7}{8}$-approximation algorithm for MAX NAE-SAT, assuming the Unique Games conjecture (UGC). In fact, even for almost-satisfiable instances of MAX NAE-$\{3,5\}$-SAT (i.e., MAX NAE-SAT where all clauses have size $3$ or $5$), the best approximation ratio that can be achieved, assuming UGC, is at most $\frac{3(\sqrt{21}-4)}{2}\approx 0.8739$.  

Using calculus of variations, we extend the analysis of O’Donnell and Wu for MAX CUT to MAX NAE-$\{3\}$-SAT. We obtain an optimal algorithm, assuming UGC, for MAX NAE-$\{3\}$-SAT, slightly improving on previous algorithms. The approximation ratio of the new algorithm is about $0.9089$. This gives a full understanding of MAX NAE-$\{k\}$-SAT for every $k\ge 2$. Interestingly, the rounding function used by this optimal algorithm is the solution of an integral equation. 

We complement our theoretical results with some experimental results. We describe an approximation algorithm for almost-satisfiable instances of MAX NAE-$\{3,5\}$-SAT with a conjectured approximation ratio of 0.8728, and an approximation algorithm for almost-satisfiable instances of MAX NAE-SAT with a conjectured approximation ratio of 0.8698. We further conjecture that these are essentially the best approximation ratios that can be achieved for these problems, assuming the UGC. Somewhat surprisingly, the rounding functions used by these approximation algorithms are non-monotone step functions that assume only the values $\pm 1$.
\end{abstract}

\section{Introduction}

\subsection{Background}

In a seminal paper, Goemans and Williamson \cite{GW95} introduced the paradigm of obtaining an approximation algorithm for a constraint satisfaction problem (CSP) by first solving a Semidefinite Programming (SDP) relaxation of the problem, and then rounding the solution. They used this paradigm to obtain an $\alpha_{GW}\approx 0.87856$-approximation algorithm for MAX CUT and some approximation algorithms for MAX DI-CUT (maximum directed cut), MAX 2-SAT, and MAX SAT. Khot et al.\ \cite{KKMO07} showed that the Unique Games Conjecture (UGC) of Khot \cite{khot02} implies that no polynomial time algorithm can have an approximation ratio of $\alpha_{GW}+\varepsilon$ for MAX CUT, for any $\varepsilon>0$, thus showing that the Goemans-Williamson MAX CUT algorithm is probably optimal.

Improved approximation algorithms for MAX DI-CUT and MAX 2-SAT were obtained by Feige and Goemans \cite{FG95}, Matuura and Matsui \cite{MM03}, and then by Lewin et al.\ \cite{LLZ02} who gave a $0.94016$-approximation algorithm for MAX 2-SAT and a $0.87401$-approximation algorithm for MAX DI-CUT. Austrin \cite{Austrin07} \change{(see also Brakensiek et al.~\cite{BHZ24})} showed that the MAX 2-SAT algorithm of Lewin et al.\ \cite{LLZ02} is optimal, assuming the UGC. Austrin \cite{Austrin10} obtained some hardness results for MAX 2-AND, a generalization of the MAX DI-CUT problem. Very recently, the authors \cite{BHPZ22} obtained further improved approximation algorithms for MAX 2-AND and MAX DI-CUT and also obtained a separation between MAX CUT, MAX DI-CUT and MAX 2-AND, assuming UGC.

MAX NAE-SAT and MAX SAT are natural extensions of the MAX CUT and MAX 2-SAT problems. An instance of MAX SAT is composed of a collection of clauses. Each clause is a set (or, equivalently, a disjunction) of literals, where each literal is a variable or its negation. The goal is to assign Boolean values ($0$ or~$1$) to the variables so as to maximize the number of clauses that contain a literal that evaluates to~$1$. In the MAX NAE-SAT problem the goal is to maximize the number of clauses that contain at least one literal that evaluates to $1$ and at least one literal that evaluates to~$0$. In both MAX SAT and MAX NAE-SAT, clauses may be weighted. The approximability thresholds of MAX SAT and MAX NAE-SAT are not well understood, even under UGC. What makes them hard is that clauses are allowed to be of varying sizes. 

For an integer $k$ we let MAX $\{k\}$-SAT and MAX NAE-$\{k\}$-SAT be the versions of MAX SAT and MAX NAE-SAT in which all clauses are of size \emph{exactly}~$k$, and by MAX $[k]$-SAT and MAX NAE-$[k]$-SAT the versions in which clauses are of size \emph{at most} $k$. 
Note that MAX NAE-$\{2\}$-SAT is a natural generalization of MAX CUT in which negations are allowed. The Goemans-Williamson algorithm also achieves a ratio of $\alpha_{GW}\approx 0.87856$ for MAX NAE-$\{2\}$-SAT.

H{\aa}stad \cite{H01}, relying on the PCP theory developed by Arora et al.\ \cite{AS98,ALMSS98}, and others, proved that for every $k\ge 3$, obtaining an approximation ratio of $(1-\frac{1}{2^k})+\varepsilon$ for MAX $\{k\}$-SAT, for any $\varepsilon>0$, is NP-hard. Similarly, for every $k\ge 4$, obtaining a ratio of $(1-\frac{1}{2^{k-1}})+\varepsilon$ for MAX NAE-$\{k\}$-SAT
is also NP-hard. Intriguingly, these ratios are obtained, in expectation, by simply choosing a random assignment.

Thus, as mentioned in the abstract, while it is known that an approximation ratio of at least $\frac{7}{8}$ can be obtained for each \emph{individual} clause size, for both MAX SAT and MAX NAE-SAT, it was not known whether an approximation ratio of $\frac{7}{8}$ can be obtained for all clause sizes \emph{simultaneously} (see e.g., \cite{MM17}\footnote{\cite{MM17} states that the best known upper bound on the approximation ratio of MAX NAE-SAT is that of MAX CUT, $\approx 0.87856$, but a hardness of $7/8$ follows from the upper bound of $7/8$ for MAX NAE-$\{4\}$-SAT, which follows from H{\aa}stad's $7/8$'s hardness of MAX $\{3\}$-SAT~\cite{H01}.}). For MAX NAE-SAT, we resolve this problem by showing that no such approximation algorithm exists.

For MAX $[3]$-SAT (and MAX NAE-$[4]$-SAT), Karloff and Zwick \cite{KZ97} obtained a $\frac{7}{8}$-approximation algorithm whose computer assisted analysis appears in \cite{zwick02}. A $\frac{7}{8}$-approximation algorithm is not known even for MAX $\{1,4\}$-SAT, i.e., when all clauses are either of size $1$ or~$4$. (See Halperin and Zwick \cite{HZ01}.) For general MAX SAT and MAX NAE-SAT, Avidor et al.\ \cite{ABZ05}, improving results of Andersson and Engebretsen \cite{AE98}, Asano and Williamson \cite{AW02} and Zhang et al.\ \cite{ZYH04}, obtained a MAX SAT algorithm with a conjectured approximation ratio of $0.8434$ and a MAX NAE-SAT algorithm with a conjectured approximation ratio of $0.8279$. (We discuss below why most approximation ratios claimed for MAX SAT and MAX NAE-SAT are only conjectured.) For \emph{satisfiable} instances of MAX NAE-SAT, Zwick \cite{Zwick99a} gave an algorithm with a conjectured approximation ratio of $0.863$. (We improve on this conjectured ratio, see below.)

In a breakthrough result, Raghavendra \cite{R08,R09} (see also Brown-Cohen and Raghavendra \cite{BCR15}), showed that under UGC, the best approximation ratio achievable for a maximum constraint satisfaction problem is essentially \emph{equal} to the \emph{integrality ratio} of a natural SDP relaxation of the problem. Furthermore, Raghavendra \cite{R08,R09} showed \change{that an approximation ratio arbitrarily close to} the optimal approximation ratio can be obtained using a rounding procedure selected from a specific family of rounding functions. (For more on finding almost optimal rounding procedures, see  Raghavendra and Steurer \cite{RS09}.)

Raghavendra's result \cite{R08,R09} does \emph{not} solve the MAX SAT and MAX NAE-SAT \change{questions}, as it does not tell us what are the optimal approximation ratios achievable for these problems. It only tells us which SDP relaxations are sufficient to obtain the optimal results and that the optimal ratios are equal to the integrality ratios of these relaxations. This is certainly important and useful information, but it does \emph{not} tell us what the integrality ratios are.

\subsection{Our Results}

Our first and main result is that under UGC, there is \emph{no} $\frac{7}{8}$-approximation algorithm for MAX NAE-SAT. The \change{question} for MAX SAT is still open. Furthermore, assuming UGC, no polynomial time algorithm can achieve a ratio of more than $\frac{3(\sqrt{21}-4)}{2}\approx 0.8739$ even for \emph{almost satisfiable} instances of MAX NAE-$\{3,5\}$-SAT, i.e., instances of MAX NAE-SAT in which all clauses are of size~$3$ or~$5$, and there is an assignment that satisfies a $1-\varepsilon$ (weighted) fraction of all the clauses, for an arbitrarily small $\varepsilon>0$. We obtain the result by explicitly constructing instances of MAX NAE-$\{3,5\}$-SAT that exhibit an integrality ratio of $\frac{3(\sqrt{21}-4)}{2}\approx 0.8739$ \change{for the Basic SDP relaxation (see Section~\ref{S-basic})}. The result then follows from Raghavendra's result \cite{R08,R09}.
\change{\begin{theorem}
    For all $\eps > 0$, it is UG-hard to distinguish instances of MAX NAE-$\{3,5\}$-SAT which are $(1-\eps)$-satisfiable from instances which are not $\left(\frac{3(\sqrt{21}-4)}{2} + \eps\right)$-satisfiable. The result holds even when there's no negated literal in the instances.
\end{theorem}}

Our second result is an optimal approximation algorithm for MAX NAE-$\{3\}$-SAT, assuming UGC. The approximation ratio of this algorithm is $\approx 0.9089$, slightly improving on the previous approximation ratio of $\approx 0.90871$ for MAX NAE-$\{3\}$-SAT~\cite{Zwick99a}. This means that the optimal approximation ratios of MAX NAE-$\{k\}$-SAT, for every $k\ge 2$, are now known  (see Table~\ref{T-NAE-k}). The rounding function used by the optimal MAX NAE-$\{3\}$-SAT algorithm is the solution of an \emph{integral equation}. The integral equation is obtained using a \emph{Calculus of Variations} approach. The integral equation does not seem to have a closed-form solution, but the optimal rounding function can be approximated to any desired accuracy. We show that this can be done by solving a system of linear equations. A similar integral equation can be used to characterize the optimal rounding function for MAX CUT with a given completeness, giving an alternative description of the optimal rounding functions that are described by O'Donnell and Wu \cite{OW08}. 
\change{\begin{theorem}
There is an explicit integral equation defining an SDP rounding function of MAX NAE-$\{3\}$-SAT which achieves an approximation ratio of $\approx 0.9089$. Furthermore, the obtained approximation algorithm is optimal, assuming UGC.
\end{theorem}}
\begin{table}
    \centering
    \begin{tabular}{|c|c|c|c|}
    \hline
         MAX NAE-$\{k\}$-SAT & Optimal ratio & Algorithm & Hardness \\
    \hline\hline
         $k=2$ & $\change{\alpha_{GW}}\approx 0.8786$ & Goemans-Williamson \cite{GW95} & Khot et al.\ \cite{KKMO07} \\
         \hline
         $k=3$ & $\approx 0.9089$ & \change{This paper} & \change{This paper} \\
         \hline
         $k\ge 4$ & $1-\frac{1}{2^{k-1}}$ & Random assignment & H{\aa}stad \cite{H01} \\
         \hline
    \end{tabular}
    \caption{Optimal approximation ratios for MAX NAE-$\{k\}$-SAT.}
    \label{T-NAE-k}
\end{table}
\paragraph{\change{Experimental Results.}}We next experiment with approximation algorithms for MAX NAE-SAT, as well as some restricted versions of the problem. For a set $K\subseteq \{2,3,\ldots\}$, we let MAX NAE-$K$-SAT be the restriction of MAX NAE-SAT to instances in which each clause is of size~$k$, for some $k\in K$.

For MAX NAE-$\{3,5\}$-SAT we obtain an algorithm with a conjectured ratio of $0.872886$. This may indicate that our  upper $\frac{3(\sqrt{21}-4)}{2}\approx 0.8739$ upper bound on the best achievable ratio is not far from being tight. We conjecture that the optimal ratio is much closer, if not equal, to $0.872886$, but new techniques would be needed to prove it. For (almost) satisfiable instances of MAX NAE-SAT we obtain an approximation algorithm with a conjectured ratio of $0.869809$. We further conjecture that this is essentially the best approximation ratio that can be obtained for the problem. Interestingly, this ratio is obtained for clauses of size 3,7 and 8. We thus believe that for (almost) satisfiable instances, MAX NAE-$\{3,7,8\}$-SAT is as hard as MAX NAE-SAT.

The exact approximation ratios achievable for MAX NAE-$\{3,5\}$-SAT, MAX NAE-$\{3,7,8\}$-SAT and MAX NAE-SAT are not of great importance. What we believe \emph{is} important is the nature of the rounding procedures used to obtain what we believe are optimal, or close to optimal, results. All our algorithms, as well as most previous approximation algorithms, round the solution of the SDP relaxation using the RPR$^2$ (Random Projection followed by Randomized Rounding) technique introduced by Feige and Langberg \cite{FL06}. This technique employs a rounding function $f:(-\infty, \infty)\to[-1,1]$. (For more on the RPR$^2$ technique see Section~\ref{sub-RPR2}.) For MAX NAE-$\{3\}$-SAT, the optimal rounding function~$f$ is the solution of the integral equation mentioned above. What is intriguing about the best rounding functions we found for versions of MAX NAE-SAT that involve clauses of at least two different sizes is that they are \emph{step functions} that only attain the values $\pm 1$. We have some possible explanations for this phenomenon using Hermite expansion (see Section~\ref{sub-Hermite}).
\subsection{Technical Overview}\label{sub-tech}

\subsubsection*{RPR$^2$ and Raghavendra's Theorem.} 
Raghavendra \cite{R08,R09} showed that assuming the Unique Games Conjecture, to approximate \change{Boolean} CSPs it is sufficient to consider a Basic SDP which assigns a unit vector $\bv_i$ to each variable $x_i$. The idea behind these vectors is that the SDP ``thinks'' there is a distribution of solutions where for all $i$ and $j$, $\E[{x_i}x_j] = \bv_i \cdot \bv_j$. The value of the SDP is the expected number of constraints that the SDP ``thinks'' are satisfied by this distribution. We formally state the Basic SDP and make this precise in Section~\ref{S-prelim}.

In order to obtain an actual assignment from this Basic SDP, we need a \emph{rounding \change{procedure}} which takes these vectors $\{\bv_i: i \in [n]\}$ and outputs values $\{X_i: i \in [n]\}$ where each $X_i \in \{-1,1\}$.
For this, there is a canonical method known as \emph{multi-dimensional} RPR$^2$. Multi-dimensional RPR$^2$ involves taking a \emph{rounding function} $f : \mathbb R^d \to [-1, 1]$, and assigning $1$ to $x_i$ with probability \[\frac{f(\br_1 \cdot \bv_i, \hdots, \br_d \cdot \bv_i)+1}{2}\;,\] where $\br_1, \hdots, \br_d \sim \mathcal N(0, I_n)$ are randomly sampled $n$-dimensional Gaussian variables, common to all the variables. 

Raghavendra's theorem \cite{R08,R09} formally states that if \strikeout{you can demonstrate} \change{there is} a MAX CSP instance whose SDP value is at least $c$ but the best integral assignment to the $x_i$'s satisfies at most $s$ fraction of the constraints, then assuming the \change{Unique} Games Conjecture, distinguishing instances which have a $(c-\eps)$-fraction satisfying assignment from instances which do not have an $(s+\eps)$-fraction satisfying assignment is NP-hard. One can also prove that if this CSP instance is chosen to minimize $s$ (for a fixed value of $c$) then given any CSP with a $(c+\eps)$-fraction satisfying assignment one can find a $(s-\eps)$-fraction satisfying assignment \change{in polynomial time}. By a suitable minimax argument (see, e.g., Section 7 of~\cite{BCR15}), such an algorithm can be attained by applying multidimensional RPR$^2$ rounding, where the rounding function is sampled from a suitable distribution.

In our proofs, we analyze the performance of various RPR$^2$ rounding \strikeout{rules} \change{functions}, by looking at the \emph{low-degree moments} \change{of the functions}. For instance, the second moment when $d=1$ is
\[
F_2[f](\rho) \EQ \underset{\br \sim \mathcal N(0, I_n)}{\E} \left[f(\br \cdot \bu)f(\br \cdot \bv)\right], \text{ where $\bu, \bv$ \change{are} unit vectors with dot product $\rho$.}
\]
In particular, for NAE-$\{k\}$-SAT, one can show it suffices to look at moments of $2\ell$ variables for $2\ell \le k$.

\change{The second moment} $F_2[f](\rho)$ has a number of nice properties. It is an increasing function of $\rho$, it is an odd function, and it is convex on nonnegative inputs. These properties play a crucial role in our results. 

\subsubsection*{$\frac{3(\sqrt{21}-4)}{2}\approx 0.8739$ UGC-hardness of MAX NAE-SAT.} In Section~\ref{S-int-gap}, we show that assuming the Unique Games Conjecture, MAX NAE-SAT does not admit a $7/8$-approximation. In fact, we show \strikeout{for} \change{that even for} \emph{near-satisfiable} instances of MAX NAE-$\{3,5\}$-SAT that we cannot achieve a $7/8$-approximation. The key observation which leads to hardness, is that for NAE-$\{3\}$-SAT, a difficult triple $(\bv_1, \bv_2, \bv_3)$ of unit vectors to round has pairwise dot product $\bv_i \cdot \bv_j = -\third$ for all $i \neq j$. Likewise for NAE-$\{5\}$-SAT, a difficult quintuple $(\bv_1, \bv_2, \bv_3, \bv_4, \bv_5)$ has $\bv_i \cdot \bv_j =\third$ for $1 \le i < j\le 4$ and $\bv_i \cdot \bv_5 = 0$ for all $1 \le i \le 4.$ If we write out the expected value of \strikeout{our rounding rule} \change{the output using a rounding function~$f$}, we derive that
\[
\text{best integrality ratio} \LE \min_{p \in [0, 1]} \max_{\change{f}} \left[(1-p)\frac{3+3F_2\change{[f]}(\third)}{4} + p \frac{15-6F_2\change{[f]}(\third)-F_4\change{[f]}(\third)}{16}\right],
\] 
where $F_4\change{[f]}(\third) = F_4\change{[f]}(\third,\third,\third,\third,\third,\third)$ is the fourth moment when all of the pairwise biases are $\third$ and $1-p$ and~$p$ are the relative weights of the NAE-$\{3\}$-SAT and NAE-$\{5\}$-SAT clause types, respectively. We prove that $F_2\change{[f]}(\third) \in [0, \third]$ and $F_4\change{[f]}(\third) \ge F_2\change{[f]}(\third)^2$, \change{for any rounding function~$f$}. These together imply that for $p = \frac{\sqrt{3}}{21}$, the above expression is at most $\frac{3(\sqrt{21}-4)}{2} \approx 0.8739$.

In addition to this indirect argument that Raghavendra's optimal algorithm cannot achieve better than a $0.8739$ approximation, we also give an explicit integrality gap instance which achieves the same upper bound on the approximation ratio. 

\subsubsection*{An optimal algorithm for MAX NAE-$\{3\}$-SAT (assuming UGC).} In Section~\ref{S-COV}, we tackle the problem of finding an optimal approximation algorithm for MAX NAE-$\{3\}$-SAT. Our approach follows a similar template to that of \change{O'Donnell and Wu} \cite{OW08} for MAX CUT. \strikeout{Like} \change{As in}~\cite{OW08}, we consider the max-min problem of finding an RPR$^2$ rounding function $f$ which achieves at least the optimal approximation ratio for any distribution $\mathcal D$ of triples of unit vectors (pairs of unit vectors for MAX CUT). In the case of MAX CUT, \cite{OW08} showed that the ``most difficult'' \change{distribution} $\mathcal D$ is a distribution over pairs of identical unit vectors, and pairs of unit vectors with dot product $\rho < 0$. \strikeout{Likewise} \change{Similarly}, we show (see Theorem~\ref{NAE3SATdisttheorem}) that the most difficult \change{distribution} $\mathcal D$ come from triples of vectors that \strikeout{are} have pairwise dot products $(\rho, \rho, 1)$ or $(\rho_0, \rho_0, \rho_0)$, where $\rho \in (-1, 0]$ and $\rho_0 = \max(\rho, -\third)$ or $\rho_0 = 1$. 

Once we fix a distribution $\mathcal D$, the performance of \change{a rounding function} $f$ can be expressed using a double integral.  Using \emph{calculus of variations}, we \strikeout{can then produce} \change{obtain} an integral equation that must be satisfied by the optimal $f$. \change{The equation is a \emph{Fredholm integral equation of the second kind}. (See, e.g., \cite{polyanin2008handbook}.)}

If we discretize $f$ as a step function, the integral equation becomes a system of linear equations. (Cf.\ Section~6 of the full version of~\cite{OW08}). Thus, \change{as} \strikeout{like} in~\cite{OW08}, we can efficiently compute an arbitrarily close approximation of the optimal function for MAX NAE-$\{3\}$-SAT. We found that the optimal approximation ratio is approximately $0.9089$. We also observed that the optimal function is very close to an $s$-linear function, \change{i.e., a function} of the form $f(x) = \max(-1, \min(1, sx))$, but there is a nontrivial error term on the order of $O(x^3)$.

\subsubsection*{Conjectured near-optimal algorithm for MAX NAE-SAT \change{for almost satisfiable instances}.} 

In Section~\ref{S-approx}, we give better experimental algorithms for MAX NAE-SAT and various restrictions of it, for nearly satisfiable instances  ($1-\eps$ completeness, for $\eps > 0$ sufficiently small). Although the performance ratios of our algorithms are based on experimental conjectures (see Section~\ref{sub-conj}), we believe that these are essentially the best approximation ratios that can be obtained for these problems. As mentioned in the introduction, these algorithms use fairly surprising rounding functions: non-monotone step functions that only attain the values $\pm1$.
 
All the previous experimental \strikeout{work} \change{results} (including the previous state of the art results of Avidor et al. \cite{ABZ05}) only  considered \emph{monotone} rounding functions $f : \mathbb R \to [-1, 1]$. This was for good reason, as for MAX CUT, and our new results for MAX NAE-$\{3\}$-SAT, the optimal rounding functions are provably monotone. On the other hand, for even a simple extension like MAX NAE-$\{3,5\}$-SAT, it turns out that the optimal rounding function is very likely not monotone, but rather a non-monotone step function. In fact, we conjecture that the function 

\[
f(x) \EQ \begin{cases}
-1 & x < -\alpha\\
\phantom{-}1 & x \in [-\alpha, 0]\\
-1 & x \in (0, \alpha]\\
\phantom{-}1 & x > \alpha
\end{cases} \;,
\]
where $\alpha \approx 2.27519364977$ (see Figure~\ref{fig:double-step}) is the optimal rounding function for near-satisfiable instances of MAX NAE-$\{3,5\}$-SAT. We have further experimental results for near satisfiable instances of MAX NAE-SAT.

\section{Preliminaries}\label{S-prelim}

\subsection{MAX CSP\change{s} and \change{the} Basic SDP}\label{S-basic}

\begin{definition}[MAX CSP$(\cF)$] Let $\cF$ be a set of Boolean functions. Each $f\in \cF$ is a function $f:\{-1,1\}^{k(f)}\to\{0,1\}$, where $k(f)$ is the number of arguments of~$f$. An instance of MAX CSP$(\cF)$ is composed of a set of variables $V=\{x_1,x_2,\ldots,x_n\}$ and a collection $\cC$ of weighted clauses. Each clause $C\in \cC$ is a tuple $(w,f,i_1,i_2,\ldots,i_k,\change{s_1,\ldots,s_k})$, where $w>0$, $f\in \cF$, $k=k(f)$, $i_1,i_2,\ldots,i_k$ are distinct indices from $[n]=\{1,2,\ldots,n\}$, and $\change{s_1,s_2,\ldots,s_k}\in\{-1,1\}$. Such a clause represents the constraint $f(\change{s_1}x_{i_1},\change{s_2}x_{i_2},\ldots,\change{s_k}x_{i_k})=1$. (Here $\change{s_j}x_{i_j}$ denotes multiplication.) The goal is to find an assignment $\alpha:V\to\{-1,1\}$ that maximizes the sum of the weights of the satisfied constraints.
\end{definition}

The above definition defines Boolean constraint satisfaction problems. (For technical reasons, the values assumed by the variables are $-1$ and $+1$ rather than $0$ and~$1$.) The definition can be extended to other domains. We require the indices $i_1,i_2,\ldots,i_k$ to be distinct. (If some of the indices are equal, this is equivalent to having a constraint on a smaller number of variables.) We allow constraints to be applied to both variables and their negations. (This is equivalent to requiring the set $\cF$ of allowed constrains to be closed under negations of some of the arguments.) \change{We can also consider the version in which we do not allow variable negations, i.e., we do not have $s_1, \ldots, s_k$ in the above definition. This version is sometimes referred to as monotone MAX CSP, denoted by MAX CSP$^+$. Note that any MAX CSP$^+(\cF)$ instance is automatically a MAX CSP$(\cF)$ instance. } For more on general constraint satisfaction problems, and their maximization variants, see \cite{bulatov2000constraint,zhuk2017proof}.
\begin{definition}
For every integer $k \geq 2$, the \emph{Not-All-Equal} predicate on $k$ variables is defined as 
\[
\NAE_k(x_1, \ldots, x_k) \EQ \left\{\begin{array}{ll}
    0 & \text{if } x_1 = x_2 = \cdots = x_k, \\
    1 & \text{otherwise.} 
\end{array}\right.
\]
\end{definition}

We remark that the Not-All-Equal predicates are \emph{even} predicates: a collection of Boolean variables are not-all-equal if and only if their negations are not-all-equal.

In this paper, we only consider problems of the form MAX CSP$(\cF)$ where $\cF\subseteq \{\NAE_k \mid k\ge 2\}$\change{, and their monotone version  MAX CSP$^+(\cF)$}. For a set $K\subseteq \{2,3,\ldots\}$, we let MAX NAE-$K$-SAT be a shorthand for MAX CSP$(\{\NAE_k\mid k\in K\})$\change{, and we will refer to  MAX CSP$^+(\{\NAE_k\mid k\in K\})$ as monotone MAX NAE-$K$-SAT}.

For a \change{MAX} CSP instance with variable set $V = \{x_1, \ldots, x_n\}$ and clause set $\mathcal{C}$, we can define the Basic SDP as follows. We maintain a unit vector $\bv_i \in \mathbb R^{n+1}$ for each variable $x_i$ and a special unit vector $\bv_0$, and for each clause $C$ a probability distribution $p_C(\alpha)$ over $\mathcal{A}(C)$, the set of all assignments on variables in $C$. Here $z_i$ stands for a literal which is either $x_i$ or $-x_i$, and $\alpha(x_i)$ is the value $\alpha$ assigns to $x_i$. We use the notation $z_i \in C$ to denote that $z_i$ appears in the clause $C$.

\begin{align*}
&\quad \max \sum_{C \in \mathcal C}  w_C\left(\sum_{\alpha \in \mathcal{A}(C)}p_C(\alpha)C(\alpha)\right)\\
\forall i \in \{0, 1, 2, \ldots, n\},\ \ \ & \quad \bv_i \cdot \bv_i = 1\\
\forall C \in \mathcal C, \forall z_i, z_j \in C, \ \ \ & \quad \bv_i \cdot \bv_j = \sum_{\alpha \in \mathcal{A}(C)} \alpha(x_i)\alpha(x_j)p_C(\alpha)\\
\forall C \in \mathcal C, \forall z_i \in C, \ \ \ & \quad \bv_i \cdot \bv_0 = \sum_{\alpha \in \mathcal{A}(C)} \alpha(x_i)p_C(\alpha)\\
\forall C \in \mathcal C, \forall \alpha \in \mathcal{A}(C),\ \ \ & \quad p_C(\alpha) \ge 0.
\end{align*}

Note that the first two constraints imply that $\sum_{\alpha \in \mathcal{A}(C)} p_C(\alpha) = 1$. Let $b_i = \bv_0 \cdot \bv_i$ and $b_{i,j} = \bv_i \cdot \bv_j$ for $i \neq j \in [n]$. We call $b_i$ \emph{biases} and $b_{i,j}$ \emph{pairwise biases}. Informally speaking, $b_i$ \strikeout{is intended for} \change{represents} $\E[x_i]$, $b_{i,j}$ \strikeout{is intended for} \change{represents} $\E[x_ix_j]$, and the SDP constraints are saying that local assignments should agree with these biases and pairwise biases.
Note that for a \change{MAX} CSP \change{instance} with even predicates as in our case, $\bv_0$ (and therefore the biases) is not useful, as we can always combine a solution and its negation to get a new solution with 0 biases while preserving the objective value.

\begin{definition}[\change{Completeness, soundness and integality gap curve}]
For a \change{MAX} CSP instance $\Phi$, we define its \emph{completeness}, denoted $c(\Phi)$, to be its SDP value, and its \emph{soundness}, denoted $s(\Phi)$, to be the value of the optimal integral solution to $\Phi$. For a MAX CSP \change{(or MAX CSP$^+$)} problem $\Lambda$, define its \emph{integrality gap curve} to be the function $S_\Lambda(c): c \mapsto \inf\{s(\Phi) \mid \Phi \in \Lambda, c(\Phi) = c\}$.
\end{definition}

The Unique Games Conjecture, introduced by Khot~\cite{khot02}, is a central conjecture in the study of approximation algorithms. One version of the conjecture is as follows.

\begin{definition}
In the Unique Games problem with $R$ labels, we are given a graph $G = (V, E)$ and a set of permutations $\{\pi_e: [R] \to [R] \mid e \in E\}$. An assignment $\alpha: V \to [R]$ is said to satisfy an edge $e = (u, v)$ if $\pi_e(\alpha(u)) = \alpha(v)$. Our goal is to find an assignment that maximizes the number of satisfied edges.
\end{definition}

\begin{conjecture}[Unique Games Conjecture]\label{C-UGC}
For any constant $\varepsilon > 0$, there exists $R \in \mathbb{N}$ such that for Unique Games problem with $R$ labels, it is NP-hard to distinguish between the following two cases:
\begin{itemize}
    \item there exists an assignment that satisfies $(1- \varepsilon)$-fraction of edges;
    \item no assignment can satisfy more than $\varepsilon$-fraction of edges.
\end{itemize}
\end{conjecture}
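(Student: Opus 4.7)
The Unique Games Conjecture is a famous open problem, and I will not attempt a genuine proof; the plan is simply to record the statement so that, together with Raghavendra's theorem, it can be invoked in Section~\ref{S-int-gap} to convert the SDP integrality gaps we construct into matching inapproximability results for MAX NAE-SAT. What follows is therefore a brief account of where a proof attempt would have to live, rather than a real plan.

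If one were to attempt a proof, the natural route would be a PCP-style reduction from a canonical hard problem such as Label Cover. First, I would start from a hardness-of-approximation result with near-perfect completeness; next, I would encode assignments using an appropriate error-correcting code together with an outer verifier; finally, I would analyze soundness via Fourier expansion and hypercontractivity, or more recent Grassmann-expansion-style machinery. This is essentially the template that produced H{\aa}stad's optimal $3$-query PCP and the $\frac{7}{8}$-hardness for MAX $\{3\}$-SAT cited above, and that Khot, Minzer and Safra later used to prove the $2$-to-$2$ Games Conjecture, which already yields a ``half-UGC'' statement in which perfect completeness is relaxed to $\tfrac{1}{2}+\varepsilon$.

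The hard part, and the reason UGC has remained open for close to two decades, is that no known PCP simultaneously achieves \emph{uniqueness} of the constraints (each $\pi_e$ a bijection on $[R]$) \emph{and} completeness $1-\varepsilon$ for arbitrarily small $\varepsilon>0$. Parallel-repeated Label Cover is near-satisfiable but has many-to-one constraints; the Grassmann/Johnson-based reductions underlying the $2$-to-$2$ Games theorem do produce essentially unique constraints, but their completeness is intrinsically bounded away from $1$. Any serious attempt to prove UGC would therefore have to supply a genuinely new encoding, or a completeness-preserving amplification procedure, that simultaneously preserves uniqueness, amplifies the soundness gap, and maintains completeness $1-\varepsilon$. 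For the purposes of this paper we treat Conjecture~\ref{C-UGC} strictly as a black box and pass it directly to Raghavendra's reduction.
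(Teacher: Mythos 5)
You correctly treat Conjecture~\ref{C-UGC} as what it is: an open conjecture stated as a black-box assumption, which is exactly how the paper handles it — no proof is given or expected, and the conjecture is simply passed to Raghavendra's reduction. Your surrounding commentary on why a proof remains out of reach is accurate but not needed; the treatment matches the paper.
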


We say that a problem is Unique Games hard if it is NP-hard assuming the Unique Games Conjecture. Raghavendra, in his breakthrough paper~\cite{R08,R09}, showed the following result which exhibited the close relation between the integrality gap curve of a CSP problem and its unique game hardness.

\begin{theorem}[Raghavendra~\cite{R08,R09}]
For a MAX CSP \change{(or MAX CSP$^+$)} problem $\Lambda$, Let $U_\Lambda(c)$ be the best polynomial-time computable integral value on instances with optimal value $c$ assuming the Unique Games Conjecture. Then we have
\begin{enumerate}
    \item For every constant $\eta > 0$ and $c \in [0, 1)$, $U_\Lambda(c) \leq S_{\Lambda}(c + \eta) + \eta$.
    \item For every constant $\eta > 0$ and $c \in (0, 1]$, there exists a polynomial time algorithm that on any input with completeness $c$, outputs an integral solution with value at least $U_{\Lambda}(c - \eta) - \eta$.
\end{enumerate}
\end{theorem}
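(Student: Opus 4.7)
The plan is to prove the two parts separately, one giving hardness from integrality gaps, the other giving matching algorithms.

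For part (1), I would start with an integrality gap instance $\Phi$ that witnesses $s(\Phi) \le S_\Lambda(c+\eta)+\eta/3$ with $c(\Phi) \ge c+\eta$, and convert it into a \emph{dictatorship test} on the Boolean hypercube $\{-1,1\}^R$. The test queries clauses drawn from $\Phi$, where for each chosen clause involving vectors $\bv_{i_1},\ldots,\bv_{i_k}$ we sample Boolean-valued inputs whose pairwise correlations match the SDP inner products $\bv_{i_j}\cdot\bv_{i_{j'}}$, then perturb each coordinate with $\rho$-noise for small $\rho>0$. The completeness follows from evaluating a ``dictator'' function $f(x)=x_i$, which achieves acceptance $\ge c - O(\eta)$. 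For soundness I would invoke the invariance principle of Mossel--O'Donnell--Oleszkiewicz: any function $f$ with no influential coordinate behaves on Boolean inputs as it does on Gaussian inputs, and the Gaussian case is exactly the SDP integrality setting, so it is bounded by $s(\Phi)+O(\eta)$. Composition with Unique Games in the standard KKMO manner then yields UG-hardness of distinguishing $c-\eta$ vs.\ $S_\Lambda(c+\eta)+\eta$ instances of $\Lambda$.

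For part (2), the algorithm solves the Basic SDP and then rounds using a multidimensional RPR$^2$ rule drawn from a distribution $\scD^*$ over functions $f:\RR^d \to [-1,1]$. The key structural fact is that the expected fraction of clauses satisfied by such a rounding is a continuous function of the local SDP configuration (the Gram matrix of the vectors appearing in a single clause together with $\bv_0$) alone. I would view the problem as a two-player zero-sum game: the algorithm picks a distribution over rounding rules, and an adversary picks a distribution over local configurations consistent with SDP value $\ge c-\eta$. The minimax theorem (applied on appropriate compact sets after discretization) exchanges these, and the inner max over configurations is, by definition, at least $S_\Lambda(c-\eta)$, so there exists a mixture $\scD^*$ achieving this value up to $\eta$. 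To make this constructive and polynomial-time, I would follow Brown-Cohen--Raghavendra and discretize both the space of bounded-dimension rounding functions and the space of local configurations to a finite grid of size $\mathrm{poly}(1/\eta)$, solve the resulting finite LP, and verify that the discretization error is absorbed into the $\eta$ slack.

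The main obstacle will be the soundness analysis in part (1): to invoke the invariance principle one needs the test distribution to be sufficiently smooth (i.e., to have full support on each coordinate with bounded density), which requires carefully mixing in $\rho$-noise and arguing that the completeness only degrades by $O(\eta)$ while the ``low-influence'' assumption becomes usable; this also requires showing that influential coordinates can be decoded to a Unique Games labelling, which is where the list-decoding machinery of KKMO enters. A secondary but nontrivial point is choosing the dimension $d$ of the rounding functions in part (2): $d$ must be bounded independently of the instance size (but may depend on $\eta$ and $\Lambda$) in order for the discretization and minimax arguments to terminate.
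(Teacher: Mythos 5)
This theorem is not proved in the paper at all: it is imported verbatim from Raghavendra~\cite{R08,R09}, and the paper only remarks afterwards that it means the integrality gap curve of the Basic SDP coincides with the UG-hardness threshold. So there is no in-paper proof to compare against; the relevant comparison is with Raghavendra's own argument (and the minimax formulation in Section~7 of~\cite{BCR15}, which the paper itself points to in the technical overview).

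Measured against that, your sketch has the right architecture. Part (1) as you describe it --- turn an integrality gap instance into a dictatorship test whose queries reproduce the SDP correlations, get completeness from dictators, get soundness for low-influence functions via the invariance principle reducing the Boolean analysis to the Gaussian/SDP setting, then compose with Unique Games in the KKMO style --- is exactly the route of~\cite{R08,R09}. Part (2) via a minimax exchange between distributions over multidimensional RPR$^2$ rounding rules (with dimension $d$ depending only on $\eta$ and $\Lambda$) and distributions over local configurations, made effective by an $\eps$-net/discretization and an LP, is likewise the actual mechanism, and matches the paper's own description of $Round_f$ and the $\eps$-net search in Section~\ref{sub-RPR2}. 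What you have is a faithful roadmap rather than a proof: the genuinely hard steps you defer --- the smoothing/perturbation of the SDP solution needed so the invariance principle applies while completeness only drops by $O(\eta)$ (this is the ``smoothing'' preprocessing the paper alludes to, scaling the pairwise biases), the soundness bookkeeping that converts influential coordinates into a UG labelling, and the quantitative control of the discretization error and of $d$ --- are precisely where the length of Raghavendra's paper goes, and your proposal correctly identifies them as the obstacles without resolving them. As a blind reconstruction of the cited theorem's proof strategy it is accurate; as a self-contained proof it is incomplete, but no more is expected here since the paper itself treats the statement as a black box.
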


We will describe Raghavendra's algorithm in the following subsection. For now we remark that this theorem essentially says that the integrality gap of the Basic SDP is the Unique Games hardness threshold of the \change{MAX} CSP \change{problem}, so in order to show Unique Games hardness, it suffices to construct an integrality gap instance for the Basic SDP.

\subsubsection{MAX NAE-SAT}

The approximation constant of MAX NAE-SAT has multiple interpretations in the literature~\cite{ABZ05,MM17}. The most common one is to assume that any instance $\Psi$ consists of an arbitrarily large clauses, in particular they grow as a function of the number of variables of the instance. Note that Raghavendra's result does not apply in this case, as the Basic SDP for an instance with clauses of length $\Omega(n)$ has exponential size.

A secondary interpretation is to consider the limiting behavior of MAX NAE-$[k]$-SAT as $k \to \infty$. In this case, each $k$ is a fixed value so the Basic SDP has polynomial size and Raghavendra's theorem applies.

It turns out these two views are essentially identical in the limit, and as such we assume the latter in the remainder of the main body of the paper. See Appendix~\ref{S-equiv} for more details.

\subsection{The \texorpdfstring{RPR$^2$}{RPR2} rounding technique}\label{sub-RPR2}

RPR$^2$ (Random Projection followed by Random Rounding) is an algorithm for rounding SDP solutions proposed by Feige and Langberg~\cite{FL06}. It generalizes hyperplane rounding and outward rotation techniques and has played an important role in designing SDP-based approximation algorithms. The RPR$^2$ technique chooses a function $f: \mathbb{R} \to [-1, 1]$ and performs the following action:
\begin{itemize}
    \item Choose a random $\br \sim N(\mathbf{0}, \mathbf{I}_n)$, where $n$ is the length of the SDP solution vectors and $N(\mathbf{0}, \mathbf{I}_n)$ is the $\change{n}$-dimensional standard normal distribution.
    \item For each variable $x_i$, compute the inner product $t_i = \br \cdot \bv_i$. (Random projection)
    \item For every $i$,  independently assign $x_i = 1$ with probability $\frac{1 + f(t_i)}{2}$, \change{and} $x_i = -1$ with probability $\frac{1 - f(t_i)}{2}$. (Random Rounding)
\end{itemize}
Some previously used functions for $f$ include the sign function (in hyperplane rounding) and piecewise linear functions.

In~\cite{R08,R09}, Raghavendra showed an algorithm which achieves the integrality gap of the Basic SDP for any CSP within an arbitrary precision, assuming UGC. His algorithm makes use of the following procedure\footnote{We are stating Raghavendra's algorithm for even CSPs in which the assignment $(x_1, \hdots, x_n)$ has the same weight as the assignment $(-x_1, \hdots, -x_n)$ for all assignments to $(x_1, \hdots, x_n)$ (such as MAX CUT and MAX NAE-SAT). Otherwise, Raghavendra considers a more general algorithm which incorporates $\bv_0 \cdot \bv_i$ for all $i$.} (denoted $Round_f$ in~\cite{R08,R09}), which is essentially a multi-dimensional version of RPR$^2$:
\begin{itemize}
    \item Choose a function $f: \mathbb{R}^d \to [-1, 1]$. Sample $d$ random normal vectors $\br^{(1)}, \ldots, \br^{(d)} \sim N(\mathbf{0}, \mathbf{I}_n)$. 
    \item For each variable $x_i$ and $1\leq j \leq d$, let $t_i^{(j)} = \bv_i \cdot \br^{(j)}$.
    \item Let $p_i = f(t_i^{(1)}, \ldots, t_i^{(d)})$. Assign $x_i = 1$ with probability $\frac{1 + p_i}{2}$ and $x_i = -1$ with probability $\frac{1 - p_i}{2}$.
\end{itemize}

When $d = 1$, $Round_f$ is the usual RPR$^2$ procedure. We will refer to this procedure as  RPR$^2_d$ if $f$ is a $d$-dimensional function. Raghavendra's algorithm runs $Round_f$ for every $f$ in a pre-computed $\varepsilon$-net and picks the best $f$. It also has a preprocessing ``smoothing'' step, which in our case corresponds to scaling the pairwise biases by a common factor. \change{On the other hand, given some SDP solution that is hard to round for any $\round_f$ procedure, we can also take an $\epsilon$-net of the $d$-dimensional unit sphere and construct an integrality gap instance for the Basic SDP. This idea and the argument in~\cite{RS09} imply the following lemma.
\begin{lemma}\label{lem:rpr2_ughardness}
    Let $c > s > 0$ and $\Phi$ be a (resp. monotone) MAX NAE-SAT instance with the following properties:
    \begin{itemize}
        \item $\Phi$ has an SDP solution with value at least $c$, and 
        \item for all $f$, the expected number of constraints satisfied by $Round_f$ on said SDP solution is at most $s$,
    \end{itemize} then it is UG-hard to distinguish (resp. monotone) MAX NAE-SAT instances with completeness at least $c - \epsilon$ from instances that are not $(\changeB{s} + \epsilon)$-satisfiable for any $\epsilon > 0$.
\end{lemma}
}

It turns out that for \strikeout{$Round_f$ procedures} \change{MAX NAE-SAT}, it suffices to consider odd $f$, due to the following lemma.

\begin{lemma}\label{lemma:odd_sufficient}
Let $f: \mathbb{R}^d \to [-1, 1]$ and $f'(x) = (f(x) - f(-x))/2$ be its odd part. For any \change{MAX} CSP, the worst case performance of $Round_{f'}$ is at least as good as $Round_f$.
\end{lemma}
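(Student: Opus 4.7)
The plan is to express $Round_{f'}$ on an arbitrary SDP instance $\Phi$ as an average of $Round_{f}$ applied to ``sign-flipped'' variants $\{\Phi_\sigma\}_{\sigma \in \{\pm 1\}^n}$ that all have the same SDP value as $\Phi$; the worst-case guarantee of $Round_f$ then transfers to $Round_{f'}$ by a trivial averaging.  First I would show that $Round_{f'}$ has the same output distribution as the following procedure: sample $\br \sim \mathcal{N}(\mathbf{0},\mathbf{I}_n)$ and independent uniform signs $\sigma_i \in \{\pm 1\}$; set $y_i = 1$ with probability $(1 + f(\sigma_i t_i))/2$ where $t_i = \br \cdot \bv_i$; and output $x_i := \sigma_i y_i$.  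A direct calculation gives
\[
P(x_i = 1 \mid \br) \;=\; \tfrac{1}{2}\cdot\tfrac{1+f(t_i)}{2} + \tfrac{1}{2}\cdot\tfrac{1-f(-t_i)}{2} \;=\; \tfrac{1+f'(t_i)}{2},
\]
and since the pairs $(\sigma_i,y_i)$ are independent across $i$, the outputs $x_i$ are conditionally independent given $\br$ with the correct $Round_{f'}$ marginals.

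Next, for each $\sigma \in \{\pm 1\}^n$, let $\Phi_\sigma$ be the CSP obtained from $\Phi$ by replacing every literal $b_j x_{i_j}$ with $b_j \sigma_{i_j} y_{i_j}$ (this remains an instance of the same CSP family since the family is closed under literal negation).  I would verify that $\{\bv_0\}\cup\{\sigma_i \bv_i\}$, together with the pushforward local distributions $p_{C_\sigma}(y) := p_C(\sigma y)$, forms a feasible Basic SDP solution for $\Phi_\sigma$ with the same objective value as the original solution of $\Phi$; this uses the identities $(\sigma_i\bv_i)\cdot(\sigma_j\bv_j)=\sigma_i\sigma_j\,\bv_i\cdot\bv_j$ and $\bv_0 \cdot (\sigma_i \bv_i) = \sigma_i\,\bv_0\cdot\bv_i$.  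In particular $\mathrm{SDP}(\Phi_\sigma)=\mathrm{SDP}(\Phi)$.  Now observe that running $Round_f$ on $\Phi_\sigma$ with SDP solution $\{\sigma_i \bv_i\}$ assigns $y_i=1$ with probability $(1+f(\br\cdot\sigma_i\bv_i))/2=(1+f(\sigma_i t_i))/2$, which is precisely the middle step of the procedure above; and under the bijection $x=\sigma y$, the value of $\Phi$ at $x$ equals the value of $\Phi_\sigma$ at $y$.  Combining these observations yields the key identity
\[
\mathbb{E}\bigl[\mathrm{val}(Round_{f'}(\Phi))\bigr] \;=\; \mathbb{E}_{\sigma}\!\left[\mathbb{E}\bigl[\mathrm{val}(Round_f(\Phi_\sigma))\bigr]\right].
\]

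Finally, if the worst-case approximation ratio of $Round_f$ is $\alpha$, then for every $\sigma$ we have $\mathbb{E}[\mathrm{val}(Round_f(\Phi_\sigma))] \geq \alpha \cdot \mathrm{SDP}(\Phi_\sigma) = \alpha \cdot \mathrm{SDP}(\Phi)$, and averaging gives $\mathbb{E}[\mathrm{val}(Round_{f'}(\Phi))] \geq \alpha \cdot \mathrm{SDP}(\Phi)$; since $\Phi$ was arbitrary, $Round_{f'}$ achieves worst-case ratio at least $\alpha$.  The main obstacle (and really the only place that demands care) is the SDP bookkeeping in the middle step: checking that the pushforward local distributions $p_{C_\sigma}$ are supported on satisfying assignments of $\Phi_\sigma$ and that \emph{all} the Basic SDP constraints—unit norm, pairwise biases, and biases against $\bv_0$—are preserved under the literal-sign change.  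Once this verification is in place, the transfer argument is pure averaging and requires no further analysis.
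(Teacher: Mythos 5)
Your proposal is correct and follows essentially the same route as the paper: both realize $Round_{f'}$ on $\Phi$ as an average over sign patterns $\sigma$ of $Round_f$ run on the sign-flipped instances $\Phi_\sigma$ (with the flipped SDP solution having the same value), and then conclude by averaging. Your write-up simply makes explicit the SDP feasibility bookkeeping that the paper leaves implicit.
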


\begin{proof}
Consider an arbitrary \change{MAX} CSP and let $\Phi$ be a worst-case instance for $Round_{f'}$. Observe that the $Round_{f'}$ procedure is equivalent to the following: independently for every variable $x_i$, with probability 1/2 apply the rounding function $f$ on $\bv_i$, and with probability 1/2 apply the rounding function $f$ on $-\bv_i$ and flip the result. Observe further that by replacing $\bv_i$ with $-\bv_i$ and flipping the outcome, we are essentially applying $f$ to a new instance with $-x_i$ in place of $x_i$ in $\Phi$. This implies that the value of $Round_{f'}$ on $\Phi$ is an average of $2^n$ values ($n$ is the number of variables in $\Phi$) where each value is $Round_f$ evaluated on an instance obtained by flipping some variables in $\Phi$. It follows that in some of these instances $Round_f$ has a value as bad as the value of $Round_{f'}$ on $\Phi$.
\end{proof}

\strikeout{In the later sections, we will only consider $Round_f$ with odd $f$}.

\subsection{Moment functions of rounding rules}\label{sub-moment}

Given a rounding rule \change{$Round_f$}, we can view its outputs as random variables $X_1, \ldots, X_n$, where $X_i$ is the value assigned to $x_i$. An important property of the RPR$^2$ technique, either one-dimensional or multi-dimensional, is that the $k$-wise moments of these random variables are determined entirely by the pairwise dot products of the vectors in the SDP solution. In other words, for any \strikeout{group} \change{set} of variables $\{X_{i_1}, \ldots, X_{i_k}\}$, $\E[X_{i_1}\cdots X_{i_k}]$ is a function \strikeout{on} \change{of} $b_{i_1,i_2}, b_{i_1,i_3}, \ldots, b_{i_{k-1},{i_k}}$. This inspires the following definition.

\begin{definition}[e.g.,~\cite{P19,HP19}]\label{def:f2f4}
For a rounding procedure $Round_f$, define $F_{k}[f](b_{1,2}, b_{1,3}, \ldots, b_{{k-1},{k}})$ to be the $k$-wise moment $\E[X_{1}\cdots X_{k}]$, where $X_{1}, \ldots, X_{k}$ are random variables obtained by applying $Round_f$ to vectors $\bv_{1}, \ldots, \bv_{k}$ such that $\bv_{i} \cdot \bv_{j} = b_{i,j}$ for every $1 \leq i < j \leq k$.
\end{definition}

We will usually omit the argument $f$ and write $F_k$ for $F_{k}[f]$ unless there are multiple rounding functions in question. \change{We will also write $F_k[f](x)$ instead of $F_k[f](x, x, \ldots, x)$ if the input biases are all equal.} The following observation is immediate:
\begin{proposition}
For every odd function $f$ and odd integer $k > 0$, $F_{k}[f] = 0$.
\end{proposition}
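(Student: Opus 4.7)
The plan is to reduce the proposition to a one-line symmetry argument by first writing $F_k[f]$ as a single Gaussian expectation, and then exploiting invariance of the standard Gaussian under the map $\br \mapsto -\br$.

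First I would unfold the definition of $Round_f$ to express the moment. Conditioned on the Gaussian sample $\br$ (a single $\br$ in the one-dimensional case, or a tuple $\br^{(1)},\ldots,\br^{(d)}$ in the multi-dimensional case), each $X_i$ is independent across $i$, with $\E[X_i \mid \br] = f(t_i)$ where $t_i = \br \cdot \bv_i$ (or $t_i = (\br^{(1)} \cdot \bv_i, \ldots, \br^{(d)} \cdot \bv_i)$). Therefore
\[
F_k[f](b_{1,2},\ldots,b_{k-1,k}) \;=\; \E\!\left[\prod_{i=1}^{k} X_i\right] \;=\; \E_{\br}\!\left[\prod_{i=1}^{k} f(t_i)\right].
\]

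Next, I would apply the change of variables $\br \mapsto -\br$. Since the standard Gaussian distribution is invariant under this reflection, the expectation above is unchanged. On the integrand side, each $t_i$ becomes $-t_i$, and oddness of $f$ gives $f(-t_i) = -f(t_i)$, so the product picks up a factor of $(-1)^k$. For odd $k$ this factor is $-1$, and we conclude
\[
F_k[f] \;=\; \E_{\br}\!\left[\prod_{i=1}^{k} f(-t_i)\right] \;=\; (-1)^k \, F_k[f] \;=\; -F_k[f],
\]
which forces $F_k[f] = 0$.

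There isn't really an obstacle: the only thing worth double-checking is that the argument works uniformly for the one-dimensional RPR$^2$ and the multi-dimensional $Round_f$ of Raghavendra, which it does, because the sign flip $\br \mapsto -\br$ (or the componentwise flip of the Gaussian tuple) is a measure-preserving symmetry of the joint Gaussian in both settings, and the oddness hypothesis on $f$ is interpreted componentwise in the multi-dimensional case. Hence the proposition holds in full generality with no further computation needed.
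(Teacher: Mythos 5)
Your proof is correct and is essentially the same symmetry argument as the paper's: the paper negates all the vectors $\bv_i$ (noting the pairwise biases, and hence the value of $F_k$, are unchanged while the outputs of $Round_f$ are negated in distribution), whereas you negate the Gaussian sample(s) $\br$ (noting the Gaussian measure is unchanged while each factor $f(t_i)$ flips sign); since only the products $\br\cdot\bv_i$ enter, the two sign flips are interchangeable and both yield $F_k[f]=(-1)^kF_k[f]=0$. One small wording caveat: in the multi-dimensional case oddness of $f$ means $f(-t)=-f(t)$ for the whole argument vector $t\in\mathbb{R}^d$ rather than anything coordinatewise, which is exactly what your flip of the entire tuple $(\br^{(1)},\ldots,\br^{(d)})$ provides, so the argument goes through as stated.
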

\begin{proof}
Let $X_1, \ldots, X_k$ be the sequence of random variables that $Round_f$ outputs on SDP vectors $\bv_1, \ldots, \bv_k$. \change{Taking the negation of each of these vectors, each $X_i$ is also negated because $f$ is odd}
while the pairwise biases stay the same. It follows that
\begin{align*}
    \E[X_1\cdots X_k] & \EQ F_k[f](\bv_1 \cdot \bv_2, \ldots, \bv_{k-1}\cdot \bv_k) \\ & \EQ F_k[f]((-\bv_1) \cdot (-\bv_2), \ldots, (-\bv_{k-1})\cdot (-\bv_k)) \\
    & \EQ\E[(-X_1)\cdots (-X_k)]\\
    & \EQ (-1)^k\E[X_1\cdots X_k]\;.
\end{align*}
The proposition then follows since $k$ is odd.
\end{proof}
The first non-trivial moment function $F_2$ has been studied in previous work and is relatively well understood. By definition, $F_2(x)$ is the expected value of the product of a pair of variables whose SDP solution vectors have inner product $x$. Observe that if $\br \sim N(0, I_n)$ is a standard normal vector and $\bv \in \mathbb{R}^n$ is a unit vector, then their inner product $\br \cdot \bv$ has the standard normal distribution $N(0, 1)$. Furthermore, if $\bv_i$ and $\bv_j$ are unit vectors with inner product $x$, then $\br \cdot \bv_i$ and $\br \cdot \bv_j$ are standard normal variables with correlation $x$. It follows that $F_2[f](x)$ is equal to the \emph{Gaussian noise stability} of $f$ at $x$, defined as $\E_{\bu, \bv}[f(\bu)f(\bv)]$ where $\bu, \bv$ is a pair of $d$-dimensional $x$-correlated standard Gaussians. \change{This quantity can also be expressed using the \emph{Gaussian noise operator} $\U$, where $\U_\eta f(\bu):= \E_{\br \sim N(0, I_d)}[f(\eta \bu + \sqrt{1 - \eta^2}\br)]$. }
\change{
\begin{lemma}
    Let $f: \mathbb{R}^d \to [-1, 1]$ and $x \in [0, 1]$. We have 
    \[
    F_2[f](x) \EQ \E_{\bu \sim N(0, I_d)}\left[f(\bu)\cdot\U_xf(\bu)\right] \EQ \E_{\bu \sim N(0, I_d)}\left[\left(\U_{\sqrt{x}}f(\bu)\right)^2\right]\;.
    \]
\end{lemma}
The second equality in the above lemma can be generalized to higher moments $F_k[f](x)$ with equal input biases.
\begin{lemma}\label{lemma:noisyf}
Let $f: \mathbb{R}^d \to [-1, 1], x \in [0, 1]$, $k \geq 2$. We have
\[
F_k[f](x) \EQ \E_{\bu \sim N(0, I_d)}\left[(\U_{\sqrt{x}} f(\bu))^k\right].
\]
\end{lemma}
\begin{proof}
    By definition, 
\[
F_k[f](x) \EQ \E_{\bu^{(1)}, \ldots, \bu^{(k)}}\left[f(\bu^{(1)})f(\bu^{(2)})\cdots f(\bu^{(k)})\right]\;,
\]
where $\bu^{(1)}, \ldots, \bu^{(k)}$ are $d$-dimensional Gaussian vectors such that each $\bu^{(i)} \sim N(0, I_d)$, and $\bu^{(i)}$ and $\bu^{(j)}$ are $x$-correlated for $1 \leq i < j \leq k$. One way to generate such a distribution is by having $k+1$ independent $d$-dimensional standard Gaussian vectors $\bv$, $\beps^{(1)}, \ldots, \beps^{(k)}$, and setting $\bu^{(i)} = \sqrt{\rho} \cdot \bv + \sqrt{1 - \rho} \cdot\beps^{(i)}$ for $1 \leq i \leq k$. It follows that
\begin{align*}
F_k[f](x) & \EQ \E_{\bu^{(1)}, \ldots, \bu^{(k)}}\left[f(\bu^{(1)})f(\bu^{(2)})\cdots f(\bu^{(k)})\right] \\
& \EQ \E_{\bv, \beps^{(1)}, \ldots, \beps^{(k)}}\left[\prod_{i = 1}^k f\left(\sqrt{x} \cdot \bv + \sqrt{1 - x} \cdot\beps^{(i)}\right)\right] \\
& \EQ \E_{\bv}\left[\prod_{i = 1}^k \E_{\beps^{(i)}}\left[f\left(\sqrt{x} \cdot \bv + \sqrt{1 - x} \cdot\beps^{(i)}\right)\right]\right] \\
& \EQ \E_\bv \left[\left(\U_{\sqrt{x}}f(\bv)\right)^k\right]\;. \qedhere
\end{align*}
\end{proof}
}

\strikeout{From here the standard analytic tools apply, and w}\change{W}e refer to~\cite{O14} for a more thorough treatment \change{of standard analytic tools for Gaussian noise stability.}. \strikeout{Here w}\change{W}e collect a few \change{more} facts \change{about $F_2$} that are crucial to the understanding of MAX NAE-$\{3\}$-SAT. These facts \strikeout{on $F_2$} can also be found in O'Donnell and Wu~\cite{OW08}. 

\begin{proposition}[Proposition 4.7 in \change{the full version} of~\cite{OW08}]\label{prop:F2power}
$F_2[f](x)$ is a power series in $x$ with nonnegative coefficients  and radius of convergence at least $1$. When $f$ is odd (resp. even), $F_2[f](x)$ has odd (resp. even) powers only. In particular, for odd $f$, $F_2[f](-x) = -F_2[f](x)$, $F_2[f]$ is convex on $[0, 1]$ and concave on $[-1, 0]$.
\end{proposition}

\begin{theorem}[Theorem 4.4 in \change{the full version of}~\cite{OW08}]\label{thm:gauss}
For every $f$, there exists a Gaussian rearrangement $f^*$ of $f$ which is a one-dimensional, odd, increasing function with the property that for every $x \in [0, 1]$, $F_2[f](x) \leq F_2[f^*](x)$ and $F_2[f](1) = F_2[f^*](1)$.
\end{theorem}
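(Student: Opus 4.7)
The plan is to construct $f^*$ explicitly by Gaussian inverse-CDF rearrangement, and then reduce the noise-stability comparison to Borell's two-set Gaussian isoperimetric inequality via a layer-cake argument. Assuming $f$ is odd (which we may by the remark that we only consider $Round_f$ with odd $f$, and by the preceding odd-part lemma), the law of $f(\bg)$ under $\bg \sim \mathcal N(0, I_d)$ is symmetric about $0$. Let $G(c) = \Pr[f(\bg) \le c]$ and let $\Phi$ denote the standard normal CDF; define $f^*(t) = G^{-1}(\Phi(t))$ using a generalized inverse. By construction $f^* : \mathbb R \to [-1,1]$ is one-dimensional and (weakly) increasing, and the symmetry of $f(\bg)$ about $0$ forces $f^*$ to be odd. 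Equidistribution of $f^*(h)$ (for $h \sim \mathcal N(0,1)$) and $f(\bg)$ immediately yields $F_2[f](1) = \E[f(\bg)^2] = \E[f^*(h)^2] = F_2[f^*](1)$.

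For the main inequality I would decompose via level sets. Set $A_c = \{y \in \mathbb R^d : f(y) \ge c\}$ and $A_c^* = \{t \in \mathbb R : f^*(t) \ge c\}$, and write $\gamma_d$ for standard Gaussian measure on $\mathbb R^d$. The layer-cake identity $f(y) = -1 + \int_{-1}^{1} \one_{A_c}(y)\, dc$, applied to both factors of $f(\bg_1)f(\bg_2)$, gives
\begin{equation*}
F_2[f](x) \;=\; 1 - 2\int_{-1}^{1} \gamma_d(A_c)\, dc \;+\; \iint_{[-1,1]^2} \Pr[\bg_1 \in A_s,\, \bg_2 \in A_t]\, ds\, dt,
\end{equation*}
where $(\bg_1, \bg_2)$ is an $x$-correlated pair of $d$-dimensional Gaussian vectors. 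The analogous identity holds for $f^*$ with $A_c^*$ in place of $A_c$ and a correlated one-dimensional Gaussian pair $(h_1, h_2)$ in place of $(\bg_1, \bg_2)$. Since $\gamma_d(A_c) = \gamma_1(A_c^*)$ for every $c$ by construction of $f^*$, the linear terms match, and the claim reduces to the pointwise bound
\begin{equation*}
\Pr[\bg_1 \in A_s,\, \bg_2 \in A_t] \;\le\; \Pr[h_1 \in A_s^*,\, h_2 \in A_t^*]
\end{equation*}
for every $s,t \in [-1,1]$ whenever $x \in [0,1]$.

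This pointwise bound is Borell's two-set Gaussian noise-stability inequality: among measurable sets of prescribed Gaussian measures $\alpha, \beta$, the joint probability $\Pr[\bg_1 \in A,\, \bg_2 \in B]$ with $x$-correlated $(\bg_1,\bg_2)$ is maximized, for $x \ge 0$, by taking $A$ and $B$ to be parallel half-spaces with those measures, and the maximum depends only on $\alpha, \beta, x$. In our setting the sets $A_s^*$ and $A_t^*$ are the one-dimensional upward half-lines $[\Phi^{-1}(1 - \gamma_d(A_s)), \infty)$ and $[\Phi^{-1}(1 - \gamma_d(A_t)), \infty)$, which have the correct Gaussian measures and realize Borell's maximum simultaneously for every pair of levels. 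Integrating the pointwise inequality over $(s,t) \in [-1,1]^2$ yields $F_2[f](x) \le F_2[f^*](x)$.

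The hard part will be invoking Borell's inequality in the right form. The crucial structural feature is that the entire family $\{A_c^*\}_c$ consists of \emph{nested} upward half-lines with a common direction, so the same one-dimensional rearrangement $f^*$ witnesses the Borell maximum \emph{uniformly in $c$}; this is what makes the pointwise bounds compose correctly under integration. I would either cite a standard reference for the two-set Borell inequality (e.g., Borell 1985) or supply a short self-contained proof via Gaussian symmetrization, after which the layer-cake integration is routine.
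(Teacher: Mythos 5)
Your proof is correct in substance, but note that the paper does not prove this statement at all: it is imported verbatim as Theorem 4.4 of~\cite{OW08}, so there is no in-paper argument to compare against. What you have written is essentially a reconstruction of the standard proof behind the O'Donnell--Wu rearrangement theorem: define $f^*$ by the inverse-CDF (quantile) coupling $f^*(t)=G^{-1}(\Phi(t))$, get $F_2[f](1)=\E[f(\bg)^2]=\E[f^*(h)^2]=F_2[f^*](1)$ from equidistribution, and reduce $F_2[f](x)\le F_2[f^*](x)$ for $x\in[0,1]$ via the layer-cake identity to the two-set positive-correlation statement $\Pr[\bg_1\in A_s,\,\bg_2\in A_t]\le\Pr[h_1\in A_s^*,\,h_2\in A_t^*]$, which is exactly Borell's theorem since the sets $A_c^*=\{f^*\ge c\}$ are parallel upward half-lines of the prescribed Gaussian measures. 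The Fubini step is harmless (everything is bounded), and the endpoint/atom issues in the generalized inverse only affect $f^*$ on a null set, which does not change $F_2$. One point worth making explicit: your standing assumption that $f$ is odd is not merely a convenience borrowed from the paper's remark, it is necessary --- for a non-odd $f$ (e.g.\ $f\equiv 1$, where $F_2[f]\equiv 1$ but any odd increasing $f^*$ with $\E[(f^*)^2]=1$ must be the sign function, whose stability is $\tfrac{2}{\pi}\arcsin x<1$) the statement as literally written for ``every $f$'' fails, so the theorem should be read with the oddness (or symmetric value-distribution) hypothesis under which the paper actually applies it. With that caveat stated, your argument is complete modulo citing Borell's two-set inequality, which is a legitimate external ingredient (it is also what underlies the proof in~\cite{OW08}).
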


The power of these moment functions is best seen when combined with Fourier analysis. We have the following proposition on the Fourier expansion of Not-all-equal predicates:
\begin{proposition}
The Fourier expansion of $\NAE_k: \{-1, 1\}^k \to \{0, 1\}$ is given by
\[
\NAE_k(x_1, \ldots, x_k) \EQ \frac{1}{2^{k-1}}\left(2^{k-1} - 1 - \sum_{i_1 < i_2}x_{i_1}x_{i_2} -\!\!\!\sum_{i_1 < i_2 < i_3 < i_4} \!\!x_{i_1}x_{i_2}x_{i_3}x_{i_4} - ... - \!\!\!\sum_{i_1 < i_2 < \ldots <i_{2\lfloor k / 2 \rfloor}}\prod_{j = 1}^{2\lfloor k / 2 \rfloor}x_{i_j}\right).
\]
\end{proposition}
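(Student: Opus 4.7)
The plan is to write $\NAE_k$ as $1$ minus the indicator that all inputs are equal, then expand that indicator using the standard trick $\mathbf{1}[x_i = 1] = \frac{1+x_i}{2}$ and $\mathbf{1}[x_i=-1]=\frac{1-x_i}{2}$. Specifically, I start from the identity
\[
\NAE_k(x_1,\ldots,x_k) \;=\; 1 \;-\; \mathbf{1}[x_1 = \cdots = x_k = 1] \;-\; \mathbf{1}[x_1 = \cdots = x_k = -1],
\]
and rewrite the two indicators as $\prod_{i=1}^k \frac{1+x_i}{2}$ and $\prod_{i=1}^k \frac{1-x_i}{2}$ respectively.

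Next I expand each product of linear factors as a sum over subsets,
\[
\prod_{i=1}^{k}(1+x_i) \;=\; \sum_{S \subseteq [k]} \prod_{i \in S} x_i, \qquad \prod_{i=1}^{k}(1-x_i) \;=\; \sum_{S \subseteq [k]} (-1)^{|S|} \prod_{i \in S} x_i.
\]
Adding these and dividing by $2^{k}$, the odd-size subsets cancel and the even-size subsets double, so
\[
\mathbf{1}[x_1 = \cdots = x_k] \;=\; \frac{1}{2^{k-1}} \sum_{\substack{S \subseteq [k] \\ |S|\text{ even}}} \prod_{i \in S} x_i.
\]
The $|S|=0$ term contributes $\frac{1}{2^{k-1}}$, which when subtracted from the leading $1$ in the identity for $\NAE_k$ produces the constant $\frac{2^{k-1}-1}{2^{k-1}}$. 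The remaining even-size subsets, grouped by cardinality $2, 4, \ldots, 2\lfloor k/2 \rfloor$, yield precisely the stated sums $\sum_{i_1<i_2} x_{i_1}x_{i_2}$, $\sum_{i_1<\cdots<i_4} x_{i_1}\cdots x_{i_4}$, and so on, each multiplied by $-\frac{1}{2^{k-1}}$. Factoring out $\frac{1}{2^{k-1}}$ gives the formula in the proposition.

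There is no real obstacle here: the whole argument is a one-line multilinear expansion, and the only thing to check carefully is that the upper limit of the last sum is $2\lfloor k/2 \rfloor$ (i.e.\ that we stop at the largest even integer $\le k$), which is automatic from the parity restriction $|S|$ even. Since no Fourier-analytic uniqueness theorem is needed (the displayed expression is already a multilinear polynomial, hence the Fourier expansion), this is essentially the entire proof.
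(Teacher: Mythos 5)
Your proof is correct: writing $\NAE_k = 1 - \mathbf{1}[\text{all }=1] - \mathbf{1}[\text{all }=-1]$, expanding the two products $\prod_i\frac{1\pm x_i}{2}$ multilinearly, and noting that odd-size subsets cancel while even-size subsets double gives exactly the stated expansion, with the last sum correctly capped at $2\lfloor k/2\rfloor$. The paper states this proposition without proof, and your argument is precisely the standard computation it implicitly relies on, so there is nothing to add.
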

Using linearity of expectation, we can then express the expected value of a clause using these moment functions. For MAX NAE-$\{2\}$-SAT (MAX CUT) and MAX NAE-$\{3\}$-SAT, the expression only involves~$F_2$, since~$F_1$ and~$F_3$ are zero. Higher moment functions start to play a role in clauses with 4 or more variables. Unfortunately, \strikeout{however,} even $F_4$ seems to be much more difficult to understand than $F_2$. For example, while it is true that $F_2(x) \geq 0$ for nonnegative input $x$, this is untrue for $F_4$: there exists a rounding function and a set of nonnegative inputs on which $F_4$ is negative! (See Appendix~\ref{sub:F4negative}). This can be partially explained by the fact that higher moment functions have many more inputs which are closely intertwined since they are inner products of a group of vectors. In the following section we will prove a simple inequality relating $F_2$ and $F_4$, which simple as it is, already gives some exciting new insights.

\section{Integrality gap \change{and Unique Games hardness} for MAX NAE-SAT}\label{S-int-gap}

A longstanding open question is whether MAX NAE-SAT admits a $\frac{7}{8}$-approximation. Such an approximation is known to exist if every clause has at most $4$ variables or at least $4$ variables. In this section, we answer this question negatively in the general case, assuming the Unique Games Conjecture. \change{In fact, we will see that the answer is negative even for monotone MAX NAE-SAT.}
In our proof, we will focus on the following set of pairwise biases: $(-\frac13, -\frac13, -\frac13)$ for $\NAE_3$ and $(\frac13, \frac13, \frac13, \frac13, \frac13, \frac13, 0, 0, 0, 0)$ for $\NAE_5$. We show that these biases ``fool'' the Basic SDP (\change{have} completeness 1) but are in fact very difficult to round.

\begin{lemma}\label{lem:phi_completeness}
Let $\Phi$ be a {\rm MAX NAE-$\{3,5\}$-SAT} instance \change{with an SDP solution} whose $3$-clauses all have pairwise biases $(b_{1,2}, b_{1,3},\allowbreak b_{2,3}) = (-\frac13, -\frac13, -\frac13)$ and $5$-clauses all have pairwise biases $(b_{1,2}, b_{1,3}, \ldots, b_{4,5}) = (\frac13, \frac13, \frac13, \frac13, \frac13, \frac13, 0, 0, 0, 0)$, then $\Phi$ has completeness $1$.
\end{lemma}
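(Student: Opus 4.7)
The plan is to exhibit a feasible Basic SDP solution for $\Phi$ whose objective equals the total clause weight, thereby forcing $c(\Phi) = 1$. Since each $\NAE_k$ is an even predicate, I can take $\bv_0$ orthogonal to every variable vector, which dispenses with the individual-bias constraints. The task therefore reduces to producing, for each clause $C$, a distribution $p_C$ supported on non-constant $\pm 1$-assignments to the variables of $C$ (so every assignment in its support satisfies $\NAE$) whose pairwise second moments match the prescribed biases.

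For each $3$-clause, take $p_C$ to be uniform on the six non-constant assignments in $\{-1,+1\}^3$. Each variable then has mean zero by symmetry, and for any two of the three variables exactly two of the six non-constant assignments agree on that pair, yielding pairwise bias $(2-4)/6 = -\tfrac13$, exactly as required.

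For each $5$-clause, order its variables as $(y_1,\ldots,y_5)$ so that the prescribed biases are $\tfrac13$ among $\{y_1,\ldots,y_4\}$ and $0$ between $y_5$ and each $y_i$ with $i \le 4$. Take $p_C$ to be the $10$-point distribution placing weight $\tfrac16$ on each of $(-1,-1,-1,-1,+1)$ and $(+1,+1,+1,+1,-1)$, and weight $\tfrac1{12}$ on each of the eight assignments in which $y_5$ matches the majority of $(y_1,\ldots,y_4)$ and exactly one of the first four coordinates dissents. All ten assignments are non-constant. A direct calculation gives $\E_{p_C}[y_i] = 0$ for every $i$; for distinct $i,j \in \{1,\ldots,4\}$ only the two large-weight assignments contribute nonzero $y_iy_j$, yielding $\E_{p_C}[y_iy_j] = \tfrac16 + \tfrac16 = \tfrac13$; and for each $i \le 4$, the $-\tfrac13$ contribution to $\E_{p_C}[y_iy_5]$ from the two large-weight assignments is cancelled by $+\tfrac13$ from the eight small-weight assignments, so $\E_{p_C}[y_iy_5] = 0$.

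The main obstacle is designing the $5$-clause distribution: any symmetric distribution supported on non-constant $4$-tuples has pairwise correlation at most $0$ (a direct weight-stratum computation shows the correlation equals $-p_2/3 \in [-\tfrac13, 0]$, where $p_2$ is the mass on the weight-two stratum), so achieving $+\tfrac13$ among $\{y_1,\ldots,y_4\}$ forces a nontrivial coupling of $y_5$ with the first four variables. The two ``near-constant'' strings supply the required positive correlation among $\{y_1,\ldots,y_4\}$, while the eight ``one-dissent'' strings contribute exactly the right amount to restore orthogonality between $y_5$ and the other variables. Combining the two constructions gives a feasible SDP solution with objective value equal to the total clause weight, proving $c(\Phi) = 1$.
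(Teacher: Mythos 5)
Your proposal is correct and follows essentially the same route as the paper: for each clause you exhibit a distribution supported on satisfying ($\NAE$) assignments whose pairwise second moments match the prescribed biases, which certifies completeness $1$. In fact your $5$-clause distribution is exactly the negation-symmetrization of the paper's (weights $\tfrac16,\tfrac16,\tfrac16,\tfrac16,\tfrac13$ on the five one-dissent strings), so the only difference is that you zero out the individual biases explicitly and take $\bv_0$ orthogonal, whereas the paper invokes its general remark that biases are irrelevant for even predicates; also note, as a wording point, that the eight light assignments do each contribute $\pm\tfrac1{12}$ to $\E[y_iy_j]$ for $i,j\le 4$ --- they simply cancel, leaving the stated value $\tfrac13$.
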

\begin{proof}
It suffices to show that for every clause, there exists a distribution of satisfying assignments that agrees with the global (pairwise) biases.
\begin{itemize}
    \item $3$-clauses.
    The uniform distribution on $\{(1 , 1, -1), (1, -1, 1), (-1, 1, 1)\}$ has the same pairwise biases.
    \item $5$-clauses.
    The following distribution on satisfying assignments has the same pairwise biases.
    \begin{center}
    \renewcommand{\arraystretch}{1.35}
    \begin{tabular}{|c|r|r|r|r|r|}
      \hline Probability  & $X_{1}$ & $X_{2}$ & $X_{3}$ & $X_{4}$ & $X_{5}$ \\ \hline 
       $\frac16$  & $-1$ & 1 & 1 & 1 & 1 \\ \hline
       $\frac16$  & 1 & $-1$ & 1 & 1 & 1 \\ \hline
       $\frac16$  & 1 & 1 & $-1$ & 1 & 1 \\ \hline
       $\frac16$  & 1 & 1 & 1 & $-1$ & 1 \\ \hline
       $\frac13$  & 1 & 1 & 1 & 1 & $-1$ \\ \hline
    \end{tabular}
    \end{center}
\end{itemize}
\end{proof}

The following subsections \change{are organized as follows}. In Section~\ref{sub-implicit}, we show an implicit integrality gap \change{for MAX NAE-SAT} by considering the limitations of Raghavendra's rounding scheme on these biases. \change{This is then extended to Section~\ref{sec:extend_to_monotone} for instances of monotone MAX NAE-SAT.} In Section\change{~\ref{sub-explicit}}, we exhibit an explicit gap instance which has these biases but has no good integral solution.

\subsection{Implicit integrality gap through rounding limitation}\label{sub-implicit}

\begin{lemma}\label{lemma:moment_orthogonal}
\change{Let $f: \mathbb{R}^d \to [-1, 1]$}, $\bv_1, \ldots, \bv_k \in \mathbb{R}^n$ be a group of unit vectors with $\bv_i \cdot \bv_j = b_{i,j}$. If $b_{i, k} = 0$ for every $1 \leq i < k$, then
\[
F_k[f](b_{1, 2}, b_{1,3}, \ldots, b_{k-1, k}) \EQ \change{F_{k-1}[f](b_{1, 2}, b_{1,3}, \ldots, b_{k-2, k-1})\cdot\E_{\bu \sim N(0, I_d)}[f(\bu)]}\;.
\]
\end{lemma}
\begin{proof}
Let $X_1, \ldots, X_k$ be $\pm 1$ random variables obtained by running $Round_f$ on $\bv_1, \ldots, \bv_k$. Since $b_{i, k} = 0$ for every $1 \leq i < k$, we have that \change{for $\br \sim N(0, I_n)$, $\br \cdot \bv_k$ is independent from $\br \cdot \bv_1, \ldots, \br \cdot \bv_{k-1}$. This implies that $X_k$ is independent from $X_1, \ldots, X_{k-1}$. It follows that
\begin{align*}
F_k[f](b_{1, 2}, b_{1,3}, \ldots, b_{k-1, k}) &\EQ \E[X_1, \ldots X_k] \\
& \EQ \E[X_1, \ldots X_{k-1}]\E[X_k]  \\
& \EQ F_{k-1}[f](b_{1, 2}, b_{1,3}, \ldots, b_{k-2, k-1})\cdot\E_{\bu \sim N(0, I_d)}[f(\bu)]\;. \qedhere
\end{align*}} 
\end{proof}
\change{If $f$ is an odd function, then $\E_{\bu \sim N(0, I_d)}[f(\bu)]=0$ and we will have $F_k[f](b_{1, 2}, b_{1,3}, \ldots, b_{k-1, k}) = 0$.}

\begin{lemma}\label{lemma:f4f2}
\change{Let $f: \mathbb{R}^d \to [-1, 1], x \in [0, 1]$. We have $F_4[f](x) \geq F_2[f](x)^2.$}
\end{lemma}
\begin{proof}
\change{By Lemma~\ref{lemma:noisyf} as well as Jensen's inequality, we have
\[
F_4[f](x) \EQ \E_{\bu \sim N(0, I_d)}\left[(\U_{\sqrt{x}} f(\bu))^4\right] \GE \left(\E_{\bu \sim N(0, I_d)}\left[(\U_{\sqrt{x}} f(\bu))^2\right]\right)^2 \EQ F_2[f](x)^2\;. \qedhere
\]
}
\end{proof}

\begin{lemma}\label{NAE35ratiolemma}
\change{Let $\Phi$ be a MAX NAE-$\{3,5\}$-SAT instance with completeness 1, and $\mathcal{A}$ a distribution of assignments to $\Phi$.} If the following conditions hold \change{for some $F_2, F_4 \in [0, 1]$,} 
\begin{enumerate}
\item The \change{expected fraction of $3$-clauses satisfied by $\mathcal{A}$ is at most} $\frac{3+3F_2}{4}$,
\item The \change{expected fraction of $5$-clauses satisfied by $\mathcal{A}$ is at most} $\frac{15 - 6F_2 - F_4}{16}$,
\item $F_4 \geq F_2^2$,
\end{enumerate}
then \change{by possibly re-weighting the clauses in $\Phi$ we can obtain another instance $\Phi'$ with completeness 1} such that the \change{expected fraction of clauses satisfied by $\mathcal{A}$} on $\change{\Phi'}$ is at most $\frac{3(\sqrt{21}-4)}{2} < 0.8739$.
\end{lemma}
\begin{proof}
\change{Let $p \in [0, 1]$ be some parameter to be chosen later. We construct $\Phi'$ by taking the distribution where we choose a random $3$-clause from $\Phi$ with probability $1-p$ and choose a random $5$-clause from $\Phi$ with probability $p$ (here we think of weights on the clauses as probability weights).}
\change{Then the expected fraction of clauses in $\change{\Phi'}$ satisfied by $\mathcal{A}$ is at most} 
\[
(1-p)\frac{3 + 3F_2}{4} + p\frac{15 - 6F_2 - F_4}{16} \EQ \frac{12 + 3p + (12-18p)F_2 - pF_4}{16} \;.
\]
Since $F_4 \geq F_2^2$, this is at most
\begin{align*}
\frac{12 + 3p + (12-18p)F_2 - pF^2_2}{16} &\EQ \frac{12 + 3p + \frac{(6-9p)^2}{p} - p\left(F_2 - \frac{(6-9p)}{p}\right)^2}{16} \\
&\LE \frac{84p + \frac{36}{p}}{16} - 6 \;.
\end{align*}
Taking the derivative with respect to $p$, this is minimized when $\frac{1}{16}(84-\frac{36}{p^2}) = 0$, which happens when $p = \frac{3}{\sqrt{21}}$. When $p = \frac{3}{\sqrt{21}}$,
\[
\frac{84p + \frac{36}{p}}{16} - 6 \EQ \frac{12\sqrt{21} + 12\sqrt{21}}{16} - 6 \EQ \frac{3(\sqrt{21} - 4)}{2} \;,
\]
and this completes the proof.
\end{proof}

\begin{theorem}\label{theorem:non_monotone_35_sat}
A $0.8739$-approximation for \emph{MAX NAE-$\{3,5\}$-SAT} (clauses of size $3$ and $5$) is Unique Games hard, even when the instance has completeness $1-\eps$, for $\eps > 0$ arbitrarily small.
\end{theorem}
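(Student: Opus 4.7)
The overall plan is to bound the value attainable by any RPR$^2$ rounding on SDP solutions of completeness~$1$ supported on the two bias profiles identified in the preceding lemma, and then invoke Raghavendra's theorem to convert this bound into UG-hardness. The preceding lemma already gives completeness~$1$ for any weighted mixture of such 3- and 5-clauses; let $p \in [0,1]$ denote the total weight of 5-clauses in the instance. By the odd-part lemma we may assume the rounding function is odd, so $F_1 \equiv F_3 \equiv 0$. Expanding $\NAE_3$ and $\NAE_5$ in the Fourier basis and using the oddness of $F_2$, each 3-clause contributes $\tfrac{3+3F_2(1/3)}{4}$ in expectation; for each 5-clause, four of the five $F_4$-terms vanish by the preceding proposition (since the fifth vector is orthogonal to the other four), leaving the expected contribution $\tfrac{15-6F_2(1/3)-F_4(1/3,\ldots,1/3)}{16}$.

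Set $t := F_2(1/3)$. The preceding lemma gives $F_4(1/3,\ldots,1/3) \ge t^2$, and $t \in [0,1/3]$: nonnegativity is immediate from the odd power-series expansion of $F_2$ with nonnegative coefficients, while the upper bound $F_2(1/3) \le 1/3$ follows from the Hermite expansion $F_2(x) = \sum_{k \text{ odd}} \hat f(k)^2 x^k$ together with $(1/3)^k \le 1/3$ for $k \ge 1$ and $\sum_k \hat f(k)^2 \le \|f\|_2^2 \le 1$. Hence the expected rounded value is at most
\[
h_p(t) \;=\; (1-p)\,\tfrac{3+3t}{4} \;+\; p\,\tfrac{15-6t-t^2}{16},\qquad t \in [0,1/3].
\]

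It remains to evaluate $\min_{p\in[0,1]}\max_{t\in[0,1/3]}h_p(t)$. For each $p$, $h_p(\cdot)$ is a concave quadratic with unconstrained maximizer $t^\star(p) = 6/p - 9$, which lies in $[0,1/3]$ precisely when $p \in [9/14,\,2/3]$. On this interval $\max_t h_p(t) = \tfrac{3(7p^2 - 8p + 3)}{4p} = \tfrac{3}{4}\left(7p - 8 + 3/p\right)$; setting its $p$-derivative to zero gives $p = \sqrt{21}/7 \in [9/14,2/3]$, and plugging back in yields $\tfrac{3(\sqrt{21}-4)}{2} \approx 0.8739$. Outside this interval, $\max_t h_p(t)$ equals $1 - 7p/36$ (for $p<9/14$, max at $t=1/3$) or $3/4 + 3p/16$ (for $p>2/3$, max at $t=0$); both are $\ge 7/8$ on their respective ranges, so the global min-over-$p$ max-over-$t$ value is indeed $\tfrac{3(\sqrt{21}-4)}{2}$, attained at $p = \sqrt{21}/7$.

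The main subtlety is bridging from ``no rounding function beats $\tfrac{3(\sqrt{21}-4)}{2}$ on this SDP profile'' to an actual integrality gap of the Basic SDP, as required to invoke Raghavendra. The cleanest route is Raghavendra's minimax characterization: the optimal polynomial-time algorithm is captured by multidimensional RPR$^2$ rounding from a distribution of rounding functions, and both the bound $F_2(1/3) \le 1/3$ and the bound $F_4(1/3,\ldots,1/3) \ge F_2(1/3)^2$ extend verbatim to multidimensional Gaussian $f$ (they are moment inequalities on Gaussian space), so the same $h_p(t)$ controls the full family. Alternatively one can sidestep this point by exhibiting an explicit integrality-gap instance realizing the prescribed biases and applying only Raghavendra's first inequality $U_\Lambda(c) \le S_\Lambda(c+\eta) + \eta$; this is precisely what the subsequent subsection does.
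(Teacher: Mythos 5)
Your proposal is correct and follows essentially the same route as the paper: the same two bias profiles with completeness~$1$, the same Fourier/moment expansions giving $\tfrac{3+3F_2(1/3)}{4}$ and $\tfrac{15-6F_2(1/3)-F_4}{16}$, and the same key inequality $F_4 \ge F_2^2$, differing only in that you carry out the constrained $\min_p\max_{t\in[0,1/3]}$ explicitly (the paper just completes the square, which makes the bound $F_2(1/3)\le 1/3$ unnecessary, and evaluates at $p=3/\sqrt{21}=\sqrt{21}/7$). Your closing remark on bridging to UG-hardness (via the minimax characterization of Raghavendra's rounding, or via the explicit gap instance of the next subsection) matches how the paper itself positions this "implicit" argument alongside its explicit construction.
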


\begin{proof}
\change{Let $\Phi$ be an instance that satisfies the conditions in Lemma~\ref{lem:phi_completeness}. Note that such an instance always exists, since we can take one 3-clause and one 5-clause on disjoint variables. We first analyze how an arbitrary $Round_f$ scheme performs on the SDP solution described in Lemma~\ref{lem:phi_completeness}.}  Recall that we have the Fourier expansions:
\begin{align*}
    & \NAE_3(x_1, x_2, x_3) \EQ \frac{3 - x_1x_2 - x_1x_3 - x_2x_3}{4}\;, \\
    & \NAE_5(x_1, x_2, x_3, x_4, x_5) \EQ \frac{15 - \sum_{1 \leq i < j \leq 5}x_ix_j - \sum_{1 \leq i < j < k < l \leq 5}x_ix_jx_kx_l}{16}\;.
\end{align*}
\change{Let $f$ be any odd rounding function, and let $F_2(x)=F_2[f](x)$ and $F_4(x)=F_4[f](x)$.} \strikeout{Using this, we make the following observations:} \change{We now have:}
\begin{enumerate}
    \item If we have a $3$-clause $\NAE_3(x_1,x_2,x_3)$ where $b_{1,2} = b_{1,3} = b_{2,3} = -\frac{1}{3}$ then 
    \[
    \E[\NAE_3(X_1,X_2,X_3)] \EQ \change{\frac{3-3F_2(-\frac{1}{3})}{4} \EQ } \frac{3+3F_2(\frac{1}{3})}{4}\;.
    \]
    \change{In the second equality we used the fact that $f$ is odd and Proposition~\ref{prop:F2power}.}
    \item If we have a $5$-clause $\NAE_5(x_1,x_2,x_3,x_4,x_5)$ where $b_{1,2} = b_{1,3} = b_{1,4} = b_{2,3} = b_{2,4} = b_{3,4} = \frac{1}{3}$ and $b_{1,5} = b_{1,5} = b_{2,5} = b_{3,5} = b_{4,5} = 0$ then 
    \[
    \E[\NAE_5(X_1,X_2,X_3,X_4,X_5)] \EQ \frac{15-6F_2(\frac{1}{3}) - F_4(\frac{1}{3},\frac{1}{3},\frac{1}{3},\frac{1}{3},\frac{1}{3},\frac{1}{3})}{16} \;.
    \]
    \change{Here, all moments that contain the 5th variable evaluate to 0 due to Lemma~\ref{lemma:moment_orthogonal}.}
\end{enumerate}
\change{We can now apply Lemma~\ref{NAE35ratiolemma}, with $F_2 = F_2(1/3), F_4 = F_4(1/3)$, $\mathcal{A}$ being the distribution of assignments induced by $Round_f$, to obtain another instance $\Phi'$ such that the expected satisfied fraction by $Round_f$ on $\change{\Phi'}$ is at most $\frac{3(\sqrt{21}-4)}{2} < 0.8739$.} \strikeout{The theorem now follows from the following lemma.}
\change{Since $f$ is an arbitrary odd function, the theorem now follows from Lemma~\ref{lem:rpr2_ughardness} and Lemma~\ref{lemma:odd_sufficient}.}
\end{proof}

\subsection{Extension to Monotone MAX NAE-SAT}\label{sec:extend_to_monotone}

In this subsection, we \strikeout{briefly remark that } \change{extend} the previous upper bound argument \strikeout{holds also for} \change{to} monotone MAX NAE-SAT, i.e., the version where we do not allow negated variables in any of the clauses. For monotone MAX NAE-SAT, the only difference is that now the rounding function $f$ does not need to be odd. 

\change{
We will use the same distribution of configurations as before. It is easy to see that the completeness stays the same. Given any rounding function $f$, for the performance of $Round_f$ on the 3-clause, we now have
\begin{equation}\label{eq:monotone_3}
\frac{3 - 3F_2[f]\left(-\frac{1}{3}\right)}{4}. 
\end{equation}
For the 5-clause, we now have
\begin{equation}\label{eq:monotone_5}
\frac{15 - 6F_2[f]\left(\frac{1}{3}\right) - 4F_2[f](0) - F_4[f]\left(\frac{1}{3}\right) - 4F_4[f]\left(\frac{1}{3}, \frac{1}{3}, \frac{1}{3}, 0, 0, 0\right)}{16}
\end{equation}
We will show that both (\ref{eq:monotone_3}) and (\ref{eq:monotone_5}) increase if we replace $f$ with its odd part $f^{\odd}$, defined by $\bv \mapsto \frac{f(\bv) - f(-\bv)}{2}$. This is spelled out in the following lemma. 
\begin{lemma}\label{lem:monotono_nae_diff}   
    Let $f: \mathbb{R}^d \to [-1, 1]$ and $\rho \in [0, 1]$. Then the following inequalities hold:
    \begin{itemize}
        \item[(a)] $\frac{3 - 3F_2[f]\left(-\rho\right)}{4} \leq \frac{3 - 3F_2[f^\odd]\left(-\rho\right)}{4}$. 
        \item[(b)] $\frac{15 - 6F_2[f]\left(\rho\right) - 4F_2[f](0) - F_4[f](\rho) - 4F_4[f](\rho, \rho, \rho, 0, 0, 0)}{16} \leq \frac{15 - 6F_2[f^\odd]\left(\rho\right) - F_4[f^\odd](\rho)}{16}$.
    \end{itemize}
\end{lemma}
The proof of Lemma~\ref{lem:monotono_nae_diff} is deferred to Appendix~\ref{app:proof_monotone}. Using Lemma~\ref{lem:monotono_nae_diff}, we can easily derive the same  bound for monotone MAX NAE-SAT.
}

\begin{theorem}
Assuming UGC, it is NP-hard to approximate monotone MAX NAE-SAT within an approximation ratio of $\frac{3\sqrt{21} - 12}{2} \approx 0.8739$.
\end{theorem}
\begin{proof}
\change{
Fix an arbitrary rounding function $f$. Using the same construction as in Theorem~\ref{theorem:non_monotone_35_sat}, we know by Lemma~\ref{lem:monotono_nae_diff} that $Round_f$ satisfies at most $\frac{3 - 3F_2[f^\odd]\left(-1/3\right)}{4} = \frac{3 + 3F_2[f^\odd]\left(1/3\right)}{4} $ of the 3-clauses and at most $\frac{15 - 6F_2[f^\odd]\left(1/3\right) - F_4[f^\odd](1/3)}{16}$ of the 5-clauses. Now we can apply Lemma~\ref{NAE35ratiolemma} with $F_2[f^\odd]\left(1/3\right)$ and $F_4[f^\odd]\left(1/3\right)$, and it immediately follows that $Round_f$ achieves at most $\frac{3(\sqrt{21}-4)}{2} < 0.8739$ on this distribution. Since $f$ is arbitrary, we obtain the desired claim by applying Lemma~\ref{lem:rpr2_ughardness}.
}
\end{proof}

\subsection{Explicit integrality gap}\label{sub-explicit}

In this subsection, we explicitly construct a family of gap instances \change{for MAX NAE-SAT} whose integrality ratio tends to $\frac{3(\sqrt{21}-4)}{2} < 0.8739$. \change{Note that in the proof of Lemma~\ref{lem:rpr2_ughardness} an $\epsilon$-net of the unit ball is used as the variable set, but here we will provide a much simpler construction. } Let $\{\be_i\mid i \in [n]\}$ be the canonical basis of $\mathbb{R}^n$. Consider the subset of $\mathbb{R}^n$ containing vectors that have exactly three nonzero coordinates, each being $1/\sqrt{3}$ or $-1/\sqrt{3}$, namely
\[ \change{V_n} \EQ \left\{ \left. \frac{b_1\be_i + b_2\be_j + b_3\be_k}{\sqrt{3}} \;\right|\; b_1, b_2, b_3 \in \{-1, 1\} , 1\le i<j<k\le n \right\}\;.\]

To every $\bv \in \change{V_n} $, we assign a Boolean variable $x_\bv \in \{-1, 1\}$ such that $x_{-\bv} = -x_\bv$. Our goal is to define a \strikeout{CSP} \change{MAX NAE-SAT} instance $\Phi$ with variables $x_\bv$ such that assigning $\bv$ to $x_\bv$ is an SDP solution with perfect completeness, while any integral solution has value at most $\frac{3(\sqrt{21} - 4)}{2}$ as $n$ tends to infinity.

\begin{definition} We define $\mathcal{C}_3$ to be the set of $3$-clauses of the form $\NAE(x_{{\bv}_1},x_{{\bv}_2},x_{{\bv}_3})$ where 
\begin{enumerate}
    \item ${\bv_1} = \frac{1}{\sqrt{3}}({s_1}\be_{i_1} - {s_2}\be_{i_2} + {s_4}\be_{i_4})$
    \item ${\bv_2} = \frac{1}{\sqrt{3}}({s_2}\be_{i_2} - {s_3}\be_{i_3} + {s_5}\be_{i_5})$
    \item ${\bv_3} = \frac{1}{\sqrt{3}}({s_3}\be_{i_3} - {s_1}\be_{i_1} + {s_6}\be_{i_6})$
\end{enumerate}
for some distinct indices $i_1,\ldots,i_6 \in [n]$ and signs $s_1,\ldots,s_6 \in \{-1,1\}$.
\end{definition}
\begin{definition}
We define $\mathcal{C}_5$ to be the set of $5$-clauses of the form $\NAE(x_{{\bv}_1},x_{{\bv}_2},x_{{\bv}_3},x_{{\bv}_4},x_{{\bv}_5})$ where 
\begin{enumerate}
    \item For all $j \in \{1,2,3,4\}$, ${\bv_j} = \frac{1}{\sqrt{3}}({s_1}\be_{i_1} + {s_{2j}}\be_{i_{2j}} + {s_{2j+1}}\be_{i_{2j+1}})$
    \item ${\bv_5} = \frac{1}{\sqrt{3}}({s_{10}}\be_{i_{10}} + {s_{11}}\be_{i_{11}} + {s_{12}}\be_{i_{12}})$
\end{enumerate}
for some distinct indices $i_1,\ldots,i_{12} \in [n]$ and signs $s_1,\ldots,s_{12} \in \{-1,1\}$.
\end{definition}
\begin{remark}
These sets of clauses are designed so that the pairwise biases for the $3$-clauses are $(-\frac{1}{3},-\frac{1}{3},-\frac{1}{3})$ and the pairwise biases for the $5$-clauses are $(\frac{1}{3},\frac{1}{3},\frac{1}{3},\frac{1}{3},\frac{1}{3},\frac{1}{3},0,0,0,0)$.
\end{remark}
\begin{definition}
Let \change{$\Phi_n$} be the {\rm MAX NAE-$\{3,5\}$-SAT}-instance with variable set $\{X_\bv \mid \bv \in \change{V_n} \}$ and clause set $\mathcal{C}_3 \cup \mathcal{C}_5$, where every clause in $\mathcal{C}_3$ has weight $\frac{1-\frac{3}{\sqrt{21}}}{|\mathcal{C}_3|}$ and every clause in $\mathcal{C}_5$ has weight $\frac{3}{\sqrt{21}|\mathcal{C}_5|}$.
\end{definition}
\begin{theorem}
For any integral solution to \change{$\Phi_n$}, the weight of the satisfied clauses is at most $\frac{3(\sqrt{21}-4)}{2} + O(\frac{1}{n})$.
\end{theorem}

\change{Our strategy for proving this theorem is to define analogs of moment functions for assignments to this particular instance, so that the observations from the previous sections still apply and the same argument goes through. A detailed proof can be found in Appendix~\ref{appendix:explicit_gap}. We remark that by dropping the requirement that $x_{-\bv} = -x_\bv$ the same construction can be used to give a gap instance for monotone MAX NAE-SAT as well.}

\section{Calculus of Variations for MAX CUT and MAX NAE-\texorpdfstring{$\{3\}$}{3}-SAT}\label{S-COV}

\change{In this section we uses calculus of variations to show that the optimal RPR$^2$ rounding function for MAX NAE-$\{3\}$-SAT satisfies a \emph{Fredholm integral equation of the second kind}.}
To help illustrate our techniques, we also perform this analysis for the simpler case of MAX-CUT, giving an alternative description for the optimal rounding function found by O'Donnell and Wu~\cite{OW08}.

\subsection{Locally Optimal Distributions}
Recall from Section~\ref{sub-moment} that the expected value of an $\NAE_2$ clause with pairwise bias $b_{1,2}$ is $\frac{1 - F_2(b_{1,2})}{2}$ and that of an $\NAE_3$ clause with pairwise biases $b_{1,2}, b_{1,3}, b_{2,3}$ is \strikeout{given by} $\frac{3 - F_2(b_{1,2}) - F_2(b_{1,3}) - F_2(b_{2,3})}{4}$. In this subsection, we use the fact that $F_2$ is odd and convex on $[0, 1]$ to derive the hardest distribution for MAX NAE-$\{3\}$-SAT. The idea is to maximize the sum of $F_2$ while preserving the sum of the pairwise biases. A similar analysis was done for MAX CUT by O'Donnell and Wu~\cite{OW08}. We include this analysis here for completeness and to help illustrate the techniques.
\subsubsection{Facts about odd functions which are convex for \texorpdfstring{$x \geq 0$}{x >= 0}}
For our analysis, we need some facts about odd functions which are convex for $x \geq 0$.
\begin{proposition}\label{averagingprop}
If $F_2(x)$ is convex for $x \geq 0$, $0 \leq x_1 \leq x_2 \leq x_3 \leq x_4$, and $x_2 + x_3 = x_1 + x_4$ then $F_2(x_2) + F_2(x_3) \leq F_2(x_1) + F_2(x_4)$.
\end{proposition}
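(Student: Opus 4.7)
The plan is to recognize this as a standard ``spread increases convex sum'' principle: among pairs with a fixed sum, the sum of a convex function is minimized when the pair is as close to equal as possible. Here, $(x_2, x_3)$ is a ``less spread'' pair than $(x_1, x_4)$ with the same total, so the inequality should follow directly from the definition of convexity applied to each of $x_2$ and $x_3$ separately.

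Concretely, I would first dispose of the degenerate case $x_1 = x_4$, in which the ordering forces $x_1 = x_2 = x_3 = x_4$ and the inequality is an equality. Otherwise $x_1 < x_4$, and since $x_1 \le x_2 \le x_4$ I can define
\[
\lambda \;=\; \frac{x_4 - x_2}{x_4 - x_1} \;\in\; [0,1],
\]
so that $x_2 = \lambda x_1 + (1-\lambda) x_4$. Using $x_2 + x_3 = x_1 + x_4$, a direct substitution gives $x_3 = (1-\lambda) x_1 + \lambda x_4$, i.e.\ $x_3$ is the ``mirror'' convex combination of the endpoints. Because all four points lie in $[0, \infty)$, where $F_2$ is assumed convex, I can invoke the convexity inequality twice:
\[
F_2(x_2) \le \lambda F_2(x_1) + (1-\lambda) F_2(x_4), \qquad F_2(x_3) \le (1-\lambda) F_2(x_1) + \lambda F_2(x_4).
\]
Summing these two bounds causes the $\lambda$'s to combine into $1$, yielding $F_2(x_2) + F_2(x_3) \le F_2(x_1) + F_2(x_4)$, as desired.

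I don't expect any real obstacle here: the only subtlety is making sure the convex combinations are legitimate, which requires $\lambda \in [0,1]$ (guaranteed by $x_1 \le x_2 \le x_4$) and that all four arguments lie in the region of convexity (guaranteed by $x_1 \ge 0$). The hypothesis that $F_2$ is odd is not needed for this particular proposition; the odd symmetry will become relevant only in companion statements that handle negative arguments by reflecting across~$0$.
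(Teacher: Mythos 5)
Your proof is correct and follows essentially the same route as the paper: write $x_2$ and $x_3$ as mirrored convex combinations of $x_1$ and $x_4$ (the paper introduces coefficients $a,b$ and deduces $b=1-a$ from the sum condition, which is exactly your $\lambda$ and $1-\lambda$), apply convexity twice, and add. The degenerate case and the observation that oddness is not needed here match the paper's treatment as well.
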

\begin{proof}
If $x_1 = x_2 = x_3 = x_4$ then the result is trivial so we can assume that $x_1 < x_4$. Writing $x_2 = a{x_1} + (1-a){x_4}$ and $x_3 = b{x_1} + (1-b){x_4}$ where $a,b \in [0,1]$, we have that 
\[
x_2 + x_3 \EQ (a+b)x_1 + (2 - a - b)x_4 \EQ x_1 + x_4 + (1 - a - b)(x_4 - x_1) \;.
\] 
Since $x_2 + x_3 = x_1 + x_4$, we must have that $b = 1-a$. Since $F_2$ is convex we have that 
\begin{enumerate}
\item $F_2(x_2) \leq aF_2(x_1) + (1-a)F_2(x_4)$.
\item $F_{\change{2}}(x_{\change{3}}) \leq bF_2(x_1) + (1-b)F_2(x_4) = (1-a)F_2(x_1) + aF_2(x_4)$.
\end{enumerate}
Adding these inequalities together we have that $F_2(x_2) + F_2(x_3) \leq F_2(x_1) + F_2(x_4)$, as needed.
\end{proof}
Our main tool is the following lemma:
\begin{lemma}\label{keymodifyinglemma}
If $F_2(x)$ is an odd function which is convex for $x \geq 0$ then 
\begin{enumerate}
\item For all $x,x'$ such that $-x \geq |x'|$, $2F_2\left(\frac{x + x'}{2}\right) \geq  F_2(x) + F_2(x')$.
\item For all $x,x'$ such that $x \geq |x'|$ and all $y \geq 0$, $F_2(x+y) + F_2(x' - y) \geq F_2(x) + F_2(x')$.
\end{enumerate}
\end{lemma}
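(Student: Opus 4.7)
The plan is to prove both inequalities by the same device: parametrize one side of the inequality by a single variable, show the resulting function is monotone on the relevant half-line, and conclude from the boundary value. The only fact needed beyond smoothness is that $F_2'$ is \emph{even} and \emph{non-decreasing on $[0,\infty)$}, which follows immediately from $F_2$ being odd and convex on $[0,\infty)$. Assuming smoothness is harmless: the general case follows by approximation, or by replacing derivatives with appropriate divided differences, since $F_2$ convex on $[0,\infty)$ is automatically locally Lipschitz there.

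For Part 2, I would set $g(y) := F_2(x+y) + F_2(x'-y) - F_2(x) - F_2(x')$, so $g(0)=0$, and compute $g'(y) = F_2'(x+y) - F_2'(x'-y)$. Since $F_2'$ is even and non-decreasing in absolute value, it suffices to verify $|x+y| \geq |x'-y|$ for all $y \geq 0$ under the hypothesis $x \geq |x'|$. The inequality $x + y \geq x' - y$ is immediate from $x \geq x'$ and $y \geq 0$, and the inequality $x + y \geq -(x' - y) = y - x'$ reduces to $x + x' \geq 0$, which holds since $x \geq |x'| \geq -x'$. Hence $g' \geq 0$ on $[0,\infty)$ and $g(y) \geq 0$, giving the claim.

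For Part 1, I would introduce the reparametrization $a := \frac{x+x'}{2}$ and $c := \frac{x'-x}{2}$, so that the inequality becomes $2F_2(a) \geq F_2(a-c) + F_2(a+c)$. A quick case check on the sign of $x'$ shows the hypothesis $-x \geq |x'|$ forces $a \leq 0$, while $c$ can be taken $\geq 0$ (by symmetry in $x,x'$). Then define $h(c) := F_2(a-c) + F_2(a+c) - 2F_2(a)$ with $h(0)=0$ and $h'(c) = F_2'(a+c) - F_2'(a-c)$. Since $a \leq 0$ and $c \geq 0$, one has $|a-c| = c + |a| \geq |a+c|$ regardless of the sign of $a+c$, so the monotonicity of $F_2'$ in $|\cdot|$ gives $h'(c) \leq 0$. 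Thus $h(c) \leq 0$ for $c \geq 0$, which is exactly the desired inequality.

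The only place I expect any subtlety is tracking the cases in the absolute-value comparisons. Both reduce to two quick sub-cases depending on the sign of $a+c$ (respectively $x'-y$), and neither needs any additional analytic input beyond the evenness and monotonicity of $F_2'$. There is no obstacle in the differentiable setting; the non-differentiable setting is handled routinely because $F_2$ is convex on $[0,\infty)$, hence has left and right derivatives everywhere there, and the same sign comparisons go through with one-sided derivatives.
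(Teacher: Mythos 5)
Your proof is correct, but it takes a different route from the paper. The paper argues entirely with finite convexity (chord) inequalities: both parts are handled by a case analysis (two cases for the first statement, three for the second) in which each case applies the majorization-type inequality $F_2(x_2)+F_2(x_3)\le F_2(x_1)+F_2(x_4)$ for $0\le x_1\le x_2\le x_3\le x_4$ with $x_2+x_3=x_1+x_4$, after reflecting negative arguments to $[0,\infty)$ via oddness. You instead differentiate in the transfer parameter and use the single observation that $F_2'$ is even and non-decreasing in $|t|$, reducing both parts to the absolute-value comparisons $|x+y|\ge|x'-y|$ and $|a-c|\ge|a+c|$; your checks of these comparisons under the hypotheses are correct (note that in Part 1 you do not even need the ``by symmetry'' remark: the hypothesis $-x\ge|x'|$ itself forces $x\le -|x'|\le x'$, hence $c=\tfrac{x'-x}{2}\ge 0$, which is just as well since the hypothesis is not symmetric in $x,x'$). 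The trade-off is that your argument is shorter and unifies the two statements, but it requires a regularity caveat (one-sided derivatives or approximation) that the paper's purely inequality-based proof avoids; in the paper's application $F_2$ is a convergent power series with nonnegative coefficients, so the smoothness assumption is indeed harmless there.
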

\begin{proof}
If $-x \geq |x'|$ then there are two cases to consider:
\begin{enumerate}
\item If $x' \leq 0$ then since $F_2(x)$ is convex for $x \geq 0$, $2F_2\left(-\frac{x + x'}{2}\right) \leq  F_2(-x) + F_2(-x')$. Using the fact that $F_2(-x) = -F_2(x)$ and rearranging, $2F_2\left(\frac{x + x'}{2}\right) \geq  F_2(x) + F_2(x')$.
\item If $x' \geq 0$ then since $F_2(x)$ is convex for $x \geq 0$, $F_2(-(x+x')) + F_2(x') \leq F_2(-x) + F_2(0)$ and $2F_2\left(-\frac{x + x'}{2}\right) \leq  F_2(-(x+x')) + F_2(0)$. Adding these two inequalities together, using the fact that $F_2(-x) = -F_2(x)$, and rearranging, $2F_2\left(\frac{x + x'}{2}\right) \geq  F_2(x) + F_2(x')$.
\end{enumerate}
Similarly, if $|x'| \leq x $ then there are three cases to consider:
\begin{enumerate}
\item If $x' \leq 0$ then since $F_2(x)$ is convex for $x \geq 0$, $F_2(x) + F_2(y-x') \leq F_2(-x') + F_2(x+y)$. Rearranging and using the fact that $F_2(-x) = -F_2(x)$, $F_2(x+y) + F_2(x' - y) \geq F_2(x) + F_2(x')$.
\item If $0 \leq x' \leq y$ then since $F_2(x)$ is convex for $x \geq 0$, $F_2(x) + F_2(x') \leq F_2(0) + F_2(x + x')$ and $F_2(x + x') + F_2(y-x') \leq F_2(x+y) + F_2(0)$. Adding these two inequalities together, using the fact that $F_2(-x) = -F_2(x)$, and rearranging, $F_2(x+y) + F_2(x' - y) \geq F_2(x) + F_2(x')$.
\item If $x' \geq y$ then since $F_2(x)$ is convex for $x \geq 0$, $F_2(x) + F_2(x') \leq F_2(x + y) + F_2(x'-y)$.
\end{enumerate}
\end{proof}

\subsubsection{The Hardest Distributions for MAX CUT/MAX-NAE-\{2\}-SAT}
\begin{theorem}[\cite{OW08}]\label{thm:maxcuthard}
For {\rm MAX CUT/MAX-NAE-\{2\}-SAT} with any completeness \change{and for any SDP rounding scheme}, the \change{distributions which minimize the approximation ratio} are supported on $\{-\rho,1\}$ for some $\rho \in [0,1]$.
\end{theorem}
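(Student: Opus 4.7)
The plan is as follows. Each $\NAE_2$ clause with pairwise bias $b$ contributes $\frac{1-b}{2}$ to the SDP value (completeness) and $\frac{1-F_2(b)}{2}$ to the expected value after $\mathrm{RPR}^2$ rounding. Hence, at a given completeness, the hardest distribution $\mu$ over biases $b\in[-1,1]$ is the one that, subject to a fixed mean $\bar{b}=\E_\mu[b]$, maximizes $\E_\mu[F_2(b)]$. I will show that any such $\mu$ can be turned into a distribution supported on $\{-\rho,1\}$ for some $\rho\in[0,1]$ via three mean-preserving redistributions, each of which does not decrease $\E_\mu[F_2(b)]$.

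For the first two steps I exploit the fact (Proposition~2.8) that $F_2$ is odd and convex on $[0,1]$, hence concave on $[-1,0]$. In Step~1, I collapse the positive part of $\mu$ to $\{0,1\}$: writing $p^+=\mu([0,1])$ with conditional mean $m^+$, the convexity bound $F_2(b)\le b\,F_2(1)$ on $[0,1]$ (using $F_2(0)=0$) yields $\int_{[0,1]} F_2\,d\mu \le p^+ m^+ F_2(1)$, so replacing the positive part by $p^+(1-m^+)\delta_0 + p^+m^+\delta_1$ preserves the mean and saturates the bound. In Step~2, I collapse the negative part onto its conditional mean $m^-$: by concavity of $F_2$ on $[-1,0]$ and Jensen's inequality, $\int_{[-1,0)}F_2\,d\mu \le p^- F_2(m^-)$. (Both reductions can equivalently be carried out by iterating the local moves of Lemma~4.5, with convergence supplied by the potential~$\phi$ of Remark~4.6.) After these two steps $\mu$ is supported on $\{m^-,0,1\}$; set $\rho:=-m^-\in[0,1]$.

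Step~3 eliminates the atom at $0$. I split each unit of mass there into $\frac{1}{1+\rho}$ at $-\rho$ and $\frac{\rho}{1+\rho}$ at $1$; this is mean-preserving since $\frac{1}{1+\rho}(-\rho)+\frac{\rho}{1+\rho}(1)=0$, and the change in $\E_\mu[F_2(b)]$ equals
\[
\frac{1}{1+\rho}F_2(-\rho)+\frac{\rho}{1+\rho}F_2(1)-F_2(0)\;=\;\frac{\rho F_2(1)-F_2(\rho)}{1+\rho}\;\ge\;0,
\]
again by the convexity inequality $F_2(\rho)\le\rho F_2(1)$. Edge cases ($p^+=0$, $p^-=0$, or $\bar b\in\{\pm1\}$) collapse to a single point mass or to support $\{0,1\}$, both of which are already of the required form once one allows mass zero at an atom (or $\rho=0$). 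I expect Step~3 to be the main obstacle: the pairwise moves of Lemma~4.5 cannot on their own export mass from an interior atom at $0$ to the endpoint $1$, because statement~(2) applied to $(x,x')=(1,0)$ forces $y=0$ (from $1+y\le 1$ and $y\ge 0$), while statement~(1) applied to $(0,-\rho)$ merely halves~$\rho$. A global rather than local appeal to the convexity inequality $F_2(\rho)\le\rho F_2(1)$ is therefore essential to reach a genuine two-point distribution.
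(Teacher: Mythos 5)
Your proof is correct, but it follows a different route than the paper's. You argue globally: at fixed completeness (fixed mean bias), you apply the convexity bound $F_2(b)\le b\,F_2(1)$ to push all positive mass to $\{0,1\}$, Jensen's inequality with concavity on $[-1,0]$ to collapse the negative mass to its conditional mean, and one more convexity inequality to split the atom at $0$ into $\{-\rho,1\}$ --- three explicit mean-preserving redistributions, each valid simultaneously for every odd rounding function. The paper instead works locally: it repeatedly applies the two pairwise exchange moves of Lemma~\ref{optimizingNAE2SATdist} (averaging two biases when the more negative dominates in magnitude; pushing one bias to $1$ otherwise), both consequences of Lemma~\ref{keymodifyinglemma}, and then invokes the potential function of Remark~\ref{makingprogressremark} to argue convergence to a fixed point, whose support is then characterized as $\{-\rho,1\}$. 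Your approach buys a self-contained, closed-form replacement distribution with no limiting argument; the paper's local machinery buys reusability --- the same exchange lemma drives the harder analysis of the hardest distributions for MAX NAE-$\{3\}$-SAT (Theorem~\ref{NAE3SATdisttheorem}). One small correction to your closing remark: the local moves do suffice to eliminate the atom at $0$, not by exporting mass to $1$ as you considered, but by the averaging move of statement~(1) applied to the pair $(-\rho,0)$ (allowed since $|0|\le\rho$), which in the limit absorbs the zero atom into the negative mass and yields a two-point distribution $\{-\rho',1\}$ with a different $\rho'$; so your global Step~3 is a clean shortcut around the convergence argument rather than something strictly essential.
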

\begin{proof}
To prove this theorem, we show that any distribution which is not of this form can be improved. For this, we use a sequence of lemmas.
\begin{lemma}\label{optimizingNAE2SATdist}
For {\rm MAX CUT/MAX-NAE-\{2\}-SAT}, 
\begin{enumerate}
\item If there are two constraints with pairwise biases $x$ and~$x'$ where $|x'| \leq -x \leq 1$ then replacing $x$ and~$x'$ with two copies of $\frac{x + x'}{2}$ does not affect the SDP value and can only decrease the performance of the rounding scheme.
\item If there are two constraints with pairwise biases $x$ and~$x'$ where $0 < |x'| \leq x \leq 1$ then replacing $x$ and~$x'$ with $x + x' - 1$ and $1$ does not affect the SDP value and can only decrease the performance of the rounding scheme.
\end{enumerate}
\end{lemma}
\begin{proof}
The SDP value \change{of $\frac{1}{2}\cdot\left(\frac{1-x}{2} + \frac{1-x'}{2}\right)$}is not affected by these adjustments. To show that the performance of the rounding scheme \change{of $\frac{1}{2}\cdot\left(\frac{1-F_2(x)}{2} + \frac{1-F_2(x')}{2}\right)$} can only decrease, we make the following observations:
\begin{enumerate}
\item The first statement follows directly from the first statement of Lemma \ref{keymodifyinglemma}.
\item The second statement follows from applying the second statement of Lemma \ref{keymodifyinglemma} to $x$ and~$x'$ with $y = 1 - x$. \qedhere
\end{enumerate}
\end{proof}
It is not hard to verify that the only distributions which cannot be improved using this lemma are distributions where the pairwise biases are supported on $\{-\rho,1\}$ for some $\rho \in [0,1]$. 
\end{proof}

\subsubsection{Possible Pairwise Biases}\label{sub-possible}
In order to analyze the hardest distribution of triples of pairwise biases for MAX-NAE-\{3\}-SAT, we need to consider which triples of pairwise biases are possible for a set of three $\pm{1}$ variables (regardless of any constraints).
\begin{lemma}[e.g., \cite{FG95}]
For distributions on three $\pm{1}$ variables $x_1,x_2,x_3$, the polytope of possible pairwise biases is given by the following inequalities:
\begin{enumerate}
\item $b_{1,2} + b_{1,3} + b_{2,3} \geq -1$
\item $b_{1,2} \geq b_{1,3} + b_{2,3} - 1$
\item $b_{1,3} \geq b_{1,2} + b_{2,3} - 1$
\item $b_{2,3} \geq b_{1,2} + b_{1,3} - 1$
\end{enumerate}
\end{lemma}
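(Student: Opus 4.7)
This is a standard ``moment polytope'' characterization: one direction is necessity (any distribution satisfies the four inequalities), the other is sufficiency (any point in the polytope is realized by some distribution). I will handle both by identifying each facet-defining inequality with the nonnegativity of the probability of a specific event.

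\textbf{Step 1 (necessity).} For any $\pm 1$ triple $(s_1,s_2,s_3)$ one checks directly that
\[
1 + s_1 s_2 + s_1 s_3 + s_2 s_3 \;=\; 4\cdot \one[s_1=s_2=s_3],
\]
and, e.g.\ for inequality 2,
\[
1 + s_1 s_2 - s_1 s_3 - s_2 s_3 \;=\; 4\cdot \one[s_1=s_2=-s_3].
\]
Taking expectations gives $1 + b_{1,2}+b_{1,3}+b_{2,3} = 4\Pr[x_1=x_2=x_3] \ge 0$ and $1 + b_{1,2}-b_{1,3}-b_{2,3} = 4\Pr[x_1=x_2=-x_3] \ge 0$, establishing inequalities 1 and 2; inequalities 3 and 4 follow by permuting the roles of the three variables.

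\textbf{Step 2 (sufficiency).} Given any triple $(b_{1,2},b_{1,3},b_{2,3})$ satisfying the four inequalities, I will exhibit an explicit distribution realizing it. Because only pairwise biases are specified, I restrict to the symmetric (mean-zero) distributions: for each $i\in\{0,1,2,3\}$ let $2P_i$ denote the combined probability of the two antipodal triples in the $i$-th equivalence class, where $P_0$ corresponds to $\{(1,1,1),(-1,-1,-1)\}$ and $P_i$ ($i=1,2,3$) corresponds to the pair whose odd coordinate is the $i$-th. A direct computation yields the linear system
\[
2P_0+2P_1+2P_2+2P_3 = 1,\qquad b_{i,j} = 2P_0 - 2P_k + 2P_\ell - 2P_m
\]
(with the signs matching whether each class agrees or disagrees on coordinates $i,j$), which inverts to
\[
P_0 = \tfrac{1+b_{1,2}+b_{1,3}+b_{2,3}}{8},\quad
P_k = \tfrac{1+b_{i,j}-b_{i,k}-b_{j,k}}{8}
\]
for $\{i,j,k\}=\{1,2,3\}$. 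The four given inequalities are precisely the statements $P_0,P_1,P_2,P_3 \ge 0$, so this defines a valid probability distribution that realizes the prescribed pairwise biases.

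\textbf{Main obstacle.} There is no real obstacle: the numerics are completely routine once the symmetric parametrization is chosen. The only point that requires any care is verifying the sign pattern in the linear system above so that each of the four ``closure'' inequalities matches exactly one of the $P_i \ge 0$ constraints; this is immediate from writing out which of the eight $\pm 1$ triples contributes to each $b_{i,j}$ with which sign.
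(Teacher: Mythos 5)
Your proof is correct and takes essentially the same route as the paper: your symmetric-class weights $2P_0,2P_1,2P_2,2P_3$ are exactly the paper's mixture weights $c_{+++},c_{+--},c_{-+-},c_{--+}$ for the sufficiency direction, and your indicator identities (e.g.\ $1+s_1s_2+s_1s_3+s_2s_3=4\cdot\one[s_1=s_2=s_3]$) are just an explicit restatement of the paper's observation that the inequalities hold at every integral assignment and hence, by linearity, for every distribution. No changes needed.
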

\begin{remark}
Observe that the inequalities $-1 \leq b_{1,2} \leq 1$, $-1 \leq b_{1,3} \leq 1$, and $-1 \leq b_{2,3} \leq 1$ are implied by these inequalities. To see this, note that adding the first two inequalities and dividing by $2$ gives the inequality $b_{1,2} \geq -1$. Adding the second and third inequality and dividing by $2$ gives the inequality $b_{2,3} \leq 1$. By symmetry, the other inequalities can be derived in a similar way.
\end{remark}
\begin{proof}
The possible pairwise biases for integral assignments are as follows.
\begin{enumerate}
\item $(+1,+1,+1)$
\item $(+1,-1,-1)$
\item $(-1,+1,-1)$
\item $(-1,-1,+1)$
\end{enumerate}
We want to describe the convex hull of these points. Note that these points satisfy all of these equalities, so any point in their convex hull satisfies these inequalities as well.

We now need to show that if these inqualities are satisfied then there is a distribution which matches these pairwise biases. To do this, assume that the inequalities are satisfied and take the distribution where 
\begin{enumerate}
\item $b_{1,2} = b_{1,3} = b_{2,3} = 1$ with probability $c_{+++} = \frac{b_{1,2} + b_{1,3} + b_{2,3} + 1}{4}$
\item $b_{1,2} = 1$ and $b_{1,3} = b_{2,3} = -1$ with probability $c_{+--} = \frac{b_{1,2} - b_{1,3} - b_{2,3} + 1}{4}$
\item $b_{1,3} = 1$ and $b_{1,2} = b_{2,3} = -1$ with probability $c_{-+-} = \frac{b_{1,3} - b_{1,2} - b_{2,3} + 1}{4}$
\item $b_{2,3} = 1$ and $b_{1,2} = b_{1,3} = -1$ with probability $c_{--+} = \frac{b_{2,3} - b_{1,2} - b_{1,3} + 1}{4}$
\end{enumerate}
Since the inequalities are satisfied, we have that $c_{+++}, c_{+--}, c_{-+-}, c_{--+} \in [0,1]$. Now observe that 
\begin{enumerate}
\item $c_{+++} + c_{+--} + c_{-+-} + c_{--+} = 1$
\item $\E[x_{1}x_{2}] = c_{+++} + c_{+--} - c_{-+-} - c_{--+} = b_{1,2}$
\item $\E[x_{1}x_{3}] = c_{+++} - c_{+--} + c_{-+-} - c_{--+} = b_{1,3}$
\item $\E[x_{2}x_{3}] = c_{+++} - c_{+--} - c_{-+-} + c_{--+} = b_{2,3}$
\end{enumerate}
so this distribution matches the given pairwise biases.
\end{proof}
\subsubsection{The Hardest Distributions for MAX-NAE-\{3\}-SAT}
\begin{theorem}\label{NAE3SATdisttheorem}
For {\rm MAX-NAE-\{3\}-SAT} with any completeness, the hardest distributions are of the following forms.
\begin{enumerate}
\item The distribution of the pairwise biases for the constraints is supported on \\
$\{(-\rho,-\rho,-\rho),(-\rho,-\rho,1)\}$ for some $\rho \in [0,\frac{1}{3}]$.
\item The distribution of the pairwise biases for the constraints is supported on \\
$\{(-\frac{1}{3},-\frac{1}{3},-\frac{1}{3}),(-\rho,-\rho,1)\}$ for some $\rho \in [\frac{1}{3},1]$.
\item The distribution of the pairwise biases for the constraints is supported on \\
$\{(-\rho,-\rho,1),(1,1,1)\}$ for some $\rho \in [\change{0},1]$.
\end{enumerate}
\end{theorem}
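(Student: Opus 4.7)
The plan is to follow the template of the proof of Theorem~\ref{thm:maxcuthard}: show that any distribution $\mu$ over triples of pairwise biases (satisfying the polytope constraints from Section~\ref{sub-possible}) that is not of one of the three specified forms can be ``improved'' in the sense that its SDP value is preserved (the SDP contribution of a clause is linear in its biases) while $\sum F_2$ over the triples weakly increases, weakly decreasing the rounding performance $(3 - F_2(b_{12}) - F_2(b_{13}) - F_2(b_{23}))/4$. The modifications are applications of Lemma~\ref{keymodifyinglemma}, both within a single triple and across different triples. The process terminates by the potential-function argument of Remark~\ref{makingprogressremark}, and the three specified forms are exactly the fixed points.

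First I would carry out within-triple symmetrization: show that each triple $(b_{12}, b_{13}, b_{23})$ in the support can be brought to the form $(x, x, y)$ with $x \leq y$. A direct check of the four polytope inequalities confirms that averaging two biases of a triple (replacing $b_{12}, b_{13}$ with two copies of $(b_{12}+b_{13})/2$) is always a legal polytope move, since the sum constraint and the constraint $b_{23} \geq b_{12} + b_{13} - 1$ are unchanged while the other two inequalities reduce to $b_{23} \leq 1$. By Lemma~\ref{keymodifyinglemma}(1), this averaging does not decrease $\sum F_2$ provided the larger of the two averaged biases has absolute value at most the negative of the smaller. A short case analysis on the signs of $(b_{12}, b_{13}, b_{23})$, possibly combined with a preliminary shift (Lemma~\ref{keymodifyinglemma}(2)) when the two smallest biases include a positive one large enough to block averaging, reduces every triple to the $(x, x, y)$ form.

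Next I would perform the across-triple analysis. Since both the objective and the constraints are linear in the measure $\mu$, an optimal $\mu$ is an extreme point of the constraint set and is therefore supported on at most two distinct triples; combined with the previous step these are $(x_1, x_1, y_1)$ and $(x_2, x_2, y_2)$ with $x_i \leq y_i$. Applying Lemma~\ref{keymodifyinglemma} across the two triples -- averaging matched $x$-values using~$(1)$, with compensating shifts on the $y$-values to preserve each triple's sum, or shifting a $y$-value toward $+1$ using~$(2)$ -- we argue that at a fixed point exactly one of the following must hold: both triples share a common value $-\rho$ in their $x$-coordinates (Form~1); one triple hits the polytope extreme $(-\tfrac{1}{3}, -\tfrac{1}{3}, -\tfrac{1}{3})$ saturating $b_{12} + b_{13} + b_{23} \geq -1$ (Form~2); or one triple equals $(1,1,1)$ (Form~3). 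The three cases correspond to which Lemma~\ref{keymodifyinglemma} condition or polytope inequality is the binding constraint at the fixed point.

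The main obstacle will be the within-triple step, because modifications must respect the polytope and the constraint $b_{12} \geq b_{13} + b_{23} - 1$ can be active in mixed-sign triples, blocking naive applications of Lemma~\ref{keymodifyinglemma}. Handling these cases may require chaining several moves (for instance, a preliminary shift along the polytope boundary followed by an averaging) or coordinating a within-triple move with a compensating across-triple exchange. A secondary subtlety in the across-triple step is verifying that each modified triple remains in the polytope; this reduces to a bookkeeping check once the right move has been identified, but needs to be carried out separately for each of the three cases in the final enumeration.
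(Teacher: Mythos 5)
Your overall template is the right one (the paper's proof is indeed a sequence of Lemma~\ref{keymodifyinglemma} moves that preserve the linear SDP value, weakly increase $\sum F_2$, and terminate via Remark~\ref{makingprogressremark}), but two of your key steps do not hold up as stated. First, the within-triple reduction to ``$(x,x,y)$ with $x\le y$'' is too weak: the paper's single-constraint lemma reduces every triple all the way to $(x,x,x)$ with $x\in[-\tfrac13,0]$ or to $(x',x',1)$, using not only averaging but also moves that push the largest bias up to $1$ (replace $(x,x',x')$ with $(1,\,x'+\tfrac{x-1}{2},\,x'+\tfrac{x-1}{2})$, and the $x+x'-x''$ versus $x''$ move). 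Without that stronger normal form, your across-triple analysis cannot land on the theorem's forms. Second, and more seriously, your claim that ``an optimal $\mu$ is an extreme point of the constraint set and is therefore supported on at most two distinct triples'' is not justified. The hardness of a distribution is the \emph{best} rounding performance over all $f$, i.e.\ a maximum of linear functionals of $\mu$, hence convex in $\mu$; minimizing a convex function over the polytope of distributions with fixed completeness need not occur at an extreme point. If instead you fix a single $f$ to make the objective linear, the resulting support-$2$ distribution depends on $f$ and you lose the statement the theorem actually needs, namely a domination that holds uniformly over all rounding functions (which is exactly what the $F_2$-convexity moves provide).

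Even granting a two-atom support, your ``fixed point'' enumeration is asserted rather than derived, and it omits the two moves that actually pin down the three forms: (i) the cross-type lemma mixing an all-equal atom $(x,x,x)$ with an atom $(x',x',1)$ via the weighted average $\tfrac{3x+2x'}{5}$, with the case split at $-\tfrac13$ (this is what forces either a common value $-\rho\in[-\tfrac13,0]$ in both atoms, giving form~1, or the atom $(-\tfrac13,-\tfrac13,-\tfrac13)$ with $x'\le -\tfrac13$, giving form~2); and (ii) the observation that atoms $(x,x,x)$ and $(1,1,1)$ with weights $\tfrac23$ and $\tfrac13$ are equivalent, for both SDP value and rounding performance, to $(x,x,1)$, so the two can be merged until one is exhausted (which is what produces form~3). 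Also note your ``compensating shifts on the $y$-values to preserve each triple's sum'' is not needed and risks breaking the $\sum F_2$ monotonicity: only the \emph{total} bias sum must be preserved, and the paper's across-triple moves (plain averaging of the two $x$'s, or the $(x+x'-1,1)$ move) preserve it without touching the third coordinates. To repair the proposal you would need to drop the extreme-point shortcut, strengthen the within-triple normal form, and supply the two cross-type lemmas above.
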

\begin{remark}
The reason why we cannot have triples of pairwise biases $(-\rho,-\rho,-\rho)$ where $\rho > \frac{1}{3}$ is because for any triple $(b_{1,2},b_{1,3},b_{2,3})$ of pairwise biases, $b_{1,2} + b_{1,3} + b_{2,3} \geq -1$.
\end{remark}
\begin{proof}
To prove this theorem, we show that any distribution which is not of this form can be improved. For this, we use the following lemmas.
\begin{lemma}
For a single {\rm MAX NAE-\{3\}-SAT} constraint, 
\begin{enumerate}
\item If the pairwise biases are $x,x',x''$ where $|x'| \leq -x \leq 1$ then replacing $x$ and~$x'$ with two copies of $\frac{x + x'}{2}$ does not affect the SDP value and can only decrease the performance of the rounding scheme.
\item If the pairwise biases are $x,x',x''$ where $\max{\{|x'|,|x''|\}} \leq x \leq 1$, and $x'' < x'$ then replacing $x$ and~$x'$ with $x + x' - x''$ and $x''$ does not affect the SDP value and can only decrease the performance of the rounding scheme. Observe that $x'' \geq x + x' - 1$ so $x + x' - x'' \leq 1$.
\item If the pairwise biases are $x,x',x''$ where $x'' = x'$ and $|x'| < x \leq 1$ then replacing $x,x',x''$ with $1$, $x' + \frac{x-1}{2}$, and $x'' + \frac{x-1}{2}$ does not affect the SDP value and can only decrease the performance of the rounding scheme. Observe that since $x > |x'|$, $x' + \frac{x-1}{2} = \frac{x + x'}{2} + \frac{x'-1}{2} > \frac{x'-1}{2} \geq -1$.
\end{enumerate}
\end{lemma}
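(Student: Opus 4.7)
For any NAE-3 clause, the Fourier expansion $\NAE_3(x_1,x_2,x_3)=\tfrac{3-x_1x_2-x_1x_3-x_2x_3}{4}$ makes both the basic-SDP contribution and the expected contribution under an RPR$^2$ rounding purely a function of the pairwise biases: the SDP value of the clause is $\tfrac{3-b_{1,2}-b_{1,3}-b_{2,3}}{4}$ and the rounded value is $\tfrac{3-F_2(b_{1,2})-F_2(b_{1,3})-F_2(b_{2,3})}{4}$. Hence a modification of the bias triple preserves the SDP value iff it preserves the sum $b_{1,2}+b_{1,3}+b_{2,3}$, and ``can only decrease'' the rounding performance iff it does not decrease $F_2(b_{1,2})+F_2(b_{1,3})+F_2(b_{2,3})$. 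In each of the three moves the sum is preserved by direct arithmetic (e.g.\ for Move~2, $(x+x'-x'')+x''+x''=x+x'+x''$, and for Move~3, $1+(x'+\tfrac{x-1}{2})+(x''+\tfrac{x-1}{2})=x+x'+x''$), so the only real task is to verify the $F_2$-sum inequality.

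\textbf{Moves~1 and~2.} These reduce immediately to Lemma~\ref{keymodifyinglemma}. For Move~1, the third bias $x''$ is unchanged, so the needed inequality is $2F_2(\tfrac{x+x'}{2})\ge F_2(x)+F_2(x')$, which is exactly the first statement of Lemma~\ref{keymodifyinglemma} under the hypothesis $|x'|\le -x$. For Move~2, I apply the second statement of Lemma~\ref{keymodifyinglemma} to the pair $(x,x')$ with increment $y=x'-x''\ge 0$: since $x\ge |x'|$, it gives $F_2(x+x'-x'')+F_2(x'')\ge F_2(x)+F_2(x')$, which is the desired inequality after the common $F_2(x'')$ cancels. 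The polytope inequality $x''\ge x+x'-1$ from Section~\ref{sub-possible} simultaneously ensures $x+x'-x''\le 1$, so the new bias stays in range.

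\textbf{Move~3.} This is the subtle case: because $x''=x'$ (not strictly less) and $x>0$, neither Lemma~\ref{keymodifyinglemma}(1) nor Lemma~\ref{keymodifyinglemma}(2) applies to the triple directly. My plan is to invoke Lemma~\ref{keymodifyinglemma}(2) \emph{twice}, each time with the increment $t:=\tfrac{1-x}{2}\ge 0$. First, apply it to the first two biases $(x,x')$: since $x\ge|x'|$, it gives $F_2(x+t)+F_2(x'-t)\ge F_2(x)+F_2(x')$, transforming the triple to $(x+t,\,x'-t,\,x')$. Second, apply it to the new first and third biases $(x+t,x')$: since $x+t\ge x\ge|x'|$, it gives $F_2(x+2t)+F_2(x'-t)\ge F_2(x+t)+F_2(x')$, transforming the triple to $(x+2t,\,x'-t,\,x'-t)$. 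Using $x+2t=1$ and $x'+\tfrac{x-1}{2}=x'-t$, this is exactly the replacement asserted by the lemma, and adding the two inequalities collapses to the target $F_2(1)+2F_2(x'-t)\ge F_2(x)+2F_2(x')$.

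\textbf{Main obstacle.} The conceptual heavy lifting has already been carried out in Lemma~\ref{keymodifyinglemma}; the present lemma mostly packages its consequences. The genuine work is finding the non-obvious two-step decomposition for Move~3 and checking that every intermediate bias lies in $[-1,1]$ so that the convexity of $F_2$ on $[0,1]$ used inside Lemma~\ref{keymodifyinglemma} is applicable. This bookkeeping is routine: $x+t=\tfrac{x+1}{2}\in[0,1]$, $x+2t=1$, and $|x'-t|\le |x'|+t\le x+t\le 1$, so every point that appears stays in $[-1,1]$; for Move~2 the same check is recorded in the statement's own remark, and for Move~1 one has $(x+x')/2\in[x,0]\subseteq[-1,0]$.
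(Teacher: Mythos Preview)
Your proof is correct and follows essentially the same approach as the paper: both reduce Move~1 to Lemma~\ref{keymodifyinglemma}(1), Move~2 to Lemma~\ref{keymodifyinglemma}(2) with $y=x'-x''$, and Move~3 to two applications of Lemma~\ref{keymodifyinglemma}(2) with increment $\tfrac{1-x}{2}$, first on $(x,x')$ and then on $(\tfrac{x+1}{2},x'')$. Your additional range-checking for the intermediate biases is a welcome elaboration but does not change the argument.
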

\change{\begin{proof}
The SDP value of $\frac{3-x-x'-x''}{4}$ is not affected by these adjustments. To show that the performance of the rounding scheme of $\frac{3-F_2(x) -F_2(x')-F_2(x'')}{4}$ can only decrease, we make the following observations:
\begin{enumerate}
\item The first statement follows directly from the first statement of Lemma \ref{keymodifyinglemma}.
\item The second statement follows from applying the second statement of Lemma \ref{keymodifyinglemma} to $x$ and~$x'$ with $y = x' - x''$.
\item The third statement follows by applying the second statement of Lemma \ref{keymodifyinglemma} twice. First we apply it to $x$ and $x'$ with $y = \frac{1-x}{2}$. We then apply it a second time with $\frac{x+1}{2}$ and $x''$ with $y = \frac{1-x}{2}$.
\qedhere
\end{enumerate}
\end{proof}}

\change{\begin{remark}\label{makingprogressremark}
Technically, we need to make sure that we are making progress when we apply Lemma \ref{keymodifyinglemma}. To ensure this, we apply Lemma \ref{keymodifyinglemma} so that 
\begin{enumerate}
\item All $x$ stay in the range $[-1,1]$.
\item We never change the value of any $x$ to $-1$ unless it was $-1$ already.
\item Whenever we apply the second statement (or more precisely, make a sequence of up to $3$ applications of the second statement), we change at least one $x$ which is less than $1$ to $1$.
\end{enumerate}
Under these conditions, our applications of Lemma \ref{keymodifyinglemma} reduce $\sum_{x}{h(x)}$ for the following potential function~$h(x)$:
\[
h(x) \EQ \frac{x^2}{8} - 1_{x = 1} + 1_{x = -1} \;.
\]
This implies that we make progress and can converge to a distribution where Lemma \ref{keymodifyinglemma} cannot be applied.
\end{remark}}
\begin{remark}
Whenever case $2$ holds and $x + x' - x'' < 1$, we immediately apply case $3$ on $(x + x' - x'',x'',x'')$. This sequence changes $x$ to $1$, satisfying the conditions of Remark \ref{makingprogressremark}.
\end{remark}
\begin{corollary}
The hardest distributions have triples of pairwise biases of the form $(x,x,x)$ where $x \in [-\frac{1}{3},0]$ or of the form $(x',x',1)$ where $x' \in [-1,1]$.
\end{corollary}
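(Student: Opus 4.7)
The plan is to classify the stopping configurations of the three operations from the preceding lemma (where ``stopping'' means no progress-making move in the sense of Remark~\ref{makingprogressremark} applies), and then to handle the one stopping configuration not directly addressed by the lemma via a distribution-level splitting. Convergence of the iterative reduction is guaranteed by the potential $\phi(x)=\tfrac{x^2}{8}-\mathbf{1}_{x=1}+\mathbf{1}_{x=-1}$ of Remark~\ref{makingprogressremark}, which strictly decreases on each progress-making move and is bounded below, so it suffices to enumerate stopping points.

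Sort the entries $b_1\leq b_2\leq b_3$ and split on $b_3$. If $b_3\leq 0$, all entries are non-positive and $|b_1|\geq|b_2|\geq|b_3|$, so Op~1 applied to any pair of distinct entries strictly decreases $\phi$; hence we terminate only at $b_1=b_2=b_3=x$, and the polytope constraint $b_{1,2}+b_{1,3}+b_{2,3}\geq -1$ then gives $x\in[-\tfrac13,0]$, the first listed form. If $b_3=1$, the polytope inequalities $b_{1,2}\geq b_{1,3}+b_{2,3}-1$ and $b_{1,3}\geq b_{1,2}+b_{2,3}-1$ force the two remaining entries to coincide, yielding $(x',x',1)$ with $x'\in[-1,1]$. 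If $0<b_3<1$, then either $b_3\geq |b_1|$, in which case Op~2 applies when $b_1\neq b_2$ and Op~3 applies when $b_1=b_2<b_3$, with the boundary $b_3=|b_1|$ and $b_1=b_2$ handled by Op~1 on the pair $(b_1,b_3)$ of equal magnitudes; or $b_3<|b_1|$, which forces $b_1<0$ and $|b_3|\leq|b_1|$, so Op~1 on $(b_1,b_3)$ applies. In every sub-subcase $\phi$ strictly decreases, and the only stopping point in this regime is $b_1=b_2=b_3=x\in(0,1)$.

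The main obstacle is exactly this leftover per-triple stopping point $(x,x,x)$ with $x\in(0,1)$, which no single-constraint operation touches. I dispose of it at the distribution level by replacing the probability mass at $(x,x,x)$ with $(1-x)$ units of mass at $(0,0,0)$ and $x$ units of mass at $(1,1,1)$. The SDP value of a clause with biases $(y,y,y)$ is $\tfrac{3-3y}{4}$, which is linear in $y$, so the SDP contribution is preserved; the rounding value is $\tfrac{3-3F_2(y)}{4}$, and the convexity of $F_2$ on $[0,1]$ together with $F_2(0)=0$ yields $F_2(x)\leq xF_2(1)$, so the split distribution has rounding value no larger than the original. Since $(0,0,0)$ is of the first listed form (with $x=0$) and $(1,1,1)$ is of the second (with $x'=1$), this reduces the troublesome case to canonical forms and completes the proof of the corollary.
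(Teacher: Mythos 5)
Your proposal is correct and follows the same skeleton the paper intends for this corollary: enumerate the configurations on which none of the three per-constraint operations makes progress (in the sense of Remark~\ref{makingprogressremark}), and check that everything else can be improved. Two genuine points of difference are worth noting. First, your observation that a triple with one bias equal to $1$ automatically has its other two biases equal, via the polytope inequalities of Section~\ref{sub-possible}, is a shortcut the paper does not use; the paper instead reaches the $(x',x',1)$ shape as the output of operations 2 and 3. Second, and more substantively, you correctly identify that $(x,x,x)$ with $x\in(0,1)$ is a genuine stopping point of the lemma as literally stated (operation 3 requires the strict inequality $|x'|<x$), a case the paper's corollary passes over without comment; your repair --- replacing the mass at $(x,x,x)$ by $(1-x)$ mass at $(0,0,0)$ and $x$ mass at $(1,1,1)$, using linearity of the SDP value in the biases together with $F_2(x)\le xF_2(1)$ (which follows from $F_2$ being an odd power series with nonnegative coefficients, equivalently convexity on $[0,1]$ with $F_2(0)=0$) --- is valid for every odd rounding function and lands directly in the allowed forms. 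An equivalent repair closer to the paper's machinery is to note that the proof of operation 3 (two applications of the second statement of Lemma~\ref{keymodifyinglemma} with $y=\frac{1-x}{2}$) goes through verbatim when $|x'|=x<1$, converting $(x,x,x)$ into $(\frac{3x-1}{2},\frac{3x-1}{2},1)$. One small imprecision in your write-up: the potential $\phi$ does \emph{not} strictly decrease under operation 2 alone (spreading two entries apart increases the quadratic part and need not set anything to $1$); progress there comes only from the chained application of operation 3 prescribed after the lemma, which your appeal to Remark~\ref{makingprogressremark} implicitly uses --- this should be said explicitly, but it does not affect the correctness of your case analysis.
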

In order to prove Theorem \ref{NAE3SATdisttheorem}, we need to show that there can only be one value of $x$ and only one value of~$x'$ which is not equal to $1$. For this, we use the following lemma.
\begin{lemma}
For {\rm MAX-NAE-\{3\}-SAT}, 
\begin{enumerate}
\item If there are two constraints with pairwise biases $(x,x,x)$ and $(x',x',x')$ where $x,x' \in [-\frac{1}{3},0]$ then replacing these triples of pairwise biases with two copies of $(\frac{x + x'}{2},\frac{x + x'}{2},\frac{x + x'}{2})$ does not affect the SDP value and can only decrease the performance of the rounding scheme.
\item If there are two constraints with pairwise biases $(x,x,1)$ and $(x',x',1)$ where $|x'| \leq -x \leq 1$ then replacing these triples of pairwise biases with two copies of $(\frac{x + x'}{2},\frac{x + x'}{2},1)$ does not affect the SDP value and can only decrease the performance of the rounding scheme.
\item If there are two constraints with pairwise biases $(x,x,1)$ and $(x',x',1)$ where $|x'| \change{\ \le\ } x \leq 1$ then replacing these triples of pairwise biases with $(1,1,1)$ and $(x + x' - 1,x + x' - 1,1)$ does not affect the SDP value and can only decrease the performance of the rounding scheme.\qedhere
\end{enumerate}
\end{lemma}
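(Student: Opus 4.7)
The plan is to follow the same template as the preceding lemma. Two observations drive everything. First, the SDP value of an $\NAE_3$ clause is $\frac{3 - b_{1,2} - b_{1,3} - b_{2,3}}{4}$, which is linear in the biases, so the SDP value is preserved whenever a transformation preserves the multiset sum of biases across the affected constraints. Second, under an odd RPR$^2$ rounding the expected rounded value of such a clause is $\frac{3 - F_2(b_{1,2}) - F_2(b_{1,3}) - F_2(b_{2,3})}{4}$, so the rounding performance weakly decreases exactly when the corresponding sum of $F_2$ values weakly increases. Hence each of the three parts reduces to (i) a direct verification that the sum of biases over the six pairs is preserved, and (ii) a convexity-type inequality on $F_2$ that I will extract from Lemma \ref{keymodifyinglemma}.

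For part~1, the original pair of triples contributes $3F_2(x) + 3F_2(x')$ to the $F_2$-sum and the new pair contributes $6F_2\!\left(\frac{x+x'}{2}\right)$, so after dividing by $3$ it suffices to prove $2F_2\!\left(\frac{x+x'}{2}\right) \ge F_2(x) + F_2(x')$. Since $x,x' \in [-\tfrac{1}{3}, 0]$, taking WLOG $x \le x'$ gives $-x \ge -x' = |x'|$, which is exactly the hypothesis of the first statement of Lemma \ref{keymodifyinglemma}. Part~2 is nearly identical: the two $F_2(1)$ contributions cancel on both sides, leaving the same inequality $2F_2\!\left(\frac{x+x'}{2}\right) \ge F_2(x) + F_2(x')$, now under the hypothesis $|x'| \le -x$ that is given in the statement, so again the first statement of Lemma \ref{keymodifyinglemma} applies.

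For part~3, the old side contributes $2F_2(x) + 2F_2(x') + 2F_2(1)$ and the new side contributes $4F_2(1) + 2F_2(x+x'-1)$, so after cancellation the required inequality becomes $F_2(1) + F_2(x+x'-1) \ge F_2(x) + F_2(x')$. This is exactly the second statement of Lemma \ref{keymodifyinglemma} applied to $x,x'$ with $y = 1 - x$: the hypothesis $|x'| \le x$ is what is assumed, $y \ge 0$ since $x \le 1$, and then $x + y = 1$ and $x' - y = x + x' - 1$.

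Finally, I would verify the progress conditions of Remark \ref{makingprogressremark} so that the modifications can be iterated to a fixed point. In parts~1 and~2 the new bias $\frac{x+x'}{2}$ lies in the interval between $x$ and $x'$, so it is contained in $[-1, 0]$ and is never $-1$ unless both inputs already were. In part~3, the new value $x + x' - 1$ satisfies $x + x' - 1 > -1$ because $x > |x'| \ge -x'$ forces $x + x' > 0$, and the transformation changes a bias strictly less than $1$ to the value $1$, which is precisely the progress event rewarded by the potential $\phi$. I do not expect any real obstacle: the entire proof is bookkeeping of how many copies of each $F_2$ value appear on each side, after which the three inequalities plug straight into Lemma \ref{keymodifyinglemma}.
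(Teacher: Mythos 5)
Your proposal is correct and follows essentially the same route as the paper: the SDP value is linear in the biases and hence preserved, parts 1 and 2 reduce to the inequality $2F_2\bigl(\tfrac{x+x'}{2}\bigr)\ge F_2(x)+F_2(x')$ from the first statement of Lemma \ref{keymodifyinglemma}, and part 3 follows from the second statement of that lemma with $y=1-x$. Your explicit $F_2$-bookkeeping and the check of the progress conditions of Remark \ref{makingprogressremark} are consistent with (and slightly more detailed than) the paper's argument.
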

\begin{proof}
Again, the SDP value is linear in $x$ and~$x'$ so it is not affected by these adjustments. To show that the performance of the rounding scheme can only decrease, we make the following observations:
\begin{enumerate}
\item The first and second statements follow directly from the first statement of Lemma \ref{keymodifyinglemma}.
\item The third statement follows from applying the second statement of Lemma \ref{keymodifyinglemma} to $x$ and~$x'$ with $y = 1-x$. \qedhere
\end{enumerate}
\end{proof}
This implies that we can take the triples of pairwise biases to be supported on \\
$\{(x,x,x),(x',x',1),(1,1,1)\}$ for some $x \in [-\frac{1}{3},0]$ and some $x' \in [-1,\change{0}]$. We now show that we can either take $x = x' \in [-\frac{1}{3},0]$ or take $x = -\frac{1}{3}$ and $x' \in [-1,-\frac{1}{3}]$. To show this, we use the following lemma.
\begin{lemma}
For {\rm MAX-NAE-\{3\}-SAT}, 
\begin{enumerate}
\item If there are two constraints \change{of equal weight} with pairwise biases $(x,x,x)$ and $(x',x',1)$ where $x \in [-\frac{1}{3},0]$, $x' \in [-1,0]$, and $\frac{3x + 2x'}{5} \geq -\frac{1}{3}$ then replacing these triples of pairwise biases with \\
$(\frac{3x + 2x'}{5},\frac{3x + 2x'}{5},\frac{3x + 2x'}{5})$ and $(\frac{3x + 2x'}{5},\frac{3x + 2x'}{5},1)$ does not affect the SDP value and can only decrease the performance of the rounding scheme.
\item If there are two constraints \change{of equal weight} with pairwise biases $(x,x,x)$ and $(x',x',1)$ where $x \in [-\frac{1}{3},0]$, $x' \in [-1,-\frac{1}{3}]$, and $\frac{3x + 2x'}{5} \leq -\frac{1}{3}$ then replacing these triples of pairwise biases with 
$(-\frac{1}{3},-\frac{1}{3},-\frac{1}{3})$ and $(x'+\frac{3x + 1}{2},x' + \frac{3x + 1}{2},1)$ does not affect the SDP value and can only decrease the performance of the rounding scheme.
\end{enumerate}
\end{lemma}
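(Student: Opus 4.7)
The plan is to use the concavity of $F_2$ on $[-1,0]$, which follows from $F_2$ being odd and convex on $[0,1]$ (a property stated in the excerpt). Recall that an $\NAE_3$ clause with pairwise biases $(a,b,c)$ has expected value $(3 - F_2(a) - F_2(b) - F_2(c))/4$ under the rounding scheme, so the rounding performance \emph{decreases} exactly when the total sum of $F_2$ evaluations over the pairwise biases of the two clauses \emph{increases}. In both parts the SDP value is preserved because the sum of pairwise biases is preserved: in Part~1, $3x + 2x' = 5\bar y$ for $\bar y = (3x+2x')/5$; in Part~2, $3(-\tfrac{1}{3}) + 2(x' + \tfrac{3x+1}{2}) = 3x + 2x'$ by direct computation. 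Moreover, the common $F_2(1)$ contribution cancels on both sides in each part.

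For Part~1, it suffices to show $5F_2(\bar y) \geq 3F_2(x) + 2F_2(x')$. Under the hypotheses $x, x' \in [-1, 0]$, and the condition $\bar y \geq -\tfrac{1}{3}$ together with $\bar y \leq 0$ places $\bar y \in [-1, 0]$, where $F_2$ is concave. Since $\bar y = \tfrac{3}{5} x + \tfrac{2}{5} x'$ is a convex combination, weighted Jensen's inequality gives exactly $F_2(\bar y) \geq \tfrac{3}{5} F_2(x) + \tfrac{2}{5} F_2(x')$.

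For Part~2, set $y = x' + (3x+1)/2$; the goal is $3F_2(-\tfrac{1}{3}) + 2F_2(y) \geq 3F_2(x) + 2F_2(x')$. I would first establish the chain $x' \leq y \leq -\tfrac{1}{3} \leq x$: the outer inequalities follow from the hypothesis $x \geq -\tfrac{1}{3}$ and from $(3x+1)/2 \geq 0$ (so $y \geq x'$); the middle inequality $y \leq -\tfrac{1}{3}$ follows by unpacking $(3x+2x')/5 \leq -\tfrac{1}{3}$ into $x' \leq -\tfrac{5}{6} - \tfrac{3x}{2}$ and substituting into the definition of $y$. This exhibits the transformation $(x,x,x,x',x') \mapsto (-\tfrac{1}{3},-\tfrac{1}{3},-\tfrac{1}{3}, y, y)$ as a weighted-mean-preserving \emph{contraction} of points in $[-1, 0]$. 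I then invoke the standard two-point majorization inequality for concave functions: if $d \leq d' \leq c' \leq c$ and $w_1 c + w_2 d = w_1 c' + w_2 d'$, then $w_1 F_2(c) + w_2 F_2(d) \leq w_1 F_2(c') + w_2 F_2(d')$ (this follows by writing both differences as integrals of $F_2'$ whose length-weights $w_1(c-c')$ and $w_2(d'-d)$ agree, and noting that $F_2'$ is non-increasing on $[-1,0]$ while the interval $[c',c]$ lies above $[d,d']$). Applying this with $(w_1, w_2, c, d, c', d') = (3, 2, x, x', -\tfrac{1}{3}, y)$ yields the desired inequality.

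The main obstacle is purely technical: verifying the chain $x' \leq y \leq -\tfrac{1}{3} \leq x$ in Part~2 from the hypotheses, after which both parts reduce to the same principle --- concavity of $F_2$ on $[-1, 0]$ implies that weighted-mean-preserving contractions of points in $[-1,0]$ increase the $F_2$-sum and hence decrease the rounding performance.
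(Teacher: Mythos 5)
Your proposal is correct and follows essentially the same approach as the paper, which only sketches this lemma by noting that it follows from $F_2$ being odd and convex for $x\ge 0$; your use of concavity of $F_2$ on $[-1,0]$ via Jensen's inequality (Part 1) and a weighted two-point majorization inequality together with the verified chain $x'\le y\le -\tfrac{1}{3}\le x$ (Part 2) correctly fills in the details the paper omits, and your computation showing the bias sums (hence the SDP value) are preserved is also right.
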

\begin{proof}[Proof sketch]
\strikeout{This can be shown using the fact that $F_2$ is an odd function and $F_2(x)$ is convex for $x \geq 0$.}
\change{
Recall that an $\NAE_3$ clause with pairwise biases $b_{1,2}, b_{1,3}, b_{2,3}$ has an SDP value of $\frac{3-b_{1,2}-b_{1,3}-b_{2,3}}{4}$ is satisfied by the rounding scheme with a probability of $\frac{3 - F_2(b_{1,2}) - F_2(b_{1,3}) - F_2(b_{2,3})}{4}$.
\begin{itemize}
\item It is straightforward to verify that $(\frac{3x + 2x'}{5},\frac{3x + 2x'}{5},\frac{3x + 2x'}{5})$ and $(\frac{3x + 2x'}{5},\frac{3x + 2x'}{5},1)$ satisfy the triangle inequalities and that the SDP value does not change:
\[
    \frac{1}{2}\cdot \frac{3-3x}{4} + \frac{1}{2}\cdot \frac{3-2x'-1}{4} = \frac{1}{2} \cdot \frac{3 - 3(3x+2x')/5)}{4} + \frac{1}{2} \cdot \frac{3 - 2(3x+2x')/5-1}{4}.
\]
To verify that the rounding probability can only decrease, it suffices to verify that
\[
    \frac{1}{2}\cdot \frac{3-3F_2(x)}{4} + \frac{1}{2}\cdot \frac{3-2F_2(x') - F_2(1)}{4} \ge \frac{1}{2} \cdot \frac{3 - 3F_2((3x+2x')/5)}{4} + \frac{1}{2} \cdot \frac{3 - 2F_2((3x+2x')/5)-F_2(1)}{4}.
\]
This is equivalent to showing that $5F_2((3x+2x')/5) \ge 3F_2(x) + 2F_2(x'),$ which follows from the fact that $F_2$ is concave on $[-1, 0]$ (Proposition~\ref{prop:F2power}).
\item As in the previous part, it is straightforward to verify that $(-\frac{1}{3},-\frac{1}{3},-\frac{1}{3})$ and $(x'+\frac{3x + 1}{2},x' + \frac{3x + 1}{2},1)$ satisfy the triangle inequalities and that the SDP value does not change:
\[
    \frac{1}{2}\cdot \frac{3-3x}{4} + \frac{1}{2}\cdot \frac{3-2x'-1}{4} = \frac{1}{2} \cdot 1 + \frac{1}{2} \cdot \frac{3 - (2x' + 3x + 1) - 1}{4}.
\]
To verify that the rounding probability can only decrease, it suffices to verify that
\[
    \frac{1}{2}\cdot \frac{3-3F_2(x)}{4} + \frac{1}{2}\cdot \frac{3-2F_2(x')-F_2(1)}{4} \ge \frac{1}{2} \cdot \frac{3 - 3F_2(-1/3)}{4} + \frac{1}{2} \cdot \frac{3 - 2F_2((2x' + 3x + 1)/2) - F_2(1)}{4}.
\]
This is equivalent to showing that $3F_2(-1/3) + 2F_2((1+3x+2x')/2) \ge 3F_2(x) + 2F_2(x'),$ which follows from the fact that $F_2$ is concave on $[-1, 0]$ (Proposition~\ref{prop:F2power}). \qedhere
\end{itemize}
}
\end{proof}
Finally, we observe that if we have both $(x,x,x)$ and $(1,1,1)$ then we can get rid of one of them. To see this, observe that as far as the SDP value and the performance of the rounding scheme are concerned, having constraints with pairwise biases $(x,x,x)$ and $(1,1,1)$ with weights $\frac{2}{3}$ and $\frac{1}{3}$ respectively is the same as having a constraint with pairwise biases $(x,x,1)$. Thus, if we have both $(x,x,x)$ and $(1,1,1)$ we can convert them into $(x,x,1)$ until one of them is exhausted.
\end{proof}

\subsubsection{Optimality of One-dimensional Rounding Function}
Having derived the hardest distribution for MAX NAE-$\{3\}$-SAT, we conclude this subsection with the following optimality result.

\begin{definition}
Let  $f : \mathbb R^k \to [-1, 1]$ be an RPR$^2_k$ rounding function. 
\begin{itemize}
    \item For a distribution $\scD$ of biases, define $s_2(f, \scD) :=  \frac{1-\underset{b \sim \scD}{\E}[F_2[f](b)]}{2}$.
    \item For a distribution $\scD$ of triples of biases, define $s_3(f, \scD) :=  \frac{3-\underset{(b_{1,2}, b_{1,2}, b_{2,3}) \sim \scD}{\E}[F_2[f](b_{1,2}) + F_2[f](b_{1,3}) + F_2[f](b_{2,3})]}{4}$.
\end{itemize}
\end{definition}

\begin{claim}[\cite{OW08}]\label{claim:nae3-one}
For any $f: \mathbb R^k \to [-1, 1]$, there exists a monotone, odd, one-dimensional function $g : \mathbb R \to [-1, 1]$ such that 
\begin{itemize}
    \item $s_2(g, \scD_{\hard,2}) \ge s_2(f, \scD_{\hard,2})$ for any distribution of biases $\scD_{\hard,2}$ described by Theorem~\ref{thm:maxcuthard}.
    \item $s_3(g, \scD_{\hard,3}) \ge s_3(f, \scD_{\hard,3})$ for any distribution of triples of biases $\scD_{\hard,3}$ described by Theorem~\ref{NAE3SATdisttheorem}.
\end{itemize}
\end{claim}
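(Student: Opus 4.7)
The plan is to take $g$ to be the Gaussian rearrangement $f^*$ of $f$ furnished by Theorem~\ref{thm:gauss}. By construction $g$ is one-dimensional, odd, and increasing (hence monotone), and it satisfies $F_2[g](\rho) \ge F_2[f](\rho)$ for every $\rho \in [0,1]$, with equality at $\rho = 1$. Because $g$ is odd, $F_2[g]$ is an odd function of $\rho$, so $F_2[g](-\rho) = -F_2[g](\rho) \le -F_2[f](\rho)$ for $\rho \in [0,1]$. The bulk of the argument will be a pointwise inequality $F_2[g](b) \le F_2[f](b)$ for every bias $b$ appearing in the support of the hard distributions.

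Inspecting Theorems~\ref{thm:maxcuthard} and~\ref{NAE3SATdisttheorem}, every bias value appearing in any $\scD_{\hard,2}$, or in any coordinate of any triple in any $\scD_{\hard,3}$, lies in $\{-\rho : \rho \in [0,1]\} \cup \{1\}$. The case $b = 1$ is handled by the equality $F_2[g](1) = F_2[f](1)$. For $b = -\rho$ with $\rho \in [0,1]$, what remains is to establish the nonnegativity
\[
F_2[f](\rho) + F_2[f](-\rho) \;\ge\; 0,
\]
since combining this with the previous paragraph yields $F_2[g](-\rho) \le -F_2[f](\rho) \le F_2[f](-\rho)$. Once this holds, $s_2(g,\scD_{\hard,2}) \ge s_2(f,\scD_{\hard,2})$ follows by linearity in $b$, and $s_3(g,\scD_{\hard,3}) \ge s_3(f,\scD_{\hard,3})$ follows by summing over the three biases in a triple (only $F_2$ enters $s_3$, since $F_3 \equiv 0$ for odd rounding functions).

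To prove the nonnegativity, split $f = f_e + f_o$ into its even and odd parts, with $f_e(\bx) = \tfrac12(f(\bx) + f(-\bx))$. For a pair of $\rho$-correlated standard Gaussians $(\bu,\bv) \in \mathbb{R}^k \times \mathbb{R}^k$, the joint law is invariant under $(\bu,\bv) \mapsto (-\bu,-\bv)$, which forces both cross expectations $\E[f_e(\bu) f_o(\bv)]$ and $\E[f_o(\bu) f_e(\bv)]$ to vanish. Hence $F_2[f](\rho) + F_2[f](-\rho) = 2\,\E[f_e(\bu) f_e(\bv)]$. Expanding $f_e$ in the multivariate Hermite basis and using that $\hat{f_e}(S) = 0$ whenever $|S|$ is odd (because $f_e$ is even), we obtain
\[
\E[f_e(\bu) f_e(\bv)] \;=\; \sum_{S:\,|S|\text{ even}} \hat{f_e}(S)^2\, \rho^{|S|} \;\ge\; 0,
\]
since $\rho^{|S|} \ge 0$ for even $|S|$.

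The main subtle step is the nonnegativity of $F_2[f](\rho) + F_2[f](-\rho)$, which is the only place the general (possibly non-odd) structure of $f$ enters the argument; everything else is a direct invocation of Theorem~\ref{thm:gauss} together with the classification of hard distributions in Theorems~\ref{thm:maxcuthard} and~\ref{NAE3SATdisttheorem}.
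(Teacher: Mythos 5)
Your proof is correct and follows essentially the same route as the paper: take $g=f^*$ from Theorem~\ref{thm:gauss} and use that every bias occurring in the hard distributions of Theorems~\ref{thm:maxcuthard} and~\ref{NAE3SATdisttheorem} lies in $[-1,0]\cup\{1\}$, so that $F_2[g]$ is no larger than $F_2[f]$ at each such bias and the conclusion follows by linearity. The one place you genuinely add something is the negative-bias step: the paper simply asserts $F_2[g](x)\le F_2[f](x)$ on $[-1,0]$, which is immediate from Theorem~\ref{thm:gauss} under the paper's standing convention (Section~\ref{sub-RPR2}) that only odd rounding functions are considered, since then $F_2[f]$ is odd. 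Your lemma $F_2[f](\rho)+F_2[f](-\rho)\ge 0$, proved via the even/odd decomposition (cross terms vanish by invariance of the correlated Gaussian pair under a global sign flip) and the Hermite expansion of the even part, removes that oddness assumption and justifies the claim for literally arbitrary $f:\mathbb R^k\to[-1,1]$, which is what the claim statement says; the argument is sound, with $\E[f_e(\bu)f_e(\bv)]=\sum_\alpha \hat{f_e}(\alpha)^2\rho^{|\alpha|}$ and only even total degrees surviving. Two small remarks: your coefficients $\hat{f_e}(S)$ should be read as multivariate Hermite coefficients indexed by multi-indices (with $|\alpha|$ the total degree), not Boolean Fourier coefficients; and, exactly like the paper's own proof, you read case~3 of Theorem~\ref{NAE3SATdisttheorem} as having biases in $[-1,0]\cup\{1\}$ even though that case is stated with $\rho\in[-1,1]$ --- this restriction matches the technical overview and is needed for the rearrangement inequality to point the right way, so it is a shared reading rather than a gap you introduced.
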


\begin{proof}
If we apply Theorem~\ref{thm:gauss} and let $g = f^*$ be the Gaussian rearrangement of $f$, then $g$ is a monotone, odd, one-dimensional function. Furthermore, $F_{2}[g](1) = F_{2}[f](1)$ and $F_{2}[g](x) \le F_{2}[f](x)$ for all $x \in [-1, 0]$. By Theorem~\ref{thm:maxcuthard} and Theorem~\ref{NAE3SATdisttheorem}, any bias that comes up in a hardest distribution either for MAX CUT or MAX NAE-$\{3\}$-SAT is in $[-1, 0] \cup \change{\{1\}}$. Thus, $\E_{b \sim \scD_{\hard,2}}[F_{2}[g](b)] \leq \E_{b \sim \scD_{\hard,2}}[F_2[f](b)]$ and 
\begin{align*}
    & \E_{(x_{12}, x_{13}, x_{23}) \sim \scD_{\hard,3}}[F_{2}[g](x_{12})+F_{2}[g](x_{13})+F_{2}[g](x_{23})] \\
    \LE\; & \E_{(x_{12}, x_{13}, x_{23}) \sim \scD_{\hard,3}}[F_{2}[f](x_{12})+F_{2}[f](x_{13})+F_{2}[f](x_{23})] \;.
\end{align*}
Therefore, $s_2(g, \scD_{\hard,2}) \ge s_2(f, \scD_{\hard,2})$ and $s_3(g, \scD_{\hard,3}) \ge s_3(f, \scD_{\hard,3})$. 
\end{proof}

\subsection{A variational approach to MAX CUT} 

Let $f : \mathbb R\to [-1, 1]$ be an odd, monotone RPR$^2$ rounding function for MAX CUT. By Theorem~\ref{thm:maxcuthard}, for a given completeness $c \in [1/2, 1]$, we know the hardest distributions $\mathcal D$ are a combination of equal vectors and vectors with dot product $\rho$. Let $\alpha \in [0, 1]$ be the relative frequency of these two dot products, that is $c = \alpha \cdot \frac{1-\rho}{2}$. The performance of the rounding scheme will be

\begin{align*}
s_2(f, \scD) &\EQ (1-\alpha) \cdot \frac{1-F_2(1)}{2} + \alpha \cdot \frac{1 - F_2(\rho)}{2}\\
&\EQ \frac{1}{2} - \frac{1-\alpha}{2}\int_{\mathbb R}f(x)^2\phi(x)\,dx - \frac{\alpha}{2} \int_{\mathbb R^2}f(x)f(y) \phi_{\rho}(x, y)\,dx dy\;,
\end{align*}
where $\phi(x)=\frac{1}{\sqrt{2\pi}}\exp(-x^2/2)$ is the density of a standard normal variable, and 
\begin{align*}
\phi_{\rho}(x, y) &\EQ \frac{\exp\left(-\frac{1}{2}(x, y)^T\Sigma^{-1}(x, y)\right)}{2\pi \sqrt{1-\rho^2}} 
\EQ
\frac{\exp
\left(-\frac{x^2-2\rho x y + y^2}{2(1-\rho^2)}\right)
}{
2\pi\sqrt{1-\rho^2}}
\end{align*}
is the density function of a two-dimensional normal random variable with mean $\mu=(0,0)$  covariance matrix  
$
\Sigma =\begin{pmatrix}1&\rho\\\rho&1\end{pmatrix}$ so that $\Sigma^{-1}=\frac{1}{1-\rho^2}
\begin{pmatrix}1&-\rho\\-\rho&1\end{pmatrix}$.

Thus, maximizing $s_2(f, \scD)$ is equivalent to minimizing
\[ L(f) \EQ \frac{1-\alpha}{\alpha} \int_{-\infty}^\infty f(x)^2\phi(x)dx + \int_{\mathbb R^2}f(x)f(y) \phi_{\rho}(x, y)\,dx dy \;.
\]
\strikeout{We know that $f$ is monotone and that it attains the values $-1$ and $1$ when $|x|$ is sufficiently large.} Let $a := \sup \{x : f(x) < 1\}$. Since $f$ is increasing and odd, we know that $a \ge 0$ and that $|f(x)| < 1$ on $(-a, a)$. Note that it may be the case that $a = \infty$.

We prove in Appendix~\ref{app:minimizer} that there is an `optimal' $f$. That is, there exists a \change{function} $f$ which globally minimizes $L(f)$ among all valid RPR$^2$ rounding functions. This means that any adjustment to $f$, not violating the condition $\|f\|_{\infty} \le 1$ increases the functional. More precisely, consider any $a' \in (0, a)$ and any measurable $h : (-a', a') \to [-1, 1]$ (not necessarily odd nor monotone). Note that for some $\eps$ sufficiently small, for all $s \in [-\eps, \eps]$, $\|f+s h\|_{\infty} \le 1$. Then, in order for there to be a local minimum of $L$ at $s = 0$, we must have that
\begin{align*}
0 &\EQ \frac{1}{2}\left.\frac{\partial}{\partial s}L(f + sh)\right|_{s=0}\\
&\EQ \frac{1-\alpha}{\alpha}\int_{-a'}^{a'} f(y)h(y)\phi(y)\,dy +
\int_{-\infty}^\infty \int_{-a'}^{a'} f(x)h(y) \phi_{\rho}(x,y)\,dx dy\\
&\EQ \int_{-a'}^{a'} h(y) \left( 
\frac{1-\alpha}{\alpha}f(y)\phi(y) +
\int_{-\infty}^\infty f(x) \phi_{\rho}(x,y)\,dx
\right)\,dy \;.
\end{align*}
Since $h$ is an arbitrary measurable function on $(-a', a')$, we have that for almost every $y \in (-a', a')$,
\begin{align}
\frac{1-\alpha}{\alpha}f(y)\phi(y) +
\int_{-\infty}^\infty f(x) \phi_{\rho}(x,y)\,dx \EQ 0\;.\label{eq:fred-maxcut}
\end{align}
Since $a' \in (0, a)$. The above holds for all $\change{y} \in (-a, a)$. Further define
\[K(x, y)\;:=\;\frac{\phi_\rho(x,y)}{\phi(y)} \;=\; \frac{ \exp\left(-\frac{(x-\rho y)^2}{2(1-\rho^2)}\right)
}
{\sqrt{2\pi(1-\rho^2)}} \;. \]
and
\[ g(y) \;:=\; \frac{\alpha}{1-\alpha} \left(\int_{-\infty}^{-a} K(x,y)\,dx - \int_{a}^{-\infty} K(x,y)\,dx \right) \;. \]

Then, (\ref{eq:fred-maxcut}) becomes
\begin{align}
f(y) +
\frac{\alpha}{1-\alpha}\int_{-a}^a f(x) K(x,y)\,dx \;=\; g(y) \quad,\quad -a\le y\le a \;.\label{eq:fredholm-maxcut}
\end{align}

This is a Fredholm integral equation of the second kind (see~\cite{polyanin2008handbook}).

\subsubsection{Comparing with Feige and Langberg}

Let $s_2(f, \scD)|y$ be the expected size of the cut produced by using the rounding \change{function}~$f$ conditioned on the value of~$y$, i.e., 
\[ s_2(f, \scD)|y \EQ \frac{1}{2} - \frac{1-\alpha}{2} f(y)^2 -\frac{\alpha}{2}\int_{-\infty}^\infty f(x)f(y) \frac{\phi_\rho(x,y)}{\phi(y)}\,dx\;.   \]

Feige and Langberg \cite{FL06} argue that if $f$ is optimal, then for every $y$ we have $s_2(f, \scD)|y\ge\frac12$, and if $-1<f(y)<1$, then $s_2(f, \scD)|y=\frac12$. Their intuitive argument is that if $s_2(f, \scD)|y<\frac12$ then the cut produced by the rounding procedure is not \emph{locally optimal} in expectation and can thus be improved.

If $-1<f(y)<1$ then $s_2(f, \scD)|y=\frac12$ which, by moving sides and dividing by $f(y)$, is equivalent to 
\[ \frac{1-\alpha}{\alpha}f(y) +
\int_{-\infty}^\infty f(x) \frac{\phi_\rho(x,y)}{\phi(y)}\,dx \EQ 0 \;.
\]
This is exactly the integral equation we got above.

\subsection{Analogous Condition for MAX NAE-\texorpdfstring{$\{3\}$}{3}-SAT}

Let $f : \mathbb R\to [-1, 1]$ be an odd, increasing RPR$^2$ rounding function for MAX NAE-$\{3\}$-SAT. By Theorem~\ref{NAE3SATdisttheorem}, for a given completeness $c \in [3/4, 1]$, we also know the hardest distributions $\scD$ have $(\rho_0, \rho_0, \rho_0)$ with \change{probability} $\alpha$ and $(1, \rho, \rho)$ with probability $1-\alpha$, where $\rho_0$ is either $\max(\rho, -\third)$ or $\rho_0 = 1$. 

We first consider the case $\rho_0 = \max(\rho, -\third)$. The performance of the rounding scheme is

\begin{align*}
s_3(f, \scD) &\EQ (1-\alpha) \cdot \frac{3-F_2(1)-2F_2(\rho)}{4} + \alpha \cdot \frac{3 - 3F_2(\rho_0)}{4}\\
&\EQ \frac{3}{4} - \frac{1-\alpha}{4}\int_{\mathbb R}f(x)^2\phi(x)\,dx- \frac{1-\alpha}{2}\int_{\mathbb R^2}f(x)f(y)\phi_{\rho}(x, y)\,dx dy - \frac{3\alpha}{4} \int_{\mathbb R^2}f(x)f(y) \phi_{\rho_0}(x, y)\,dx dy\;.
\end{align*}

Thus, maximizing $s_3(f, \scD)$ is equivalent to minimizing
\[ L(f) \EQ \frac{1-\alpha}{3\alpha}\int_{\mathbb R}f(x)^2\phi(x)\,dx+ \frac{2-2\alpha}{3\alpha}\int_{\mathbb R^2}f(x)f(y)\phi_{\rho}(x, y)\,dx dy + \int_{\mathbb R^2}f(x)f(y) \phi_{\rho_0}(x, y)\,dx dy\;.
\]
\strikeout{We know that $f$ is monotone and that it attains the values $-1$ and $1$ when $|x|$ is sufficiently large.} Let $a := \sup \{x : f(x) < 1\}$. Since $f$ is increasing and odd, we know that $a \ge 0$ and that $|f(x)| < 1$ on $(-a, a)$. Note that it may be the case that $a = \infty$.

We prove in Appendix~\ref{app:minimizer} that there is an `optimal' $f$. That is, there exists a \change{function} $f$ which globally minimizes $L(f)$ among all valid RPR$^2$ rounding functions. This means that any adjustment to $f$, not violating the condition $\|f\|_{\infty} \le 1$ increases the functional. More precisely, consider any $a' \in (0, a)$ and any measurable $h : (-a', a') \to [-1, 1]$ (not necessarily odd nor monotone). Note that for some $\eps$ sufficiently small, for all $s \in [-\eps, \eps]$, $\|f+s h\|_{\infty} \le 1$. Then, in order for there to be a local minimum of $L$ at $s = 0$, we must have that
\begin{align*}
0 &\EQ \frac{1}{2}\left.\frac{\partial}{\partial s}L(f + sh)\right|_{s=0}\\
&\EQ \frac{1-\alpha}{3\alpha}\intd_{-a'}^{a'} f(y)h(y)\phi(y)\,dy +
\frac{2-2\alpha}{3\alpha}\intd_{-\infty}^\infty \intd_{-a'}^{a'} f(x)h(y) \phi_{\rho}(x,y)\,dx dy+\intd_{-\infty}^\infty \int_{-a'}^{a'} f(x)h(y) \phi_{\rho_0}(x,y)\,dx dy\\
&\EQ \intd_{-a'}^{a'} h(y) \left( 
\frac{1-\alpha}{3\alpha}f(y)\phi(y) +\frac{2-2\alpha}{3\alpha}\intd_{\infty}^{\infty}f(x)\phi_{\rho}(x,y)\,dx
\intd_{-\infty}^\infty f(x) \phi_{\rho_0}(x,y)\,dx
\right)\,dy \;.
\end{align*}
Since $h$ is an arbitrary measurable function on $(-a', a')$, we have that for almost every $y \in (-a', a')$,
\begin{align}
\frac{1-\alpha}{3\alpha}f(y)\phi(y) +\frac{2-2\alpha}{3\alpha}\int_{\infty}^{\infty}f(x)\phi_{\rho}(x,y)\,dx
\int_{-\infty}^\infty f(x) \phi_{\rho_0}(x,y)\,dx = 0\;.\label{eq:fred-nae3}
\end{align}
Since $a' \in (0, a)$. The above holds for all $\change{y} \in (-a, a)$. Further define
\[K(x, y)\EQ \frac{2-2\alpha}{3\alpha}\cdot \frac{\phi_\rho(x,y)}{\phi(y)} + \frac{\phi_{\rho_0}(x, y)}{\phi(y)} \EQ \frac{2-2\alpha}{3\alpha}\cdot \frac{ \exp\left(-\frac{(x-\rho y)^2}{2(1-\rho^2)}\right)}
{\sqrt{2\pi(1-\rho^2)}}+\frac{ \exp\left(-\frac{(x-\rho_0 y)^2}{2(1-\rho_0^2)}\right)}
{\sqrt{2\pi(1-\rho_0^2)}} \;. \]
and
\[ g(y) \;=\; \frac{3\alpha}{1-\alpha} \left(\int_{-\infty}^{-a} K(x,y)\,dx - \int_{a}^{-\infty} K(x,y)\,dx \right) \;. \]

Then, (\ref{eq:fred-nae3}) becomes
\begin{align}
f(y) +
\frac{3\alpha}{1-\alpha}\int_{-a}^a f(x) K(x,y)\,dx \EQ g(y) \quad,\quad -a\le y\le a \;.\label{eq:fredholm-na3e}
\end{align}

This is \change{again} a Fredholm integral equation of the second kind.

In the other case, where $\rho_0 = 1$, we have that
\begin{align*}
s_3(f, \scD) &\EQ (1-\alpha) \cdot \frac{3-F_2(1)-2F_2(\rho)}{4} + \alpha \cdot \frac{3 - 3F_2(\rho_0)}{4}\\
&\EQ \frac{3}{4} - \frac{1-\alpha}{4}\int_{\mathbb R}f(x)^2\phi(x)\,dx- \frac{1-\alpha}{2}\int_{\mathbb R^2}f(x)f(y)\phi_{\rho}(x, y)\,dx dy - \frac{3\alpha}{4} \int_{\mathbb R}f(x)^2 \phi(x)\,dx\;\\
&\EQ \frac{3}{4} - \frac{1-4\alpha}{4}\int_{\mathbb R}f(x)^2\phi(x)\,dx- \frac{1-\alpha}{2}\int_{\mathbb R^2}f(x)f(y)\phi_{\rho}(x, y)\,dx dy\;.
\end{align*}
By applying a similar argument, we have $a := \sup \{x : |f(x)| < 1\}$ and for all $y \in (-a, a)$,
\begin{align}
\frac{1-4\alpha}{2-2\alpha}f(y)\phi(y) +
\int_{-\infty}^\infty f(x) \phi_{\rho}(x,y)\,dx \EQ 0\;.\label{eq:fred-nae3sat2}
\end{align}

We can then define
\[K(x, y)\;:=\;\frac{\phi_\rho(x,y)}{\phi(y)} \EQ \frac{ \exp\left(-\frac{(x-\rho y)^2}{2(1-\rho^2)}\right)
}
{\sqrt{2\pi(1-\rho^2)}} \;. \]
and
\[ g(y) \;:=\; \frac{1-4\alpha}{2-2\alpha} \left(\int_{-\infty}^{-a} K(x,y)\,dx - \int_{a}^{-\infty} K(x,y)\,dx \right) \;. \]

Then, (\ref{eq:fred-nae3sat2}) becomes
\begin{align}
f(y) +
\frac{1-4\alpha}{2-2\alpha}\int_{-a}^a f(x) K(x,y)\,dx \;=\; g(y) \quad,\quad -a\le y\le a \;.\label{eq:fredholm-nae3sat2}
\end{align}

\subsection{Solving integral equations}

\subsubsection{Iterative techniques}
A Fredholm integral equation of the second kind has the following standard form:
\[ f(x) - \lambda
\int_{a}^b K(x,y)f(y)\,dy \;=\; g(x) \quad,\quad a\le x\le b \;.
\]
Perhaps the simplest way of solving such integral equations that do not have a closed-form solution is the method of successive approximations. We construct a sequence of functions $\{f_n(x)\}$ that hopefully converge to a solution of the integral equation. The sequence $\{f_n(x)\}$ is defined as follows:
\[ f_0(x) \;=\; g(x) \;.\]
\[ f_n(x) \;=\; g(x) + \lambda \int_{a}^b K(x,y)f_{n-1}(y)\,dy \quad,\quad n>1\;, \]

\subsubsection{Linear system solving}

Since we have more structure in our kernels for MAX CUT and MAX NAE-$\{3\}$-SAT, we can compute a discretization of $f$ by solving a linear system. Recall that we sought to maximize a functional of the form
\begin{align}
L(f) = c - \lambda_1 \int_{\mathbb R} f(x)^2\phi(x)\,dx - \lambda_2 \int_{\mathbb R^2} f(x)M(x, y)f(y)\,dy,\label{eq:loss-abstract}
\end{align}
for some suitable function $M(x, y)$. Let $N \ge 1$ be a positive integer and partition $\mathbb R$ into $-\infty = a_0 < a_1 < \cdots < a_N = \infty$ such that for all $i \in \{1, 2, \hdots, N\}$,
\[
\int_{a_{i-1}}^{a_i} \phi(x)\,dx \EQ \frac{1}{N}\;.
\]

Assume that $f$ is piecewise constant, taking on value $f_i$ in the interval $(a_{i-1}, a_i)$. We let $\bf$ denote the vector of $f_i$'s. \change{Since any odd and monotone function can be approximated in the $\ell_\infty$ norm by piecewise-constant odd and monotone functions,} we may \change{assume} that $f$ is odd and monotone. \change{For} all $i, j \in \{1, \hdots, N\}$\change{,} define
\[
\hat{M}_{i,j} \EQ \int_{a_{j-1}}^{a_j}\int_{a_{i-1}}^{a_i} M(x, y)\,dx\,dy\;.
\]
Then (\ref{eq:loss-abstract}) becomes
\[
L(\bf) \EQ c - \frac{\lambda_1}{N} \sum_{i=1}^N f_i^2 \change{- \lambda_2}\sum_{i=1}^{N}\sum_{j=1}^Nf_i \hat{M}_{i,j} f_j \EQ c - \bf^T\change{\left(\frac{\lambda_1}{N} I + \lambda_2 \hat{M}\right)}\bf\;.
\]

For all $i$ such that $|f_i| < 1$, we must have that
\[
0 \EQ -\frac{\partial}{\partial f_i}L(\bf) \EQ \change{\frac{2\lambda_1}{N}} f_i + \lambda_2 \sum_{j=1}^n (\hat{M}_{i,j} + \hat{M}_{j\change{,}i}) f_j\;.
\]
In our situations, $\hat{M}$ is a symmetric matrix, so we have for all $i$ such that $|f_i| < 1$,
\[
f_i + \frac{\change{N} \lambda_2}{\lambda_1} \sum_{j=1}^n \hat{M}_{i,j}f_j \EQ 0\;.
\]
Let $\lambda := \frac{\change{N} \lambda_2}{\lambda_1}$ and let $i_a$ be the largest index such that $f_i = -1$. Then, we can define for $i \in \{i_a+1, \hdots, N-i_a\}$,
\[
g_i \;:=\; \lambda \sum_{j=1}^{i_a} \hat{M}_{i,j} - \hat{M}_{i,N-j+1}\;.
\]
Let $\bf'$ be $\bf$ restricted to $\{i_a+1, \hdots, N-i_a\}$ and $\hat{M}'$ be $\hat{M}$ restricted to $\{i_a+1, \hdots, N-i_a\}^2$. We then have the linear system
\[
(I + \lambda\hat{M}')\bf' \EQ \bg\;. 
\]  

This is a discrete Fredholm equation of the second kind. We can directly solve the linear system.\footnote{Note that $I + \lambda \hat{M}'$ can only be non-invertible for $O(N)$ values of $\lambda$. In the experiments (next section), this was only problematic when $\alpha=1$ where $I + \lambda \hat{M'}$ has rank $1$. A discussion about invertibility of similar linear systems for MAX CUT is given in Section 6 of~\cite{OW08}.} Note that we need to guess the value of $i_a$, so we need to solve multiple linear systems. Empirically, we can use a binary search to find the value of $i_a$.

For any $\eps > 0$, if we pick $N = O(1/\eps)$ and try $\alpha$ and $\rho$ in a grid of size $O(1/\eps)$. For each choice, we shall get an optimal step function for that distribution. By trying all of these functions on any input and taking the best expected result (e.g.,~\cite{R09}), we can compute optimal approximation factor to within an additive $\eps^{O(1)}$.\footnote{Proof of this follows by suitably modifying the arguments of~\cite{OW08}.}

\subsection{Experimental Results}\label{sub-exper}

Using MATLAB, we implemented a search to find the optimal $\hat{\bf}$, for various choices of $\alpha \in [0, 1]$, $\rho \in [-1, 0]$, and (for MAX NAE-$\{3\}$-SAT) $\rho_0 \in \{\max(-\third, \rho), 0\}$. For MAX CUT, our results reproduced the results found by \cite{OW08}.

For MAX NAE-$\{3\}$-SAT, we started with a coarse search. In particular, we did a grid search over $500^2$ values of $\alpha$ and $\rho$ and considered step functions with $N=100$ steps. From this, we computed a numerical approximation of the completeness/soundness tradeoff curve (Figure~\ref{fig:nae-3-sat}). Using these calculations, we estimated that the approximation ratio of MAX NAE-$\{3\}$-SAT is 0.9089 to four digits of precision.

\begin{figure}[t]
\begin{center}
\includegraphics[width=2.5in]{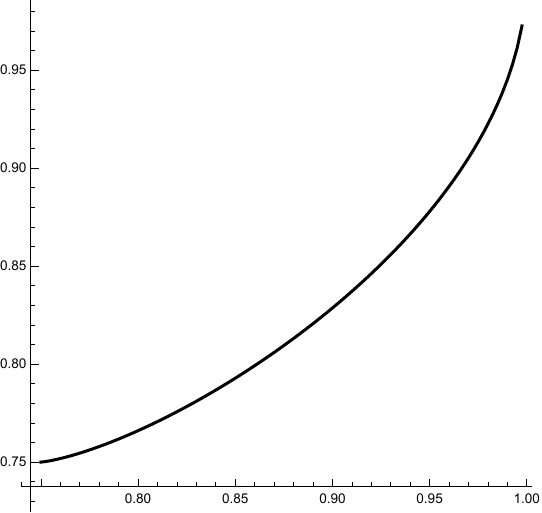}
\hspace*{1cm}
\begin{tabular}[b]{ccc}
Completeness & Soundness & Ratio\\
\hline
0.750 & 0.750000 & 1.000000 \\
0.775 & 0.756203 & 0.975746 \\
0.800 & 0.766051 & 0.957564 \\
0.825 & 0.778354 & 0.943460 \\
0.850 & 0.792829 & 0.932740 \\
0.875 & 0.809500 & 0.925143 \\
0.900 & 0.828659 & 0.920732 \\
0.925 & 0.850978 & 0.919976 \\
0.950 & 0.877926 & 0.924132 \\
0.975 & 0.913468 & 0.936890 \\
\\
\\
\end{tabular}
\end{center}
\caption{
\change{A plot and table showing the tradeoff between completeness ($x$-axis) and soundness ($y$-axis) for MAX NAE-$\{3\}$-SAT. (For a similar tradeoff for MAX CUT, see p.~339 of \cite{OW08}.)}}
\label{fig:nae-3-sat}
\end{figure}

With a more fine-grained search around the hardest points, we found that the most difficult point is $\alpha \approx 0.7381$, $\rho \approx -0.7420$, and $\rho_0 = -\third$. At this point, we computed a step function with $600$ steps which attains an approximation ratio of $\approx 0.9089169$.  The plot of the rounding function which attains this ratio is shown in Figure~\ref{fig:opt} (left). 

For a fixed rounding function $f$, if we assume $\rho_0 = 1$, then \[s_3(\mathcal D) - 0.908916 \cdot c_3(f, \mathcal D)\] is \change{an} affine function in $\alpha$ and a convex function in $\rho$. This is also the case if $\rho_0 = -\third$ or $\rho_0 = \rho$. Thus, for each of $\alpha \in \{0, 1\}$ and $\rho_0 \in \{1, -\third, \rho\}$, we did a ternary search to check that $f$ achieves an approximation of at least $0.908916$ on all hard distributions (up to a numerical error of $10^{-9}$ in the choice of $\rho$).

\begin{figure}[ht]
\centering
\includegraphics[height=2.5in]{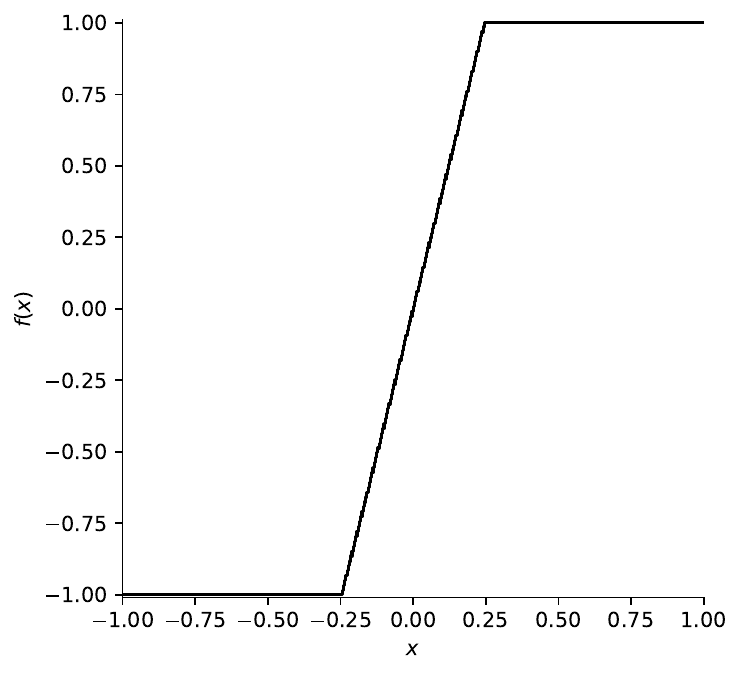}
\includegraphics[height=2.5in]{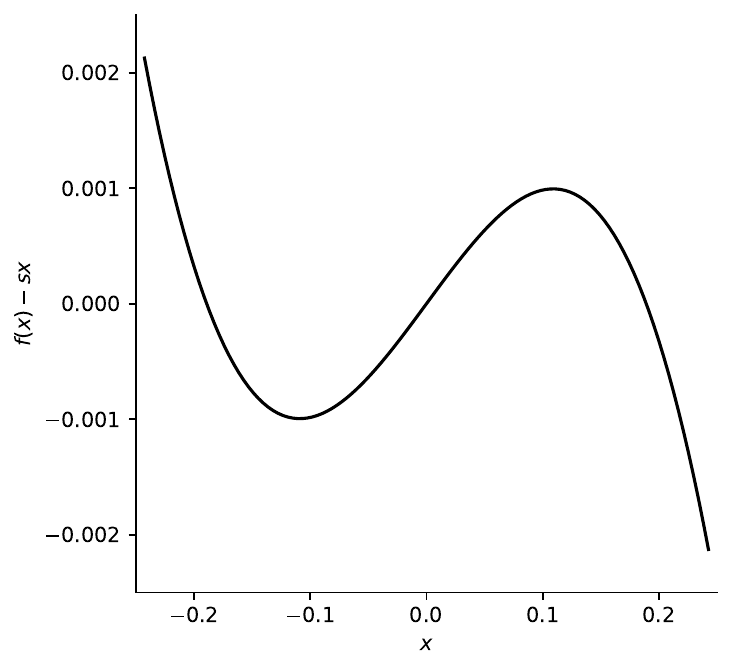}
\caption{(left) Near-optimal rounding function for MAX NAE-$\{3\}$-SAT in terms of approximation factor. (right) Approximate deviation of the near-optimal rounding function for MAX NAE-$\{3\}$-SAT from the best-fit $s$-linear function.}
\label{fig:opt}
\end{figure}

This step function is essentially indistinguishable from a \emph{$s$-linear} function of the form $\max(\min(sx, 1), -1)$. Using linear regression, we found that the best fit is $s \approx 4.072132$. Figure~\ref{fig:opt} (right) shows the deviation of~$f$ from an $s$-linear function in the region where $f$ is strictly between $-1$ and $1$. Notice that the curve is essentially cubic, but the deviation from $s$-linear is smaller than $10^{-3}$.

\section{Approximation algorithms for satisfiable MAX NAE-\texorpdfstring{$K$}{K}-SAT}\label{S-approx}

In this section we experiment with approximation algorithms for MAX NAE-SAT, as well as some restrictions of it, such as MAX NAE-$\{3,5\}$-SAT and MAX NAE-$\{3,7,8\}$-SAT. We focus in this paper on approximation algorithms for satisfiable, or almost satisfiable, instances of these problems. When considering satisfiable, or almost satisfiable, instances, we may assume that there are no clauses of size~$2$.

\subsection{On the difficulty of proving rigorous approximation bounds}\label{sub-difficult}

Our proposed approximation algorithms round the solution of the basic SDP using the RPR$^2$ method with a carefully chosen rounding function $f:[-\infty,\infty]\to [-1,1]$. 
The approximation ratio $\alpha_K(f)$ of an algorithm for \change{satisfiable instances of} MAX NAE-$K$-SAT, for some finite set~$K$, that uses RPR$^2$ with rounding function~$f$ is 
\[ \alpha_K(f) \;=\; \min_{k\in K} \min_{\begin{array}{c}\scriptstyle\bv_1,\bv_2,\ldots,\bv_k \\[-4pt] \scriptstyle
\RELAX(\bv_1,\bv_2,\ldots,\bv_k)=1\end{array}} \prob_f(\bv_1,\bv_2,\ldots,\bv_k) \;,\]
where $\bv_1,\bv_2,\ldots,\bv_k$ are assumed to be unit vectors that can be written as a convex combination of integral solutions, i.e., $\RELAX(\bv_1,\bv_2,\ldots,\bv_k)=1$, and $\prob_f(\bv_1,\bv_2,\ldots,\bv_k)$ is the probability that rounding the vectors using~$f$ yields an assignment that satisfies the corresponding $\NAE_k$ clause.

Even for a given~$f$, this is a fairly difficult optimization problem. For $k=9$, for example, this is essentially a 36-dimensional problem. (Here $36=\binom{9}{2}$.) What makes the problem even harder is that even the computation of $\prob_f(\bv_1,\bv_2,\ldots,\bv_k)$ for given vectors $\bv_1,\bv_2,\ldots,\bv_k$ is a non-trivial task, as it essentially amounts to computing a $k$-dimensional integral.

In view of these difficulties, the approximation ratios of the algorithms we consider for \change{satisfiable instances of} MAX NAE-$K$-SAT, where $\max K>3$, are only conjectured. We believe that we know which configuration $\bv_1,\bv_2,\ldots,\bv_k$ attains the minimum in the above expression, and thus determines the approximation ratio, but we are not able, at the moment, to prove it rigorously.

It is, in principal, possible to make the analysis of the proposed approximation algorithms rigorous, as done for MAX $[3]$-SAT and MAX NAE-$[4]$-SAT in \cite{zwick02}, but this would require a lot of effort. We note that even for MAX $[2]$-SAT and MAX DI-CUT, this is a non-trivial task. (See Sj{\"o}gren \cite{Sjogren09} \change{and Brakensiek et al.~\cite{BHPZ22,BHZ24}}.)

Even though the approximation ratios that we obtain are only conjectured, we believe that they are useful guides for further theoretical investigations of the MAX NAE-SAT and MAX SAT problems.

\subsection{Conjectures regarding optimal rounding procedures}\label{sub-conj}

The experiments we did with approximation algorithms for satisfiable, or almost satisfiable, instances of MAX NAE-$K$-SAT, for various sets $K$, lead us to make the following conjectures:

\begin{conjecture}\label{C-Sym}
The hardest configuration $\bv_1,\bv_2,\ldots,\bv_k$, where $k\in K$ and $k\ge 3$, for the optimal $RPR^2$ rounding function $f_K$ for satisfiable, or almost satisfiable, instances of {\rm MAX NAE-$K$-SAT} is the symmetric configuration in which $\bv_i\cdot \bv_j=1-\frac{4}{k}$, for every $i\ne j$.
\end{conjecture}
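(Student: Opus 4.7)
The plan is to establish the conjecture in two stages: (i) reduce to symmetric configurations (those with $\bv_i \cdot \bv_j$ independent of the pair $i \neq j$), and (ii) identify the minimizing symmetric RELAX$=1$ configuration as the one with common bias $1 - \frac{4}{k}$.

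For stage (i), one exploits permutation symmetry: since $\NAE_k$ and the RPR$^2$ rounding are $S_k$-symmetric, both $\prob_{f_K}$ and the RELAX$=1$ constraint are invariant under relabeling of the vectors, so minimizers come in $S_k$-orbits. Concretely, I would expand $\prob_{f_K}(\bv_1,\ldots,\bv_k)$ using the Hermite coefficients of $f_K$ and the Fourier expansion of $\NAE_k$: each contribution decomposes into a sum over multi-indices $(d_1,\ldots,d_k)$ of products of Hermite coefficients of $f_K$ times Wick-type polynomials in the pairwise biases $b_{i,j}$. Grouping terms by $S_k$-orbit, one would argue via power-mean / Schur-type inequalities that symmetrizing the Gram matrix to its common average $\bar c = \binom{k}{2}^{-1}\sum_{i<j} b_{i,j}$ can only weakly decrease $\prob_{f_K}$, provided the relevant Hermite-coefficient products have a consistent sign. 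An alternative route is to mimic the local-modification arguments of Section~\ref{S-COV} (e.g., Lemma~\ref{keymodifyinglemma}) to directly show that any non-symmetric RELAX$=1$ configuration can be locally adjusted into another RELAX$=1$ configuration with weakly smaller $\prob_{f_K}$.

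For stage (ii), parameterize symmetric configurations by the common bias $c = \bv_i \cdot \bv_j$. A direct calculation shows that for an integral $\NAE_k$-satisfying assignment with $m$ ones, the pairwise bias equals $((2m-k)^2 - k)/(k(k-1))$, so the RELAX$=1$ constraint restricts $c$ to an interval $[c_{\min}(k),\, 1 - \frac{4}{k}]$, with the upper endpoint realized by the uniform distribution over assignments of Hamming weight $1$ or $k-1$, and the lower endpoint realized by assignments of weight closest to $k/2$. Writing $P_{f_K}(c)$ for $\prob_{f_K}$ evaluated at the symmetric configuration of bias $c$, the conjecture reduces to the claim that $P_{f_K}(c)$ attains its minimum on this interval at the right endpoint $c = 1 - \frac{4}{k}$. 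Specializing the Hermite expansion to symmetric input turns $P_{f_K}$ into an explicit univariate polynomial in $c$ whose coefficients are squared combinations of Hermite coefficients of $f_K$ weighted by Fourier coefficients of $\NAE_k$; the desired monotonicity would then follow from sign constraints on these coefficients.

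The main obstacle is stage (ii). For $\MN{3}$ only $F_2$ enters and the analysis is already handled by Theorem~\ref{NAE3SATdisttheorem}. For $k \geq 4$ the higher moments $F_4, F_6, \ldots$ appear, and as Appendix~\ref{sub:F4negative} shows, $F_4$ need not be nonnegative even on nonnegative inputs, which rules out any elementary sign-based monotonicity argument. A rigorous proof would likely have to exploit the peculiar structure of the conjectured optimizer $f_K$ — the non-monotone $\pm 1$-valued step function whose Hermite coefficients admit an explicit closed form in terms of its jump locations — together with a computer-assisted verification in the spirit of~\cite{zwick02,Sjogren09}. An additional subtlety is that $f_K$ must be tuned simultaneously for all $k \in K$, so stage (ii) cannot be carried out clause-size by clause-size independently; this coupling across sizes is, I expect, where a fundamentally new idea would be required.
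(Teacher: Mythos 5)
First, a point of calibration: the statement you were asked to prove is stated in the paper only as Conjecture~\ref{C-Sym}; the paper gives no proof, only intuition (the uniform distribution over satisfying assignments in which all but one variable agree), and it records the special cases $K=\{3\}$ (a corollary of Theorem~\ref{NAE3SATdisttheorem}) and $K=\{3,4\}$, $K=\{4\}$ (hyperplane rounding already gives $\frac78$). So there is no paper proof for your attempt to match, and your proposal, by your own admission in its last paragraph, is a plan with an open core rather than a proof. The parts of your sketch that are sound are exactly the parts the paper also states: the identification of $1-\frac4k$ as the extreme symmetric bias, realized by the uniform distribution over satisfying assignments of Hamming weight $1$ or $k-1$ (note, though, that for a single such assignment the individual pairwise biases are $\pm1$; your formula $((2m-k)^2-k)/(k(k-1))$ is the \emph{average} pairwise bias, which is the right quantity only after symmetrizing the distribution over $S_k$), and the reduction of the $K=\{3\}$ case to Theorem~\ref{NAE3SATdisttheorem}.

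The genuine gap is that both of your stages, as described, would fail for $k\ge 5$, and for a reason the paper itself points out. Your stage (i) asks for a Schur-type/symmetrization inequality showing that averaging the Gram matrix weakly decreases $\prob_{f_K}$ "provided the relevant Hermite-coefficient products have a consistent sign"; but the paper explicitly observes that for $k\ge5$ the symmetric configuration with $\bv_i\cdot\bv_j=1-\frac4k$ is \emph{not} a local maximum of $\sum_{i<j}F_2(\bv_i\cdot\bv_j)$ over feasible configurations, so no sign-consistent, term-by-term argument at the level of the degree-$2$ contribution can establish that the symmetric point is hardest for an arbitrary (or even the optimal) rounding function. The conjecture, if true, is true only because the $F_2$ gains from desymmetrizing are cancelled by changes in $F_4,F_6,\ldots$ for the specific optimal $f_K$ — precisely the cancellation that your stage (ii) cannot control, since (as you note, citing Appendix~\ref{sub:F4negative}) $F_4$ need not be nonnegative on nonnegative inputs and the higher moment functions are not understood; the paper lists exactly this as an open question in Section~\ref{S-concl}. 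In addition, the mimicry of Lemma~\ref{keymodifyinglemma} you suggest as an alternative route only manipulates $F_2$ and is therefore subject to the same obstruction. In short: your proposal is a reasonable research outline, and it correctly locates the difficulty, but it does not constitute a proof, and its stage (i) as written contradicts a known fact about the $F_2$ landscape rather than merely leaving a technical step unverified.
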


Our intuition for this conjecture is that this configuration comes from taking the uniform distribution over satisfying assignments where all but one of the $X_i$ are the same, which we expect are the hardest satisfying assignments to distinguish from the unsatisfying assignments.

A conjecture similar to Conjecture~\ref{C-Sym} was also made in Avidor et al.\ \cite{ABZ05}. Note that for $K=\{3\}$, the  conjecture is true and is a corollary of Theorem~\ref{NAE3SATdisttheorem}, as the only hard point with completeness~$1$ is $(-\frac13, -\frac13, -\frac13)$. When $K=\{3,4\}$ or $K = \{4\}$, the conjecture is also true as hyperplane rounding gives a $\frac{7}{8}$ approximation.

However, for $K$ which contain larger $k$, the situation is more subtle. When $k \geq 5$, this point is not the hardest configuration for all rounding functions because it is not a local maximum for $\sum_{i < j \in [k]}{F_2(\bv_i\cdot \bv_j)}$. That said, we conjecture that for the optimal rounding function, any potential increase for $F_2$ is offset by decreases to $F_{4},F_{6},\ldots$, so this is still the hardest configuration for the optimal rounding function.

\begin{conjecture}\label{C-RPR2}
The optimal rounding procedure for \emph{satisfiable}, or almost satisfiable, instances of {\rm MAX NAE-$K$-SAT} is the (one-dimensional) $RPR^2$ procedure with an appropriate rounding function $f=f_K$.
\end{conjecture}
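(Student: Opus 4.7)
The plan is to reduce Conjecture~\ref{C-RPR2} to Conjecture~\ref{C-Sym}. By Raghavendra's theorem the optimal rounding can be realized as multidimensional RPR$^2$ with some $f: \RR^d \to [-1,1]$ (drawn from a distribution, which can be fixed by averaging); assuming C-Sym, it suffices to show that for every such $f$ there exists a one-dimensional $\tilde f: \RR \to [-1,1]$ matching or beating $f$ on the single symmetric configuration with $\bv_i \cdot \bv_j = 1 - 4/k$ for all $i \neq j$.

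The first step is a conditioning argument. Write $\bv_i = \sqrt{\rho}\,\be_0 + \sqrt{1-\rho}\,\be_i$ with $\rho = 1 - 4/k$ and $\be_0,\be_1,\ldots,\be_k$ orthonormal, and condition the $d$ projecting Gaussians on their inner products with $\be_0$, gathered into a common vector $\bg \in \RR^d$. The rounded bits $X_1,\ldots,X_k$ are then conditionally i.i.d.\ with common bias
\[
q(\bg) \;=\; \underset{\bh \sim N(0,I_d)}{\E}\left[\tfrac{1 + f(\sqrt{\rho}\,\bg + \sqrt{1-\rho}\,\bh)}{2}\right],
\]
so the probability of satisfying the NAE-$k$ clause equals $1 - \E_{\bg}[q(\bg)^k + (1-q(\bg))^k]$. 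Since $t \mapsto t^k + (1-t)^k$ is convex on $[0,1]$, this quantity is a convex functional of the distribution of $q(\bg)$ and improves monotonically as that distribution contracts in convex order.

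The key technical claim to establish is then: for every $F:\RR^d \to [0,1]$ there is a one-dimensional $\tilde F:\RR \to [0,1]$ with $\E[\tilde F] = \E[F]$ and
\[
\underset{g' \sim N(0,1)}{\E}\!\left[\phi(T_{\sqrt{\rho}}\tilde F(g'))\right] \;\le\; \underset{\bg \sim N(0,I_d)}{\E}\!\left[\phi(T_{\sqrt{\rho}}F(\bg))\right]
\]
for all convex $\phi:[0,1]\to\RR$, where $T_{\sqrt{\rho}}$ is the Ornstein--Uhlenbeck noise operator. For $k = 2$ this is essentially the Gaussian noise-stability inequality underlying Theorem~\ref{thm:gauss}; in particular for MAX NAE-$\{3\}$-SAT only $F_2$ appears in the Fourier expansion of $\NAE_3$, and the reduction recovers Claim~\ref{claim:nae3-one}.

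The main obstacle is the step from $k = 2$ to larger $k$, where the higher even moments $F_4, F_6, \ldots$ enter. The paper's own Appendix~\ref{sub:F4negative} exhibits a rounding function on which $F_4$ is negative on nonnegative biases, so $F_4$ is not pointwise controlled by $F_2$ and naive mode-by-mode rearrangement cannot win. A plausible route is via the Hermite expansion (Section~\ref{sub-Hermite}): the conjectured non-monotone step structure of $f_K$ suggests a concentrated Hermite spectrum for which the higher moments of $T_{\sqrt{\rho}}F$ might be tracked through $\rho$-weighted Parseval identities. Nevertheless, a rigorous proof appears to require a Gaussian convex-order rearrangement inequality substantially stronger than the classical noise-stability one, and producing it is, in our view, the decisive open step.
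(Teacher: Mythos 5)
There is no proof to compare against here: the statement you are addressing is Conjecture~\ref{C-RPR2}, which the paper leaves open and supports only with the informal intuition that the symmetric configuration splits into a common component plus mutually orthogonal components, and that one-dimensional rounding should be the best way to interact with the common component. Your proposal is essentially a careful elaboration of that same intuition -- conditioning on the projections onto the shared direction $\be_0$, noting the rounded bits become conditionally i.i.d.\ with bias $q(\bg)$, and using convexity of $t\mapsto t^k+(1-t)^k$ to phrase the problem as a convex-order contraction of the law of $q(\bg)$ -- and you are candid that the decisive inequality is not established. So the verdict is a genuine gap, not because your reductions are wrong, but because the statement remains unproven at the end of your argument, exactly as it does in the paper.

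Concretely, three gaps remain. First, your reduction is conditional on Conjecture~\ref{C-Sym}, which is itself open; without it, beating $f$ on the single symmetric configuration says nothing about $\alpha_K$, which is a minimum over all configurations with $\RELAX(\bv_1,\ldots,\bv_k)=1$. Second, the ``key technical claim'' -- a dimension-reduction statement holding simultaneously for all convex test functions $\phi$ applied to the noised function -- is strictly stronger than Theorem~\ref{thm:gauss} (Theorem 4.4 of \cite{OW08}), which controls only the second moment $F_2$; the paper's Appendix~\ref{sub:F4negative} example, where $F_4$ is negative on positive biases, is precisely the evidence that higher even moments do not behave monotonically under rearrangement, so no known symmetrization yields your convex-order inequality, and you rightly flag this as the open step. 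Third, even granting that claim for a fixed $\rho$, the conjecture concerns MAX NAE-$K$-SAT with $|K|\ge 2$: the objective is $\min_{k\in K}\alpha_k(f)$, so one needs a \emph{single} one-dimensional $\tilde f$ that simultaneously matches or beats the multidimensional $f$ at every $\rho=1-\tfrac4k$, $k\in K$ (including the negative value $\rho=-\tfrac13$ for $k=3$); a per-$\rho$ rearrangement could in principle produce different one-dimensional functions for different $k$, and reconciling them is an additional unaddressed step. In short, your framework is a reasonable formalization of the paper's heuristic, and the recovery of Claim~\ref{claim:nae3-one} in the $k\le 3$ case is correct, but the statement itself remains a conjecture both in the paper and after your proposal.
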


Our intuition for this conjecture is that if the hardest configuration is the symmetric configuration in which $\bv_i\cdot \bv_j=1-\frac{4}{k}$, for every $i\ne j$ then these vectors can be split into a common component and a component which is orthogonal to everything else. For more details, see Section \ref{sub-prob}. Our conjecture is that one dimensional rounding schemes are most effective for interacting with this common component. 

Note that Conjecture~\ref{C-RPR2} does not follow from Ragavendra \cite{R08,R09} and UGC, as Ragavendra \cite{R08,R09} uses a high-dimensional version of $RPR^2$.

\begin{conjecture}\label{C-Step}
Furthermore, if $|K|\ge 2$ and $\min K= 3$, then the optimal $RPR^2$ rounding function $f_K$ for {\rm MAX NAE-$K$-SAT} is a \emph{step function} that only assumes the values $+1$ and $-1$.
\end{conjecture}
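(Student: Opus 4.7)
The plan is to attack Conjecture~\ref{C-Step} via a variational analysis in the Hermite basis, assuming Conjectures~\ref{C-Sym} and~\ref{C-RPR2}. Under these we are optimizing a single odd $f:\mathbb R\to[-1,1]$ against, for each $k\in K$, the symmetric configuration with pairwise bias $\rho_k := 1-4/k$. The first step is to rewrite $F_{2\ell}$ at the hardest $k$-clause. For $k\ge 4$ we have $\rho_k\ge 0$, so the sunflower decomposition $\bv_i = \sqrt{\rho_k}\,\be_0 + \sqrt{1-\rho_k}\,\be_i$ yields
\[
F_{2\ell}(\rho_k,\ldots,\rho_k) \;=\; \E_{t\sim N(0,1)}\!\left[(T_{\rho_k}f)(t)^{2\ell}\right], \qquad T_\rho f(t) := \E_g\!\left[f(\sqrt{\rho}\,t+\sqrt{1-\rho}\,g)\right],
\]
where $T_\rho$ diagonalizes the Hermite basis via $T_\rho H_n = \rho^{n/2} H_n$. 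For $k=3$ the only contribution is $F_2(-1/3)$, which for odd $f$ equals $-\sum_{n\text{ odd}}(1/3)^n\hat f_n^2$ in the normalized Hermite coefficients $\hat f_n$.

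The key structural observation is a convex/concave dichotomy in the objective. For every $\rho_k\ge 0$ and every $\ell\ge 1$, $F_{2\ell}(\rho_k,\ldots,\rho_k)$ is a convex functional of $f$ (being $\|T_{\rho_k} f\|_{L^{2\ell}(\gamma)}^{2\ell}$), so $p_k$ is \emph{concave} in $f$ for $k\ge 5$; but $-F_2(-1/3) = \sum_{n\text{ odd}}(1/3)^n\hat f_n^2$ is a positive quadratic form, so $p_3$ is \emph{convex} in odd $f$. When $|K|\ge 2$ and $\min K = 3$, the KKT objective $L(f) := \sum_{k\in K}\lambda_k^*p_k(f)$ (with $\lambda^*$ the subgradient weights of the max-min problem) is therefore a genuine convex-plus-concave sum rather than a purely concave function. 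Since the maximum of a strictly convex functional on a convex body must lie at an extreme point, one should expect $f^*$ to be pushed towards an extreme point of $\{\|f\|_\infty\le 1\}$, i.e., a $\pm1$-valued function.

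The main obstacle, and the heart of the conjecture, is to turn this heuristic into a proof, which amounts to showing that the convex $k=3$ contribution dominates the concave $k\ge 5$ contributions along every feasible perturbation at any candidate interior $f$. The approach I would pursue is a second-variation argument: expanding $L$ in the Hermite basis, compare the coefficients of $\hat f_n^2$ coming from the convex $\lambda_3^*(1/3)^n$ term against those coming from the concave $\lambda_k^*\rho_k^n$ and higher-$\ell$ contributions, and exhibit, for each interior $f$, an odd $\pm 1$-valued perturbation supported near a sign change of $f$ that strictly increases $L$. Carrying this out rigorously is delicate because for $k\ge 7$ one has $\rho_k>1/3$, so the convex term does not uniformly dominate in the $F_2$ part, and one must instead harness the specific structure of the higher-moment terms and the arithmetic of the weights $\binom{k}{2\ell}/2^{k-1}$. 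This step, rather than a clean global convexity argument, is presumably what prevents a full proof and forces the authors to fall back on the experimental evidence of Section~\ref{sub-exper}.
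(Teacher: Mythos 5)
First, note that the statement you were asked to prove is stated in the paper as a \emph{conjecture} (Conjecture~\ref{C-Step}); the paper contains no proof of it, only two pieces of supporting intuition --- the Hermite-coefficient argument of Section~\ref{sub-Hermite} (every extreme point of the truncated Hermite-coefficient body $P_k$ is attained by the sign of a degree-$(2k-1)$ polynomial, hence by a $\pm 1$ step function with few steps, so \emph{if} only the first few Hermite coefficients matter, step functions suffice) and the local two-interval perturbation heuristic of Appendix~\ref{sub-heuristic} --- together with the numerical experiments of Section~\ref{sub-step}. Your proposal likewise does not prove the statement, and to your credit you say so explicitly: the ``second-variation/domination'' step is left open. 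So there is no completed argument on either side to compare; what can be compared is the quality of the heuristics. Your structural observations are correct: for odd $f$ and $\rho_k=1-4/k\ge 0$ one indeed has $F_{2\ell}(\rho_k,\ldots,\rho_k)=\E_{t\sim N(0,1)}\bigl[(\U_{\sqrt{\rho_k}}f)(t)^{2\ell}\bigr]$ (this is exactly the computation in Section~\ref{sub-prob}), which is a convex functional of $f$ since $f\mapsto \U_{\sqrt{\rho_k}}f$ is linear and $g\mapsto\E[g^{2\ell}]$ is convex; hence $p_k$ is concave for $k\ge 5$, while $p_3=\frac{3+3F_2(1/3)}{4}$ is a positive quadratic form plus a constant and hence convex. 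This convex-versus-concave dichotomy is a reasonable reformulation, in variational language, of the paper's own intuition that the size-$3$ and size-$k$ clauses ``pull in opposite directions.''

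The genuine gap is the one you name, but it is worth being precise about why your framing does not close it. The quantity being optimized is $\min_{k\in K}p_k(f)$, a pointwise minimum of one convex and several concave functionals; such a minimum is neither convex nor concave, so the principle ``a convex functional on a convex body is maximized at an extreme point'' does not apply to the actual objective, and passing to a fixed weighted combination $L(f)=\sum_k\lambda_k^* p_k(f)$ begs the question, since the multipliers $\lambda^*$ are themselves determined by the (unknown) optimizer and the resulting $L$ is still convex-plus-concave rather than convex. Moreover, as you observe, the naive coefficientwise comparison fails because $\rho_k>1/3$ for $k\ge 7$, so the convex $F_2$-contribution from the $3$-clauses does not dominate term by term; any rigorous argument would have to exploit the higher moments $F_4,F_6,\ldots$ and the binomial weights, which is precisely the part of the picture the paper admits it does not understand (it cannot even establish convexity of $F_{2\ell}(x,\ldots,x)$ for $\ell\ge 2$, see Section~\ref{S-concl}). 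Your plan also silently assumes Conjectures~\ref{C-Sym} and~\ref{C-RPR2}, so even a successful execution would yield a conditional statement. In short: your heuristic is coherent, broadly parallel to (and somewhat sharper than) the paper's Appendix~\ref{sub-heuristic} argument, but like the paper's it stops exactly where the real difficulty begins, so the statement remains a conjecture under your approach as well.
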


Our intuition for this conjecture is that for $RPR^2$ rounding functions, we can describe its performance in terms of Hermite coefficients (for details of how this works for $F_4$, see Appendix \ref{HermiteF4Appendix}). If it is the case that it's most important to optimize the first few Hermite coefficients, this is accomplished by a $\pm{1}$ step function. For more details, see Section \ref{sub-Hermite}. We also give an alternative heuristic argument in support of this conjecture in Appendix \ref{sub-heuristic}.

We note that the requirement $\min K= 3$ in Conjecture~\ref{C-Step} is important. If $\min K>3$, then it follows from H{\aa}stad \cite{H01} that the optimal rounding procedure is simply choosing a random assignment, which is equivalent to using $RPR^2$ with the function $f(x)=0$, which does not assume only $\pm 1$ values.

\subsection{Hermite expansions of rounding functions}\label{sub-Hermite}

We give a motivation for an improved rounding scheme which does well for MAX NAE-$\{3,5\}$-SAT with perfect completeness. A tool used by \cite{OW08} in studying RPR$^2$ rounding functions is the \emph{Hermite expansion}.\footnote{\cite{OW08} consider the Hermite expansion for multivariate polynomials as well, but we only consider single variable polynomials in this section.} Define the $n$th normalized Hermite polynomial to be~\cite{godsil1981hermite}
\[
H_n(x) \EQ \frac{1}{\sqrt{n!}}\sum_{\ell = 0}^{\lfloor n/2 \rfloor} (-1)^\ell m_\ell(K_n)x^{n - 2\ell}\quad, \quad n = 0, 1, 2, \ldots
\]
Here $m_\ell(K_n)$ is the number of $\ell$-matchings in the complete graph of order $n$. 
\change{The first few polynomials are:
\[ H_0(x) \EQ 1 \quad,\quad H_1(x)\EQ x \quad,\quad H_2(x) \EQ \frac{1}{\sqrt{2}}(x^2-1) \quad,\quad H_3(x) \EQ \frac{1}{\sqrt{6}}(x^3-3x) \;. \]
The first few Hermite polynomials of odd degree are shown in Figure~\ref{F-Hermite}.
The polynomials $H_n(x)$ form} an orthonormal basis with respect to the Gaussian measure on $\mathbb R$. That is,
\[
\int_{-\infty}^{\infty} H_i(x)H_j(x)\phi(x)\,dx \EQ  \begin{cases}
1 & i = j\\
0 & i \neq j
\end{cases}\quad i,j = 0, 1, 2, \ldots,
\]  
where $\phi(x) = \frac{1}{\sqrt{2\pi}}e^{-x^2/2}$.

\begin{figure}[ht]
\centering
\includegraphics[width=3in]{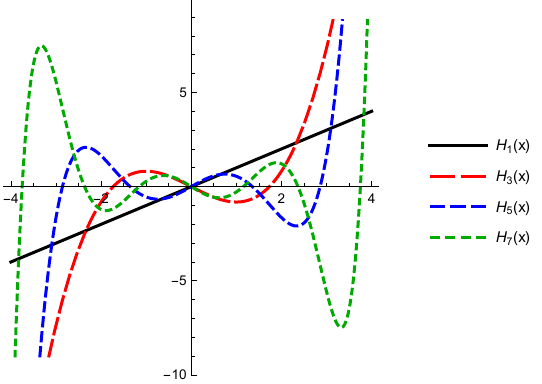}
\caption{The first few normalized Hermite polynomials of odd degree.}\label{F-Hermite}
\end{figure}

Note that any rounding scheme can be described by its Hermite expansion:

\[
f(x) \EQ \sum_{i=1\text{ odd}}^{\infty} c_i H_i(x) \;,
\]
where,
\[
c_i \EQ\langle f(x), H_i(x)\rangle \;:=\; \int_{-\infty}^{\infty} f(x)H_i(x) \phi(x) \,dx\;.
\]

For a positive integer $k \ge 1$, let 
\[
P_k \;:=\; \{\;\bigl(\langle f(x), H_1(x)\rangle, \langle f(x), H_3(x)\rangle, \ldots, \langle f(x), H_{2k-1}(x)\rangle\bigr)\; \mid\; \forall f : \mathbb R \to [-1, 1]\;\}\;.
\]

Note that $P_k$ is a convex set. See Figure~\ref{fig:P2} for a picture of $P_2$.

\begin{claim}
Every $(c_1, \hdots, c_k) \in P_k$ is attained by a step function with at most $2k^2$ steps.
\end{claim}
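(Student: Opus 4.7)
The plan is to combine an LP-duality analysis of the extreme points of $P_k$ with Carath\'eodory's theorem, and then count pieces in a common refinement while exploiting a breakpoint that is always present at the origin. Since $P_k \subset \mathbb R^k$ is a compact convex set (the image of the unit ball of $L^\infty(\mathbb R, \phi\,dx)$ under the bounded linear map $\Phi : f \mapsto (\langle f, H_1\rangle_\phi, \dots, \langle f, H_{2k-1}\rangle_\phi)$), Krein--Milman and Carath\'eodory apply, yielding any $\mathbf c \in P_k$ as a convex combination $\mathbf c = \sum_{j=1}^{k+1} \lambda_j \mathbf c_j^*$ of at most $k+1$ extreme points. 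The heart of the argument is identifying a nicely structured step function realizing each $\mathbf c_j^*$.

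For an exposed point $\mathbf c^*$ of $P_k$, one has a supporting linear functional $\mathbf c \mapsto \sum_{i=1}^k \lambda_i c_i$ uniquely maximized at $\mathbf c^*$. Unpacking the Hermite coefficients, this functional equals $\int f(x) P(x)\phi(x)\,dx$, where $P(x) := \sum_{i=1}^k \lambda_i H_{2i-1}(x)$, whose pointwise maximizer over $\|f\|_\infty \le 1$ is $f^* = \mathrm{sgn}(P)$. Since $P$ is a linear combination of odd-degree Hermite polynomials, it is itself an odd polynomial of degree at most $2k-1$ with $P(0) = 0$, and hence has at most $2k - 1$ real roots, one of which is always the origin. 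Therefore $f^*$ is a $\{-1, +1\}$-valued step function with at most $2k$ pieces, always having a sign change at $x = 0$. By Straszewicz's theorem (exposed points are dense in the extreme points of a compact convex set in $\mathbb R^k$) together with a compactness argument parameterizing step functions by at most $2k-1$ breakpoints in $\mathbb R \cup \{\pm\infty\}$ (allowing pieces to collapse or escape to infinity), this structural property extends from exposed points to all extreme points of $P_k$.

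Writing $g_j$ for the step function attaining $\mathbf c_j^*$, set $f := \sum_{j=1}^{k+1} \lambda_j g_j$. By linearity $\Phi(f) = \mathbf c$, and $|f| \le 1$ as a convex combination of $\pm 1$-valued functions. The function $f$ is piecewise constant on the common refinement of the partitions induced by $g_1, \dots, g_{k+1}$. Each $g_j$ contributes at most $2k-1$ finite breakpoints, for a naive union of size at most $(k+1)(2k-1) = 2k^2 + k - 1$. However, $0$ is a breakpoint of every $g_j$, so this union overcounts by $k$, leaving at most $2k^2 - 1$ distinct breakpoints and hence at most $2k^2$ pieces. The main technical obstacle is the extension of the structural result from exposed to all extreme points: for any non-exposed extreme point one must take a subsequential limit of the step functions $\mathrm{sgn}(P_n)$ along a sequence of approximating exposed points and verify that the limiting step function retains at most $2k$ pieces with a breakpoint at the origin; the degenerate case $P \equiv 0$ corresponds to a trivial supporting functional and can be excluded.
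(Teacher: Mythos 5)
Your proof follows essentially the same route as the paper's: decompose the given point into extreme points of $P_k$ via Carath\'eodory, and realize each extreme point as the sign of an odd linear combination of Hermite polynomials of degree at most $2k-1$, hence a $\pm 1$-valued step function with at most $2k$ steps. The additional care you take is in fact needed: the paper silently treats every extreme point as exposed and asserts a decomposition into only $k$ extreme points, whereas you repair the former via Straszewicz's theorem together with a limiting argument (which is routine here, since each approximating sign function is odd, so the breakpoint at the origin and the bound of $2k-1$ breakpoints survive the limit), and you recover the $2k^2$ bound from the $(k+1)$-point Carath\'eodory decomposition by observing that all the step functions, being odd, share the breakpoint at $0$, giving at most $2k^2-1$ distinct breakpoints and hence at most $2k^2$ steps.
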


\begin{proof}
Since $P_k$ is convex, every point in $P_k$ can be expressed as a convex combination of $k$ extreme points. Thus, it suffices to show that each extreme point of $P_k$ can be attained by a step function with at most $2k$ steps. 

Fix an extreme point $(c_1, c_2, \hdots, c_k) \in P_k$. By definition of being an extreme point there is exists $(\alpha_1, \hdots, \alpha_k)$ such that $\sum_{i=1}^k \alpha_ix_i$ is maximized in $P_k$ at $(c_1, c_2, \hdots, c_k)$. That is, we seek to maximize
\[
\max_{f : \mathbb R \to [-1, 1]} \sum_{i=1}^k \langle f(x), \alpha_iH_{2i-1}(x)\rangle \EQ \max_{f : \mathbb R \to [-1, 1]}  \left\langle f(x), \sum_{i=1}^k\alpha_iH_{2i-1}(x)\right\rangle\;.
\]  
This expression is maximized when $f = \operatorname{sign}\left(\sum_{i=1}^k\alpha_iH_{2i-1}(x)\right)$. Since $\sum_{i=1}^k\alpha_iH_{2i-1}(x)$ is a degree $2k-1$ polynomial, its sign function has at most $2k$ steps.
\end{proof}

\begin{figure}
\centering
\includegraphics[width=7cm]{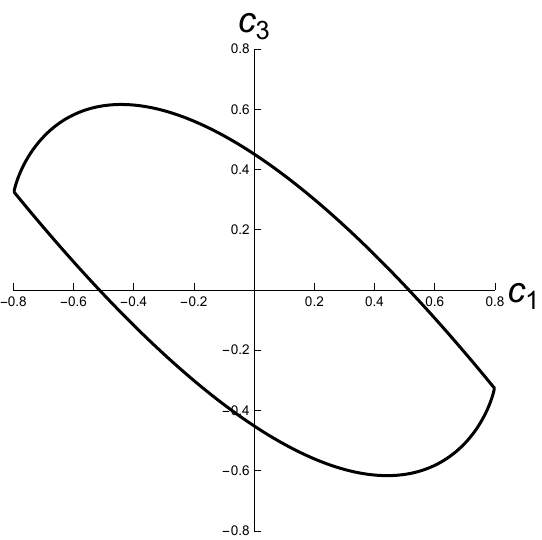}
\caption{The tradeoff between $c_1$ and $c_3$ for extreme RPR$^2$ rounding functions. The \strikeout{marked points} \change{lines shown} are the boundary of~$P_2$.}
\label{fig:P2}
\end{figure}

Thus, if we conjecture that optimizing the first few Hermite coefficients suffices to optimize the approximation ratio of MAX NAE-SAT to high precision, it suffices to look at step functions.

Using MATLAB, we searched through step functions achieving points in $P_2$, and as a result, we came across the following RPR$^2$ rounding function for almost satisfiable instances of MAX NAE-$\{3,5\}$-SAT:
\[
f_\alpha(x) \EQ \begin{cases}
-1 & x < -\alpha\\
\phantom{-}1 & x \in [-\alpha, 0]\\
-1 & x \in (0, \alpha]\\
\phantom{-}1 & x > \alpha
\end{cases} \;,
\]
where $\alpha \approx 2.27519364977$. See Figure~\ref{fig:double-step}.

\begin{figure}[ht]
\centering
\includegraphics[width=12cm]{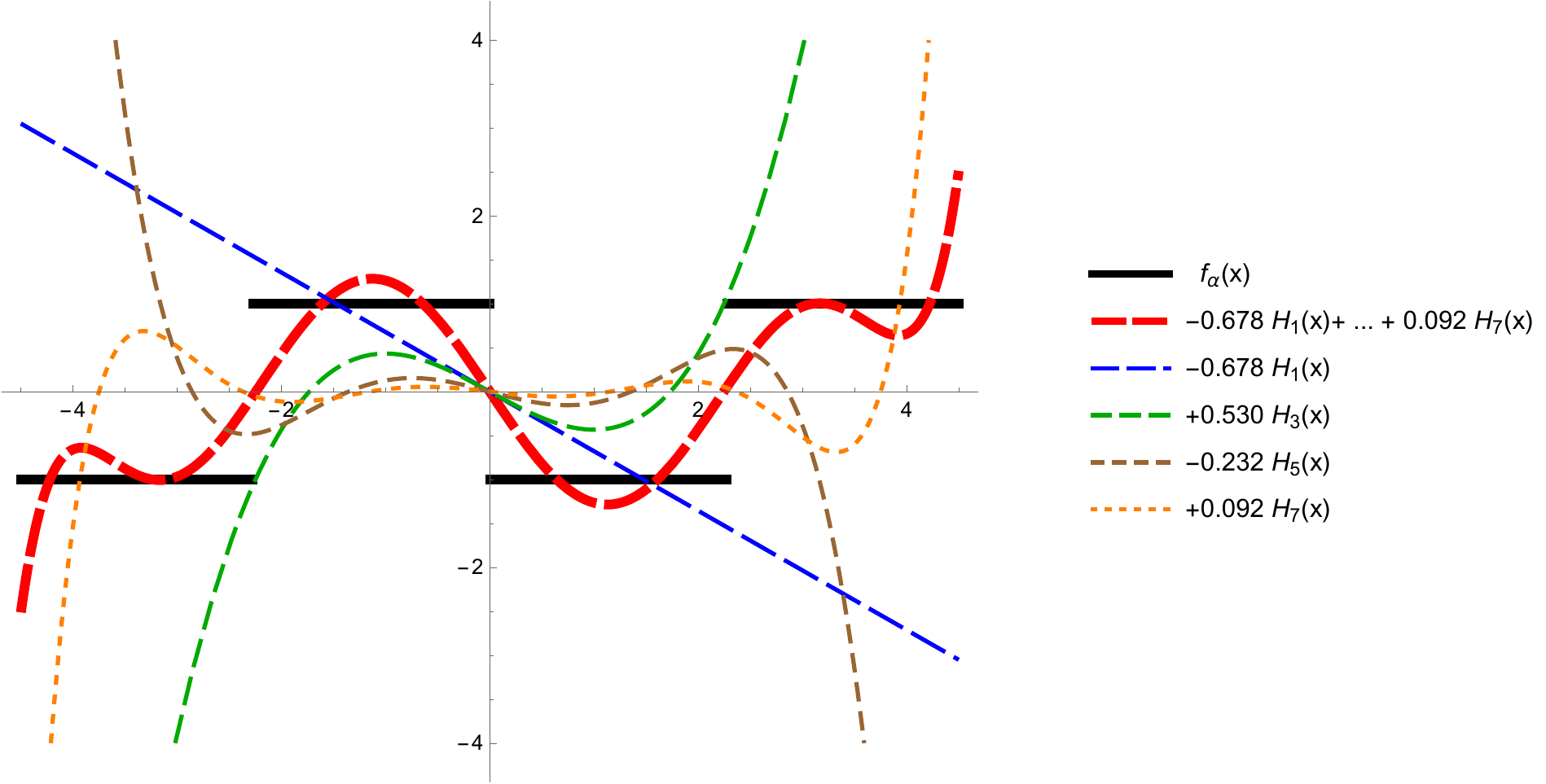}
\caption{The conjectured optimal rounding function $f_\alpha(x)$ for \change{almost-satisfiable instances of} MAX NAE-$\{3,5\}$-SAT. Plotted alongside are first four terms of its Hermite expansion, and the sum of these terms.}
\label{fig:double-step}
\end{figure}

Through numerical experiments, this rounding function achieves an \change{approximation ratio} of approximately $0.87286$ on \change{almost-satisfiable} instances. For more on experiments with step functions, see Section~\ref{sub-step}.

\subsection{Computing probabilities of symmetric configurations}\label{sub-prob}

Consider a collection of $k$ unit vectors $\bv_1,\bv_2,\ldots,\bv_k$ such that $\bv_i\cdot\bv_j=\rho$, for every $i\ne j$. We say that such a configuration is \emph{symmetric}. Let $p_f(k,\rho)=\prob_f(\bv_1,\bv_2,\ldots,\bv_k)$ be the probability that a $\NAE_k$ clause corresponding to these vectors $\bv_1,\bv_2,\ldots,\bv_k$ is satisfied when using rounding function $f$, then:
\[ \displaystyle
p_f(k,\rho) \;=\; \frac{1}{2^{k-1}}\left((2^{k-1}-1)-\sum_{i \text{ even }}\!\!\binom{k}{i} F_i(\rho)\right) \;,\]
where $F_i(\rho)=F_i(\rho,\rho,\ldots,\rho)$. Recall that $F_i$ is the $i$-wise moment function defined in Section~\ref{sub-moment}.

When the pairwise biases are all equal, \change{by Lemma~\ref{lemma:noisyf}, we have
\[
F_{2\ell}(\rho) \EQ \E_{x \sim N(0, 1)}\left[\left(\U_{\sqrt{\rho}}f\left(x\right)\right)^{2\ell}\right] \EQ \int\limits_{-\infty}^\infty (\U_{\sqrt{\rho}}f(x))^{2l} \phi(x)\, dx\;.
\]
This means that} we can evaluate $F_{2\ell}$, where $\ell$ is a positive integer, using two integrals instead of an $2\ell$-dimensional integral. \strikeout{The key idea is to decompose the covariance into a weighted sum of the all-one matrix $J$ and the identity matrix $I$. Assume that we have $2\ell$ unit vectors $\bv_1, \hdots, \bv_{2\ell} \in \mathbb R^n$ with inner product $\rho > 0$ between each pair of them. Then, if we take a random gaussian vector $\br \sim N(0, I_{n})$ and let $x_i = \br \cdot \bv_i$ be the inner product between $\br$ and $\bv_i$, then $x_i \sim N(0, 1)$ and $\Cov[x_i, x_j] = \rho$ for $i \neq j$. The covariance matrix can thus be written as $\rho J + (1-\rho) I$.}

\strikeout{The same distribution can be obtained by the following procedure: first sample $x \sim N(0, \rho)$, and then sample~$2\ell$ independent variables $\varepsilon_i \sim N(0, 1 - \rho)$ where $i = 1, 2, \hdots, 2\ell$, and finally let $x_i = x + \varepsilon_i$. Using this, $F_{2\ell}$ can be computed as follows:}
We note that if $f$ has the Hermite expansion $\sum_{i = 0}^\infty c_iH_i$, then $\U_\eta f$ has the Hermite expansion $\sum_{i = 0}^\infty c_i \eta^i H_i$.

Further observe that
\[ \displaystyle
p_f(k,\rho) = \frac{1}{2^{k-1}}\left((2^{k-1}-1)-\!\!\!\sum_{i \text{ even }}\!\!\binom{k}{i} F_i(\rho)\right) =
1 - \frac{1}{2^k} \int\limits_{-\infty}^\infty \biggl(
\left(1+(\U_{\sqrt{\rho}}f)(x)\right)^k + \left(1-(\U_{\sqrt{\rho}}f)(x)\right)^k
\biggr)\phi(x)\,dx
.\]

\subsubsection*{Formula for step functions}

For every $\ba=(a_1,a_2,\ldots,a_\ell)$, where $0<a_1<a_2<\cdots<a_\ell$, and $-1\le b_0,b_1,\ldots,b_\ell\le 1$, let $f(x)=f_{\ba,\bb}(x)$ be the function such that $f(x)=b_i$, if $a_{i}\le x<a_{i+1}$, for $i=0,1,\ldots,\ell$, where $a_0=0$ and $a_{\ell+1}=\infty$. Also, $f(x)=-f(-x)$, if $x<0$. We say that $f_{\ba,\bb}(x)$ is a $(\ell+1)$-step function, only counting steps to the right of the origin.
It is easy to check that
\[ (\U_{\rho}f)(x) \;=\; \sum_{i=0}^\ell b_i \left(
\Phi\biggl(\frac{a_{i+1}-\rho x}{\sqrt{1-\rho^2}}\biggr) -
\Phi\biggl(\frac{a_i-\rho x}{\sqrt{1-\rho^2}}\biggr) -
\Phi\biggl(\frac{-a_i-\rho x}{\sqrt{1-\rho^2}}\biggr) +
\Phi\biggl(\frac{-a_{i+1}-\rho x}{\sqrt{1-\rho^2}}\biggr)
\right) 
\;,\]
where $\Phi(x)=\int_{-\infty}^x \phi(y)\,dy$ is the cumulative distribution function of the standard normal distribution.

\subsection{Experiments with step functions}\label{sub-step}

For every $\ba=(a_1,a_2,\ldots,a_\ell)$, where $0<a_1<a_2<\cdots<a_\ell$, and $-1\le b_0,b_1,\ldots,b_\ell\le 1$, let $f(x)=f_{\ba,\bb}(x)$, as above, be the function such that $f(x)=b_i$, if $a_{i}\le x<a_{i+1}$, for $i=0,1,\ldots,\ell$, where $a_0=0$ and $a_{\ell+1}=\infty$. Also, $f(x)=-f(-x)$, if $x<0$. We say that $f_{\ba,\bb}(x)$ is a $(\ell+1)$-step function, only counting steps to the right of the origin. We also let $f_\ba(x)$ be the function $f_\ba(x)=f_{\ba,\bb}(x)$, where $b_i=(-1)^{i+1}$, for $i=0,1,\ldots,\ell$. Note that hyperplane rounding uses a $1$-step function $f_{(),(1)}(x)$ with $\ell=0$ and $b_0=1$.

Motivated by the results of Sections~\ref{sub-Hermite} (see also Appendix~\ref{sub-heuristic}), we did extensive numerical experiments with step functions. For a given set $K$, and a given number $k+1$ of steps, we solved a numerical optimization problem in which the variables are $a_1,a_2,\ldots,a_\ell$ and $b_0,b_1,\ldots,b_\ell$. The objective function maximized was $\alpha_K(f_{\ba,\bb})=\min_{k\in K} \alpha_k(f_{\ba,\bb})$, where $\ba=(a_1,a_2,\ldots,a_\ell)$ and $\bb=(b_0,b_1,\ldots,b_\ell)$. When computing $\alpha_k(f_{\ba,\bb})$ we considered only the conjectured hardest configuration of Conjecture~\ref{C-Sym}. As this configuration is symmetric, the probability $\alpha_k(f_{\ba,\bb})=\prob_{f_{\ba,\bb}}(\bv_1,\ldots,\bv_k)=p_{f_{\ba,\bb}}(k,1-\frac{4}{k})$, where $\bv_i\cdot \bv_j=1-\frac{4}{k}$, for $i\le j$, when $k\ge 3$, can be numerically computed using the  formula, which involves integration, given in Section~\ref{sub-prob}.

Most of the experiments were carried out in Mathematica using numerical optimization tools, taking advantage of the possibility of performing numerical calculations with arbitrary precision. The optimization problem of finding the optimal $a_1,a_2,\ldots,a_\ell$ and $b_0,b_1,\ldots,b_\ell$ is a fairly difficult optimization problem, as the objective function $\alpha_K(f_{\ba,\bb})=\min_{k\in K} \alpha_k(f_{\ba,\bb})$ is far from being a convex function. However, as the number of variables is relatively small, we were able to repeat the optimization attempts many times, from different initial points. This gives us some confidence that the best step functions found are close to being the optimal ones.

\strikeout{No counterexamples to the other two conjectures of Section~\ref{sub-conj} were found.} Most of our experiments were devoted to confirming Conjecture~\ref{C-Step}, i.e., that the optimal $RPR^2$ rounding function~$f_K$ for MAX NAE-$K$-SAT, where $|K|\ge 2$ and $\min K\ge 3$, is a $\pm 1$ step function, relying on Conjecture~\ref{C-Sym} stating that the worst configuration $\bv_1,\ldots,\bv_k$, where $k\in K$, for the optimal rounding function~$f_K$ is the symmetric configuration with $\bv_i\cdot\bv_j=1-\frac{4}{k}$, for every $i\ne j$.

The best step functions we found for satisfiable, or almost satisfiable, instances of MAX NAE-$K$-SAT, where $K=\{3,k\}$, for $k=5,6,\ldots,8$, and for $K=\{3,7,8\}$ are given in Table~\ref{T-Step}. What is interesting in these experiments is that when the number of allowed steps is high enough, the optimal step function is a $\pm1$ function, and allowing more steps does not seem to help. 

For example, for $K=\{3,5\}$ the best $2$-step function found is already a $\pm1$ function, identical to the one found in Section~\ref{sub-exper}. We could not improve on this function by allowing more steps.

For $K=\{3,6\}$, the best step function found is a $3$-step functions. For $K=\{3,7\}$ and $K=\{3,8\}$, the best step functions used are $4$-step functions.
For $K=\{3,9\}$ and $K=\{3,10\}$, not shown in the table, the best best results are obtained using $5$-step functions.

Again, we could not improve on these functions by allowing more steps. It is possible, that tiny improvements can be obtained by allowing more steps, but the improvements obtained in the approximation ratio, if any, are likely to be less than $10^{-9}$.

\begin{table}[t]
\begin{center}

\begin{tabular}{|c||c||c|c|}
\hline
$K=\{3,5\}$  & $a_1$ & $b_0$ & $b_1$ \\
\hline\hline
0.870978418 & & 0.863471455 & \\
\hline
0.872886331 &  2.275193649 & $-1$ & $1$ \\
\hline
\end{tabular}

\bigskip

\begin{tabular}{|c||c|c||c|c|c|}
\hline
$K=\{3,6\}$  & $a_1$ & $a_2$ & $b_0$ & $b_1$ & $b_2$\\
\hline\hline
0.869020196 & & & 0.856454637 & & \\
\hline
0.870806446 &  2.251163925 &  & $-1$ & $1$ & \\
\hline
0.870806482 & 2.251064988 & 4.502131583 & $-1$ & $1$ & $-1$ \\
\hline
\end{tabular}

\bigskip

\begin{tabular}{|c||c|c|c||c|c|c|c|}
\hline
$K=\{3,7\}$  & $a_1$ & $a_2$ & $a_3$ & $b_0$ & $b_1$ & $b_2$ & $b_3$  \\
\hline\hline
0.868331573 & & & & 0.853973417 & & & \\
\hline
0.86967887 &  1.617354199 & & 
&  $-1$ &  $-0.443504607$ & & \\
\hline
0.869818822 & 1.955864822 & 2.288418785 & & $-1$ & $1$ & $-1$ & \\
\hline
0.869818822 & 1.955862161 & 2.288413620 & 5.658697297 & $-1$ & $1$ & $-1$ & $1$ \\
\hline
\end{tabular}

\bigskip

\begin{tabular}{|c||c|c|c||c|c|c|c|}
\hline
$K=\{3,8\}$  & $a_1$ & $a_2$ & $a_3$ & $b_0$ & $b_1$ & $b_2$ & $b_3$  \\
\hline\hline
0.868384155 & & & & 0.854163133 & & & \\
\hline
0.869708575 &  1.342323152 & & &  $-1$ &  $-0.637982114$ & & \\
\hline
0.869954386 & 1.783234209 & 2.015766438 & & $-1$ & $1$ & $-1$ & \\
\hline
0.869954931 & 1.782430334 & 2.014523521 & 4.492762885  & $-1$ & $1$ & $-1$ & $1$ \\
\hline
\end{tabular}
\bigskip

\begin{tabular}{|c||c|c|c||c|c|c|c|}
\hline
\{3,7,8\}  & $a_1$ & $a_2$ & $a_3$ & $b_0$ & $b_1$ & $b_2$ & $b_3$  \\
\hline\hline
0.868331573 & & & & 0.853973417 & & & \\
\hline
0.869649096 &  1.486111761 & & &  $-1$ &  $-0.550842608$ & & \\
\hline
0.869809386 &  1.914108264 & 2.216226101 & & $-1$ & $1$ & $-1$ & \\
\hline
0.869809394 &  1.914115410 & 2.216234256 & 5.228184560 & $-1$ & $1$ & $-1$ & $1$\\
\hline
\end{tabular}

\end{center}
\caption{Best step functions for MAX NAE-$K$-SAT with a given number of steps. The first column gives the \change{conjectured} approximation ratio. The other columns give the vectors $\ba$ and $\bb$.}\label{T-Step}
\end{table}

Let $\alpha_K$ be the best ratio found for MAX NAE-$K$-SAT. We currently have
\[ \frac{7}{8}\;=\;\alpha_{3,4} \;>\; \alpha_{3,5} \;>\; \alpha_{3,6} \;>\; \alpha_{3,7} \;<\; \alpha_{3,8} \;<\; \cdots \]
Thus, if we just look at the mixture of two clause sizes, it seems that $\{3,7\}$ is the hardest.

However, it seems that $\alpha_{3,7,8}<\alpha_{3,7}<\alpha_{3,8}$. The best rounding function found for \MN{3,7,8} is a $4$-step $\pm1$ function. (It is likely that a really tiny improvement can be obtained by adding a fifth step, but this was not attempted.) The best functions found for $K=\{3,5\}$ and $K=\{3,7,8\}$ are shown in Figure~\ref{F-35-378-fun}.

\begin{figure}[t]
\centering
\hspace*{0.75cm}
\includegraphics[width=7.25cm]{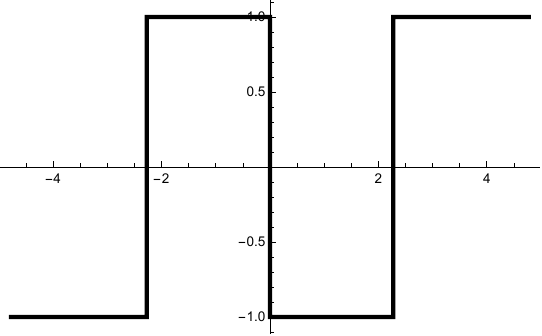}
\hspace*{0.75cm}
\includegraphics[width=7.25cm]{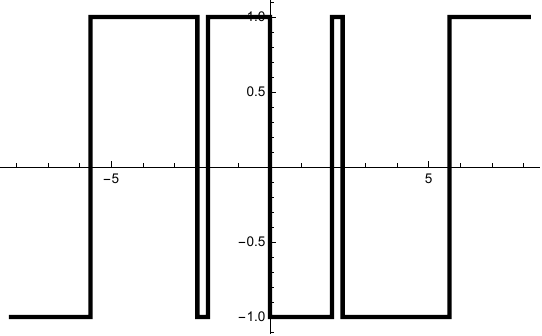}
\caption{The best rounding functions \change{found} for $K=\{3,5\}$ (left) and for $K=\{3,7,8\}$ (right).}\label{F-35-378-fun}
\end{figure}

When using the best function found for $K=\{3,7,8\}$, the satisfaction probabilities of clauses of size 3,7 and~8 are the same, up to numerical error. The satisfaction probability of every clause size $k>4$, $k\ne7,8$ is higher. (As mentioned, We rely on Conjecture~\ref{C-Sym}.) This seems to suggest that approximating satisfiable, or almost satisfiable, instances of MAX NAE-$\{3,7,8\}$-SAT is as hard as approximating satisfiable, or almost satisfiable, instances of MAX NAE-SAT. We thus conjecture that $0.869809$ is the best approximation ratio that can be obtained for almost satisfiable instances of MAX NAE-SAT.

An approximation ratio of $0.863$ for satisfiable instances of MAX NAE-SAT is conjectured in Zwick~\cite{Zwick99a}. This conjectured approximation ratio is obtained using outward rotation, which is a special case of $RPR^2$ rounding. Our new conjectured value of $0.869809$ is not much larger than the previously conjectured result. However, we believe that the rounding function for $K=\{3,7,8\}$ shown in Table~\ref{T-Step} and Figure~\ref{F-35-378-fun} is essentially the best rounding function for satisfiable, or almost satisfiable, instances of MAX NAE-SAT.

If Conjecture~\ref{C-Step} is true, i.e., the optimal rounding function is a step function, an interesting theoretical question would be whether the optimal function has a finite or
or an infinite number of steps. We speculate that a finite number of steps is enough.

We have made attempts to verify that the best step functions we found are at least local maxima, and that they cannot be improved by adding more steps.

Figure~\ref{F-35} shows the effect of trying to add a second breakpoint, i.e., a third step, for $K=\{3,5\}$.  The first breakpoint is held fixed at $a_1\approx 2.275193649773$ and the probabilities for clauses of sizes 3 and 5 are shown as functions of $a_2$. (The approximation ratio obtained using the best 2-step function is subtracted from these probabilities. Both graphs show the same functions.) The two probabilities do cross each other around $a_2\approx 7$, but the two functions are increasing at that point, suggesting that the optimal value of $a_2$ is $+\infty$, i.e., it is not beneficial to add a second breakpoint and a third step.

\begin{figure}[t]
\centering
\includegraphics[width=7.7cm]{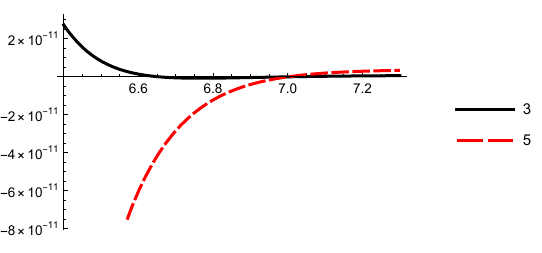}
\hspace*{0.75cm}
\includegraphics[width=7.7cm]{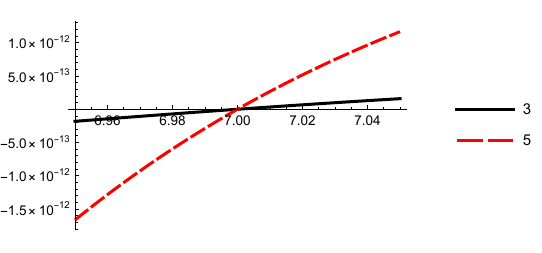}
\caption{The effect of trying to add a second breakpoint for $\{3,5\}$. The two functions shown in the two figures are the satisfaction \change{relative} probabilities of clauses of size 3 and 5, assuming Conjecture~\ref{C-Sym}, as a function of~$a_2$. \change{All probabilities are relative to the satisfaction probability of the model when $a_2 \approx 7$.} The graph \change{on} the right is a close-up on $a_2\approx 7$. 
}\label{F-35}
\end{figure}

Graphs showing the effect of adding the fourth step \change{around $a_3 \approx 5.228184560$ and the} fifth step around $a_4\approx 9.279617526380$, for $K=\{3,7,8\}$, are shown in Figure~\ref{F-378}. In the graph on the left, $a_1$ $a_2$ are fixed to the optimal setting for a 3-step function, and the probabilities for clauses of sizes 3,7 and 8 are shown as a function of $a_3$. The graph on the right fixes $a_1,a_2$ and $a_3$ and shows the probabilities as a function of~$a_4$. It is possible that a really tiny improvement may be obtained by adding a fifth step.

\begin{figure}[t]
\centering
\includegraphics[width=7.7cm]{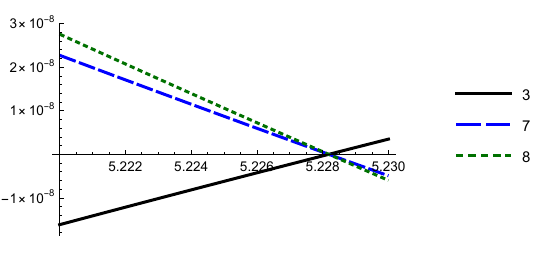}
\hspace*{0.75cm}
\includegraphics[width=7.7cm]{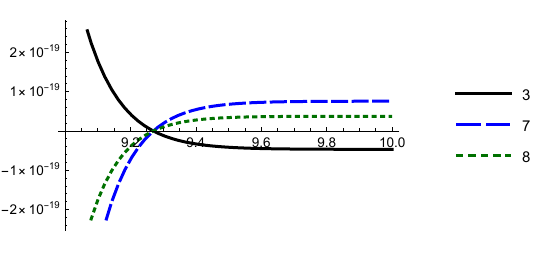}
\caption{The effect of adding the third and fourth breakpoints for $K=\{3,7,8\}$ \change{(c.f., Figure~\ref{F-35}}).} \change{Probabilities are relative to adding a breakpoint at $a_3 \approx 5.2282$ and $a_4 \approx 9.2796$, respectively.}\label{F-378}
\end{figure}

\section{Concluding Remarks and Open Questions}\label{S-concl}

We presented the first improved hardness of approximation result for MAX NAE-SAT in nearly two decades, showing that no $\frac78$-approximation algorithm can be obtained for the problem, assuming UGC. We also presented an optimal algorithm for MAX NAE-$\{3\}$-SAT, again assuming UGC. Finally, we presented an algorithm for (almost) satisfiable instances of MAX NAE-SAT that we conjecture to be nearly optimal. 

What we find striking is the contrast between MAX CUT and MAX NAE-$\{3\}$-SAT, in which the optimal rounding functions are smooth (except for near-perfect completeness, where we get hyperplane rounding), monotone functions obtained as a solution of an integral equations, and MAX NAE-$\{3,5\}$-SAT, MAX NAE-$\{3,7,8\}$-SAT and NAX NAE-SAT in which, at least for almost satisfiable instances, the apparent optimal rounding functions are non-monotone step functions that only assume $\pm1$ values, with a surprisingly small number of steps. 

Many problems remain. The obvious open problems are proving, or disproving, Conjectures~\ref{C-Sym}, \ref{C-RPR2} and~\ref{C-Step}. (Not to mention Conjecture~\ref{C-UGC}\ldots)

It would also be interesting to extend our results to approximating general MAX NAE-SAT, without assuming that instances are (almost) satisfiable. Initial investigations seem to suggest that more surprises lurk there.

More technically, we understand $F_2$ relatively well and used it to analyze \change{NAE} clauses of length 2 and~3. However, higher moments functions are not well understood and several natural questions can be asked about them. For example, we know that $F_2(x)$ is convex in $x$ for $x \ge 0$. Does $F_{2\ell}(x):=F_{2\ell}(x, \ldots, x)$ have the same property for $\ell \ge 2$? Having a better understanding of these higher moments can lead to a more rigorous analysis for clauses of length greater than 3.

Finally, it would be interesting to see if some of the ideas introduced in this paper can be used to decide whether there is a $\frac{7}{8}$-approximation algorithm for MAX SAT. What makes MAX SAT potentially easier than MAX NAE-SAT is that we can take advantage of individual biases. On the other hand, the search for optimal rounding functions becomes harder.

\section*{Acknowledgments}
We thank Ryan O'Donnell for sharing the code used to generate the experimental results in \cite{OW08}. \change{We also thank the anonymous reviewers for numerous useful comments. Some plots in this paper were made using Numpy~\cite{harris2020array} and Matplotlib~\cite{Hunter:2007}.}

\bibliographystyle{plain}
\bibliography{refs}

\begin{thebibliography}{10}

\bibitem{AE98}
Gunnar Andersson and Lars Engebretsen.
\newblock Better approximation algorithms for set splitting and not-all-equal
  {SAT}.
\newblock {\em Information Processing Letters}, 65(6):305--311, 1998.

\bibitem{ALMSS98}
Sanjeev Arora, Carsten Lund, Rajeev Motwani, Madhu Sudan, and Mario Szegedy.
\newblock Proof verification and the hardness of approximation problems.
\newblock {\em Journal of the ACM}, 45(3):501--555, 1998.

\bibitem{AS98}
Sanjeev Arora and Shmuel Safra.
\newblock Probabilistic checking of proofs: A new characterization of np.
\newblock {\em Journal of the ACM}, 45(1):70--122, 1998.

\bibitem{AW02}
Takao Asano and David~P Williamson.
\newblock Improved approximation algorithms for max sat.
\newblock {\em Journal of Algorithms}, 42(1):173--202, 2002.

\bibitem{Austrin07}
Per Austrin.
\newblock Balanced {MAX 2-SAT} might not be the hardest.
\newblock In {\em Proc.\ of 39th STOC}, pages 189--197, 2007.

\bibitem{Austrin10}
Per Austrin.
\newblock Towards sharp inapproximability for any 2-{CSP}.
\newblock {\em SIAM Journal on Computing}, 39(6):2430--2463, 2010.

\bibitem{ABZ05}
Adi Avidor, Ido Berkovitch, and Uri Zwick.
\newblock Improved approximation algorithms for {MAX} {NAE-SAT} and {MAX}
  {SAT}.
\newblock In {\em Approximation and Online Algorithms, Third International
  Workshop, {WAOA} 2005}, volume 3879 of {\em Lecture Notes in Computer
  Science}, pages 27--40. Springer, 2005.

\bibitem{BHPZ22}
Joshua Brakensiek, Neng Huang, Aaron Potechin, and Uri Zwick.
\newblock Separating max 2-and, max di-cut and max cut.
\newblock In {\em 2023 IEEE 64th Annual Symposium on Foundations of Computer
  Science (FOCS)}, pages 234--252. IEEE, 2023.

\bibitem{BHZ24}
Joshua Brakensiek, Neng Huang, and Uri Zwick.
\newblock Tight approximability of {MAX 2-SAT} and relatives, under {UGC}.
\newblock In {\em Proceedings of the 35th SODA}, pages 1328--1344.

\bibitem{BCR15}
Jonah Brown{-}Cohen and Prasad Raghavendra.
\newblock Combinatorial optimization algorithms via polymorphisms.
\newblock {\em CoRR}, abs/1501.01598, 2015.

\bibitem{bulatov2000constraint}
Andrei~A Bulatov, Andrei~A Krokhin, and Peter Jeavons.
\newblock Constraint satisfaction problems and finite algebras.
\newblock In {\em International Colloquium on Automata, Languages, and
  Programming}, pages 272--282. Springer, 2000.

\bibitem{FG95}
Uriel Feige and Michel Goemans.
\newblock Approximating the value of two prover proof systems, with
  applications to {MAX 2SAT} and {MAX DICUT}.
\newblock In {\em Proceedings Third Israel Symposium on the Theory of Computing
  and Systems}, pages 182--189. IEEE, 1995.

\bibitem{FL06}
Uriel Feige and Michael Langberg.
\newblock The {RPR${}^2$} rounding technique for semidefinite programs.
\newblock {\em Journal of Algorithms}, 60(1):1--23, 2006.

\bibitem{godsil1981hermite}
Chris~D Godsil.
\newblock Hermite polynomials and a duality relation for matchings polynomials.
\newblock {\em Combinatorica}, 1(3):257--262, 1981.

\bibitem{GW95}
Michel~X Goemans and David~P Williamson.
\newblock Improved approximation algorithms for maximum cut and satisfiability
  problems using semidefinite programming.
\newblock {\em Journal of the ACM}, 42(6):1115--1145, 1995.

\bibitem{HZ01}
Eran Halperin and Uri Zwick.
\newblock Approximation algorithms for {MAX 4-SAT} and rounding procedures for
  semidefinite programs.
\newblock {\em Journal of Algorithms}, 40(2):184--211, 2001.

\bibitem{harris2020array}
Charles~R. Harris, K.~Jarrod Millman, St{\'{e}}fan~J. van~der Walt, Ralf
  Gommers, Pauli Virtanen, David Cournapeau, Eric Wieser, Julian Taylor,
  Sebastian Berg, Nathaniel~J. Smith, Robert Kern, Matti Picus, Stephan Hoyer,
  Marten~H. van Kerkwijk, Matthew Brett, Allan Haldane, Jaime~Fern{\'{a}}ndez
  del R{\'{i}}o, Mark Wiebe, Pearu Peterson, Pierre G{\'{e}}rard-Marchant,
  Kevin Sheppard, Tyler Reddy, Warren Weckesser, Hameer Abbasi, Christoph
  Gohlke, and Travis~E. Oliphant.
\newblock Array programming with {NumPy}.
\newblock {\em Nature}, 585(7825):357--362, September 2020.

\bibitem{H01}
Johan H{\aa}stad.
\newblock Some optimal inapproximability results.
\newblock {\em Journal of the ACM}, 48(4):798--859, 2001.

\bibitem{HP19}
Neng Huang and Aaron Potechin.
\newblock On the approximability of presidential type predicates.
\newblock {\em arXiv preprint arXiv:1907.04451}, 2019.

\bibitem{Hunter:2007}
J.~D. Hunter.
\newblock Matplotlib: A 2d graphics environment.
\newblock {\em Computing in Science \& Engineering}, 9(3):90--95, 2007.

\bibitem{isserlis1918formula}
Leon Isserlis.
\newblock On a formula for the product-moment coefficient of any order of a
  normal frequency distribution in any number of variables.
\newblock {\em Biometrika}, 12(1/2):134--139, 1918.

\bibitem{KZ97}
Howard Karloff and Uri Zwick.
\newblock A 7/8-approximation algorithm for {MAX 3SAT}?
\newblock In {\em Proc.\ of 38th FOCS}, pages 406--415. IEEE, 1997.

\bibitem{khot02}
Subhash Khot.
\newblock On the power of unique 2-prover 1-round games.
\newblock In {\em FOCS 2002}, pages 767--775, 2002.

\bibitem{KKMO07}
Subhash Khot, Guy Kindler, Elchanan Mossel, and Ryan O’Donnell.
\newblock Optimal inapproximability results for max-cut and other 2-variable
  {CSP}s?
\newblock {\em SIAM Journal on Computing}, 37(1):319--357, 2007.

\bibitem{leoni2017first}
Giovanni Leoni.
\newblock {\em A first course in Sobolev spaces}.
\newblock American Mathematical Soc., 2017.

\bibitem{LLZ02}
Michael Lewin, Dror Livnat, and Uri Zwick.
\newblock Improved rounding techniques for the {MAX 2-SAT} and {MAX DI-CUT}
  problems.
\newblock In {\em International Conference on Integer Programming and
  Combinatorial Optimization}, pages 67--82. Springer, 2002.

\bibitem{MM17}
Konstantin Makarychev and Yury Makarychev.
\newblock {Approximation Algorithms for CSPs}.
\newblock In Andrei Krokhin and Stanislav Zivny, editors, {\em The Constraint
  Satisfaction Problem: Complexity and Approximability}, volume~7 of {\em
  Dagstuhl Follow-Ups}, pages 287--325. Schloss Dagstuhl--Leibniz-Zentrum fuer
  Informatik, Dagstuhl, Germany, 2017.

\bibitem{MM03}
Shiro Matuura and Tomomi Matsui.
\newblock New approximation algorithms for {MAX 2SAT} and {MAX DICUT}.
\newblock {\em Journal of the Operations Research Society of Japan},
  46(2):178--188, 2003.

\bibitem{OW08}
Ryan O'Donnell and Yi~Wu.
\newblock An optimal {SDP} algorithm for {Max-Cut}, and equally optimal long
  code tests.
\newblock In {\em Proc.\ of 40th STOC}, pages 335--344, 2008.
\newblock Fuller version:
  {https://www.cs.cmu.edu/$\sim$odonnell/papers/optimal-max-cut.pdf}.

\bibitem{O14}
Ryan O’Donnell.
\newblock {\em Analysis of Boolean Functions}.
\newblock Cambridge University Press, USA, 2014.

\bibitem{polyanin2008handbook}
Andrei~D. Polyanin and Alexander~V. Manzhirov.
\newblock {\em Handbook of integral equations}.
\newblock CRC press, 2008.

\bibitem{P19}
Aaron Potechin.
\newblock On the approximation resistance of balanced linear threshold
  functions.
\newblock In {\em Proc.\ of 51st STOC}, pages 430--441, 2019.

\bibitem{R08}
Prasad Raghavendra.
\newblock Optimal algorithms and inapproximability results for every {CSP}?
\newblock In {\em Proc.\ of 40th STOC}, pages 245--254, 2008.

\bibitem{R09}
Prasad Raghavendra.
\newblock {\em Approximating {NP}-hard Problems - Efficient Algorithms and
  their Limits}.
\newblock PhD thesis, University of Washington, 2009.

\bibitem{RS09}
Prasad Raghavendra and David Steurer.
\newblock How to round any {CSP}.
\newblock In {\em Proc.\ of 50th FOCS}, pages 586--594. IEEE, 2009.

\bibitem{Sjogren09}
Henrik Sj{\"o}gren.
\newblock Rigorous analysis of approximation algorithms for {MAX 2-CSP}.
\newblock Master's thesis, KTH Royal Institute of Technology, 2009.

\bibitem{ZYH04}
Jiawei Zhang, Yinyu Ye, and Qiaoming Han.
\newblock Improved approximations for max set splitting and max {NAE SAT}.
\newblock {\em Discrete Applied Mathematics}, 142(1-3):133--149, 2004.

\bibitem{zhuk2017proof}
Dmitriy Zhuk.
\newblock A proof of {CSP} dichotomy conjecture.
\newblock In {\em Proc.\ of 58th FOCS}, pages 331--342. IEEE, 2017.

\bibitem{Zwick99a}
Uri Zwick.
\newblock Outward rotations: {A} tool for rounding solutions of semidefinite
  programming relaxations, with applications to {MAX} {CUT} and other problems.
\newblock In {\em Proc.\ of 31th STOC}, pages 679--687. {ACM}, 1999.

\bibitem{zwick02}
Uri Zwick.
\newblock Computer assisted proof of optimal approximability results.
\newblock In {\em Proceedings of the 13th SODA}, pages 496--505, 2002.

\end{thebibliography}

\appendix

\section{Hermite Decomposition of \texorpdfstring{$F_4$}{F4}}\label{HermiteF4Appendix}

In this appendix, we express $F_4$ in terms of the Hermite coefficients of the rounding function, as is previously done for $F_2$. We recall the following theorem from probability theory.
\begin{theorem}[Isserlis, e.g., \cite{isserlis1918formula}]
Let $X_1, X_2, \ldots, X_{2n}$ be jointly gaussian random variables such that $\E[X_i] = 0$ for every $i \in [2n]$, then
\begin{align*}
    & \E[X_1X_2\cdots X_{2n}] \EQ \sum_{M}\prod_{\{i, j\} \in M}\E[X_iX_j]\;, \\
    & \E[X_1X_2\cdots X_{2n-1}] \EQ 0\;.
\end{align*}
Here the summation $\sum_M$ runs over every perfect matching $M$ of the complete graph $K_{2n}$.
\end{theorem}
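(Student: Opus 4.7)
The plan is to derive Isserlis' theorem from the moment generating function (MGF) of a centered multivariate Gaussian. For jointly Gaussian $X_1, \ldots, X_m$ with $\E[X_i] = 0$ and covariance matrix $\Sigma = (\sigma_{ij})$ where $\sigma_{ij} = \E[X_i X_j]$, the MGF takes the familiar quadratic-exponential form
\[
M(\mathbf t) \;=\; \E\!\left[\exp\!\left(\sum_{i=1}^m t_i X_i\right)\right] \;=\; \exp\!\left(\tfrac{1}{2}\sum_{i,j} \sigma_{ij} t_i t_j\right)\;.
\]
The mixed moment we want is then obtained by the standard identity
\[
\E[X_1 X_2 \cdots X_k] \;=\; \left.\frac{\partial^k M(\mathbf t)}{\partial t_1\, \partial t_2 \cdots \partial t_k}\right|_{\mathbf t = 0}\;,
\]
which equivalently equals $k!$ times the coefficient of the squarefree monomial $t_1 t_2 \cdots t_k$ in the Taylor expansion of $M(\mathbf t)$ at $0$.

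Next, I would Taylor-expand $M(\mathbf t) = \sum_{\ell \ge 0} \tfrac{1}{\ell!} Q(\mathbf t)^\ell$, where $Q(\mathbf t) = \tfrac{1}{2}\sum_{i,j}\sigma_{ij} t_i t_j$. Each factor of $Q$ is degree $2$, so $Q^\ell$ is degree $2\ell$; in particular, the coefficient of $t_1 \cdots t_k$ vanishes unless $k = 2\ell$ is even. This immediately yields the second part of the theorem, $\E[X_1 \cdots X_{2n-1}] = 0$. For $k = 2n$ even, only the $\ell = n$ term contributes, so we must extract the coefficient of $t_1 \cdots t_{2n}$ from $\tfrac{1}{n!}\bigl(\tfrac{1}{2}\sum_{i,j}\sigma_{ij} t_i t_j\bigr)^n$.

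The combinatorial heart of the argument is tracking this coefficient. Expanding the $n$-th power gives a sum over ordered tuples $\bigl((i_1, j_1), \ldots, (i_n, j_n)\bigr)$ of products $\prod_{r=1}^n \sigma_{i_r j_r}\, t_{i_r} t_{j_r}$, and for this to produce the monomial $t_1 t_2 \cdots t_{2n}$, the multiset $\{i_1, j_1, \ldots, i_n, j_n\}$ must be exactly $\{1, 2, \ldots, 2n\}$. Each unordered perfect matching $M$ of $[2n]$ corresponds to exactly $n! \cdot 2^n$ such ordered tuples ($n!$ orderings of the $n$ pairs and $2$ orderings within each pair), and these factors cancel precisely against the $\tfrac{1}{n!} \cdot \tfrac{1}{2^n}$ in the expansion, leaving contribution $\prod_{\{i,j\} \in M}\sigma_{ij}$ per matching. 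Multiplying by $k! = (2n)!$ and dividing by the $(2n)!$ from the derivative-to-coefficient conversion (which I would verify carefully, noting that $\partial_{t_1} \cdots \partial_{t_{2n}}$ applied to $t_1 \cdots t_{2n}$ gives $1$) yields $\E[X_1 \cdots X_{2n}] = \sum_M \prod_{\{i,j\}\in M}\sigma_{ij}$, as claimed.

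The main obstacle is the combinatorial bookkeeping in the last step: keeping the $n!$, $2^n$, and derivative normalizations straight so that each matching contributes with weight exactly one. I would handle this by first verifying the identity on a small case (say $n=2$, matching against the known formula $\E[X_1X_2X_3X_4] = \sigma_{12}\sigma_{34} + \sigma_{13}\sigma_{24} + \sigma_{14}\sigma_{23}$), which both checks the counting and cleanly exhibits the bijection between terms in the Taylor expansion and perfect matchings of $[2n]$.
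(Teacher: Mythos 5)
Your proposal is correct, but note that the paper does not actually prove this statement at all: it is Isserlis' (Wick's) theorem, quoted as a classical result with a citation to Isserlis (1918), and then used as a black box to compute Hermite moments in Appendix A. Your moment-generating-function argument is the standard textbook proof and fills in exactly what the paper leaves to the reference: expanding $\exp\bigl(\tfrac12\sum_{i,j}\sigma_{ij}t_it_j\bigr)$, observing that odd squarefree monomials cannot occur (giving the vanishing of odd moments), and matching each unordered perfect matching to $n!\,2^n$ ordered tuples so that the $\tfrac1{n!}\cdot\tfrac1{2^n}$ normalization leaves each matching with weight exactly one. The combinatorial core is right; the only slip is in your bookkeeping prose: since the monomial $t_1\cdots t_k$ is squarefree, the mixed partial $\partial_{t_1}\cdots\partial_{t_k}M(\mathbf t)\big|_{\mathbf t=0}$ equals the coefficient of $t_1\cdots t_k$ exactly, with no factor of $k!$ — your claim that it is ``$k!$ times the coefficient'' and the later ``dividing by the $(2n)!$ from the derivative-to-coefficient conversion'' are both off, but the two spurious factors cancel, and your own parenthetical observation that $\partial_{t_1}\cdots\partial_{t_{2n}}$ applied to $t_1\cdots t_{2n}$ gives $1$ is precisely the fact that repairs the statement. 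One alternative route worth knowing (closer in spirit to how such identities are often proved inductively) is Gaussian integration by parts / Stein's identity, $\E[X_1 g(X_2,\ldots,X_{2n})]=\sum_{j\ge 2}\E[X_1X_j]\,\E[\partial_j g]$, which peels off one variable at a time and yields the matching sum by induction; your generating-function proof is shorter and handles the odd case for free.
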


\begin{corollary}
Let $i, j, k, l$ be nonnegative integers, $X_1, X_2, X_3, X_4$ be jointly gaussian zero-mean random variables such that $\E[X_sX_t] = \rho_{st}$ for every $1 \leq s < t \leq 4$, then
\[
\E[H_i(X_1)H_j(X_2)H_k(X_3)H_l(X_4)] \EQ \sum_{\substack{a,b,c,d,e,f:\\a+b+c = i \\a+d+e = j \\b+d+f = k\\c+e+f = l}}\frac{\sqrt{i!j!k!l!}}{a!b!c!d!e!f!} \cdot \rho_{12}^a\rho_{13}^b\rho_{14}^c\rho_{23}^d\rho_{24}^{e}\rho_{34}^f\;.
\]
\end{corollary}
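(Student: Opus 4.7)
My plan is to prove this corollary via a generating function argument, combining the exponential generating function of the normalized Hermite polynomials with the moment generating function of the jointly Gaussian vector $(X_1, X_2, X_3, X_4)$ (with unit variances, as is implicit in the setting where $X_s = \br\cdot\bv_s$ for unit vectors $\bv_s$). The key identity, easily verified from the definition of $H_n$ given in the excerpt, is
\[
\sum_{n \ge 0} \frac{H_n(x)}{\sqrt{n!}}\, t^n \;=\; \exp\!\bigl(tx - t^2/2\bigr).
\]

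First I would form the four-variable generating series
\[
G(t_1, t_2, t_3, t_4) \;:=\; \E\!\left[\prod_{s=1}^{4} \exp\bigl(t_s X_s - t_s^2/2\bigr)\right].
\]
Expanding each factor inside the expectation via the Hermite generating identity and using linearity, this equals $\sum_{i,j,k,l \ge 0} \E[H_i(X_1) H_j(X_2) H_k(X_3) H_l(X_4)] \cdot t_1^i t_2^j t_3^k t_4^l /\sqrt{i!\,j!\,k!\,l!}$. On the other hand, the standard Gaussian MGF computation gives $\E[\exp(\sum_s t_s X_s)] = \exp\bigl(\tfrac12 \sum_s t_s^2 + \sum_{s<t} \rho_{st}\, t_s t_t\bigr)$, so the diagonal terms $t_s^2/2$ cancel against the $-t_s^2/2$ inside $G$, leaving
\[
G(t_1,t_2,t_3,t_4) \;=\; \exp\!\Bigl(\sum_{s<t} \rho_{st}\, t_s t_t\Bigr) \;=\; \prod_{s<t} \sum_{n_{st} \ge 0} \frac{\rho_{st}^{n_{st}}}{n_{st}!}\, t_s^{n_{st}} t_t^{n_{st}}.
\]

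To conclude I would extract the coefficient of $t_1^i t_2^j t_3^k t_4^l$ from both expressions. Relabelling $a = n_{12}$, $b = n_{13}$, $c = n_{14}$, $d = n_{23}$, $e = n_{24}$, $f = n_{34}$, the exponents of $t_1, t_2, t_3, t_4$ on the product side yield exactly the four linear constraints $a+b+c=i$, $a+d+e=j$, $b+d+f=k$, $c+e+f=l$, with coefficient $\prod_{s<t} \rho_{st}^{n_{st}}/n_{st}!$. Multiplying through by $\sqrt{i!\,j!\,k!\,l!}$ reproduces the stated formula. The argument is essentially bookkeeping once both generating functions are in place, so no genuine obstacle arises; the only subtlety is keeping track of the $1/\sqrt{n!}$ normalization, which is responsible for the $\sqrt{i!\,j!\,k!\,l!}$ factor in the numerator rather than $i!\,j!\,k!\,l!$. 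An alternative route would be to expand each $H_n$ as a polynomial and apply Isserlis' theorem directly, grouping perfect matchings of the multiset $\{X_1^{(1)},\ldots,X_1^{(i)},X_2^{(1)},\ldots,X_4^{(l)}\}$ according to which pair $\{s,t\}$ each matched edge belongs to; this is strictly more combinatorial bookkeeping but has the merit of making the cited Isserlis theorem immediately operative.
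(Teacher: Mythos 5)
Your proposal is correct, but it proves the corollary by a genuinely different route than the paper. You work with the exponential generating function $\sum_{n\ge 0} H_n(x)t^n/\sqrt{n!}=\exp(tx-t^2/2)$ and the Gaussian moment generating function, so that the cross terms $\exp\bigl(\sum_{s<t}\rho_{st}t_st_t\bigr)$ survive after the diagonal $t_s^2/2$ terms cancel, and the formula drops out by coefficient extraction; the paper instead expands each $H_n$ via its matching-polynomial coefficients $(-1)^\ell m_\ell(K_n)$, applies the Isserlis theorem stated just before the corollary to each monomial $\E[X_1^{i-2t_1}X_2^{j-2t_2}X_3^{k-2t_3}X_4^{l-2t_4}]$, and then argues combinatorially that every perfect matching containing a within-block edge is counted with cancelling signs (the $-1$ from $m_\ell$ versus $\E[X_s^2]=1$), leaving only cross-block matchings, which it counts by multinomial coefficients. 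Your generating-function argument effectively automates exactly that cancellation (the $-t_s^2/2$ term plays the role of the signed within-block matchings), so it is shorter and less error-prone, at the cost of two small points you should make explicit: (i) the interchange of expectation with the four-fold infinite expansion (equivalently, differentiating $\E[\exp(\sum_s t_sX_s)]$ under the expectation to identify the Taylor coefficients of $G$ as $\E[H_iH_jH_kH_l]/\sqrt{i!j!k!l!}$), which is standard but deserves a sentence; and (ii) the unit-variance assumption $\E[X_s^2]=1$, which you correctly flag as implicit and which the paper's proof also uses silently. Your closing remark that one could instead expand and invoke Isserlis directly, sorting matched pairs by which block-pair they join, is precisely the paper's proof; the cancellation of same-block edges is the one nontrivial step that route requires and that your primary argument sidesteps.
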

\begin{proof}
By linearity of expectation, we have
\begin{align*}
    & \E[H_i(X_1)H_j(X_2)H_k(X_3)H_l(X_4)] \\
  \EQ & \sum_{t_1 = 0}^{\lfloor i/2 \rfloor}\sum_{t_2 = 0}^{\lfloor j/2 \rfloor}\sum_{t_3 = 0}^{\lfloor k/2 \rfloor}\sum_{t_4 = 0}^{\lfloor l/2 \rfloor}\E\left[\textstyle\frac{(-1)^{t_1} m_{t_1}(K_i)X_1^{i - 2t_1}\cdot(-1)^{t_2} m_{t_2}(K_j)X_2^{j - 2t_2}\cdot(-1)^{t_3} m_{t_3}(K_k)X_3^{k - 2t_3}\cdot(-1)^{t_4} m_{t_4}(K_l)X_4^{l - 2t_4}}{\sqrt{i!j!k!l!}}\right] \\
  \EQ & \frac{1}{\sqrt{i!j!k!l!}}\sum_{t_1 = 0}^{\lfloor i/2 \rfloor}\sum_{t_2 = 0}^{\lfloor j/2 \rfloor}\sum_{t_3 = 0}^{\lfloor k/2 \rfloor}\sum_{t_4 = 0}^{\lfloor l/2 \rfloor}\textstyle(-1)^{t_1 + t_2 + t_3 + t_4}m_{t_1}(K_i)m_{t_2}(K_j)m_{t_3}(K_k)m_{t_4}(K_l)\E\left[ X_1^{i - 2t_1} X_2^{j - 2t_2} X_3^{k - 2t_3}X_4^{l - 2t_4}\right].
\end{align*}
We can apply Isserlis' theorem to $\E\left[ X_1^{i - 2t_1} X_2^{j - 2t_2} X_3^{k - 2t_3}X_4^{l - 2t_4}\right]$ and express it in terms of matchings. If we look at the expression so obtained for $\E[H_i(X_1)H_j(X_2)H_k(X_3)H_l(X_4)]$ combinatorially, we are doing the following\strikeout{thing}. We have a complete graph whose vertex set is partitioned into 4 parts, $V_1, V_2, V_3, V_4$ with $|V_1| = i, |V_2| = j, |V_3| = k, |V_4| = l$. We first pick some partial matching $M_1$  consisting of edges whose two endpoints are in the same part, and we get a factor of $(-1)^{|M_1|}$. We then match up the remaining vertices arbitrarily and get a partial matching $M_2$. Each edge in $M_2$ with one endpoint in $V_s$ and one endpoint in $V_t$ will contribute a factor of $\E[X_sX_t]$. $M_1 \cup M_2$ is a perfect matching of our complete graph, and note that a perfect matching may be counted multiple times in this procedure since $M_2$ is arbitrary. 

We claim that if a perfect matching $M$ contains an edge whose endpoints are from the same part, then it is counted multiple times and the contributions from each time sum up to zero. Indeed, if $M$ contains an edge whose both endpoints are in, say $V_1$, then this edge can be included in $M_1$, the partial matching produced in the first step, and contribute $-1$. or it can be included in $M_2$ and contribute $\E[X_1^2] = 1$. So the contributions cancel out. On the other hand, if $M$ does not contain edges whose endpoints are from the same part, then it is counted exactly once, for $M_1$ must be empty, and its contribution is $\rho_{12}^a\rho_{13}^b\rho_{14}^c\rho_{23}^d\rho_{24}^{e}\rho_{34}^f$ where $a$ is the number of edges between~$V_1$ and~$V_2$, $b$ is the number of edges between~$V_1$ and~$V_3$, etc.

Now we need to count the number of perfect matchings given the numbers of edges across any two different parts. It is easy to see that if $a$ is the number of edges between~$V_1$ and~$V_2$, $b$ is the number of edges between~$V_1$ and~$V_3$, etc, then the number of such matchings is given by
\[
\binom{i}{a, b, c}\cdot \binom{j}{a,d,e} \cdot \binom{k}{b,d,f}\cdot \binom{l}{c,e,f}\cdot a!b!c!d!e!f! \EQ \frac{i!j!k!l!}{a!b!c!d!e!f!}\;.
\]
Hence the corollary follows.
\end{proof}

We can then use this corollary to obtain an expression for $F_4$. However, unlike the expression for $F_2$, the expression that we obtain does not converge for all valid pairwise biases.

\section{An Example Where \texorpdfstring{$F_4$}{F4} is Negative on Positive Inputs}\label{sub:F4negative}
The idea for this example is to choose vectors $\bv_1,\bv_2,\bv_3,\bv_4$ such that $\bv_2 + \bv_3 + \bv_4 = (2-\delta)\bv_1$ for some small $\delta > 0$ but the components of $\bv_2,\bv_3,\bv_4$ which are orthogonal to $\bv_1$ have negative inner products. We can do this by choosing three orthonormal vectors $\be_1,\be_2,\be_3$ and taking 
\begin{enumerate}
\item $\bv_1 = \be_1$
\item $\bv_2 = \frac{(2 - \delta)}{3}\be_1 + \frac{1}{3}\sqrt{5 + 4\delta - {\delta}^2}\be_2$
\item $\bv_3 = \frac{(2 - \delta)}{3}\be_1 + \frac{1}{3}\sqrt{5 + 4\delta - {\delta}^2}\left(-\frac{1}{2}\be_2 + \frac{\sqrt{3}}{2}\be_3\right)$
\item $\bv_4 = \frac{(2 - \delta)}{3}\be_1 + \frac{1}{3}\sqrt{5 + 4\delta - {\delta}^2}\left(-\frac{1}{2}\be_2 - \frac{\sqrt{3}}{2}\be_3\right)$
\end{enumerate}
This gives the following pairwise biases:
\begin{enumerate}
\item $\forall i \in \{2,3,4\}, b_{1,i} = \bv_1 \cdot \bv_i = \frac{2 - \delta}{3}$.
\item $\forall i < j \in \{2,3,4\}, b_{i,j} = \bv_i \cdot \bv_j = \frac{(2-\delta)^2}{9} - \frac{5 + 4\delta - {\delta}^2}{18} = \frac{1 - 4\delta + {\delta}^2}{6}$.
\end{enumerate}
Note that these pairwise biases are all positive as long as $\delta \in (0,2-\sqrt{3})$.

We now use the following rounding scheme for some $\varepsilon > 0$
\begin{enumerate}
\item Choose a random vector $\bu$.
\item If $|\bv_i \cdot \bu| \in [\varepsilon,1.5\varepsilon)$, take $x_i = sign(\bv_i \cdot \bu)$. Otherwise, choose $x_i$ by flipping a coin.
\end{enumerate}
\begin{lemma}
For this rounding scheme, if $x_1,x_2,x_3,x_4$ are all determined without flipping a coin, which happens with nonzero probability, then ${x_1}{x_2}{x_3}{x_4} = -1$.
\end{lemma}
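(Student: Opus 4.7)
The key observation I would exploit is the linear relation built into the construction: since $\bv_2+\bv_3+\bv_4 = (2-\delta)\be_1 = (2-\delta)\bv_1$, for any $\bu$ the projections $t_i := \bv_i\cdot\bu$ satisfy the identity
\[
t_2+t_3+t_4 \;=\; (2-\delta)\,t_1.
\]
On the event that every $x_i$ is set by the sign rule (not by a coin flip) we have $|t_i|\in[\varepsilon,1.5\varepsilon)$ for all $i$ and $x_i=\operatorname{sign}(t_i)$. So I would argue purely from this linear identity and these bounds.

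By replacing $\bu$ with $-\bu$ if necessary, assume $t_1>0$, so $t_2+t_3+t_4\in[(2-\delta)\varepsilon,\,1.5(2-\delta)\varepsilon)$. I would then enumerate the four possible sign patterns of $(x_2,x_3,x_4)$ and rule out three of them using only the constraint $|t_i|\in[\varepsilon,1.5\varepsilon)$. If all three $t_j$ were positive, their sum would lie in $[3\varepsilon,4.5\varepsilon)$, which is disjoint from $[(2-\delta)\varepsilon,\,1.5(2-\delta)\varepsilon)$ whenever $\delta>0$ (since $1.5(2-\delta)<3$). If all three were negative, the sum would be negative, contradicting $t_1>0$. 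If exactly one were positive and two were negative, the sum would lie in $(-2\varepsilon,-0.5\varepsilon)$, again negative. Thus exactly two of $t_2,t_3,t_4$ are positive and one is negative, giving $x_2x_3x_4=-1$ and hence $x_1x_2x_3x_4=(+1)(-1)=-1$. The case $t_1<0$ is handled by the symmetric argument (two negative, one positive among $t_2,t_3,t_4$), again yielding product $-1$.

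It remains to verify that the event actually has positive probability. The map $\bu\mapsto(t_1,t_2,t_3,t_4)$ is a surjective linear map onto the three-dimensional affine subspace cut out by $t_2+t_3+t_4=(2-\delta)t_1$ (when restricted to $\operatorname{span}(\be_1,\be_2,\be_3)$), and the sign pattern realized above (e.g.\ $t_1=\varepsilon$, $t_2=t_3=(1-\tfrac{\delta}{2})\varepsilon$, $t_4=-(1-\tfrac{\delta}{2})\varepsilon\cdot(\text{adjusted})$) lies in the interior of the constraint region. Hence the preimage is a non-empty open set in $\mathbb{R}^n$, which has positive Gaussian measure.

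The only mildly delicate step is the case analysis bounding $t_2+t_3+t_4$; everything else is a direct computation. I expect no significant obstacle beyond verifying that the inequality $1.5(2-\delta)<3$ — which is the reason this example works at all and uses the assumption $\delta>0$ — rules out the all-positive (respectively all-negative) sign pattern.
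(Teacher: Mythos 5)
Your proposal is correct and follows essentially the same argument as the paper: both exploit the identity $(\bv_2+\bv_3+\bv_4)\cdot\bu=(2-\delta)\,(\bv_1\cdot\bu)$ together with interval arithmetic on $|\bv_i\cdot\bu|\in[\varepsilon,1.5\varepsilon)$ to exclude every sign pattern of $(x_2,x_3,x_4)$ except two-positive/one-negative (up to the global sign of $\bv_1\cdot\bu$), forcing $x_1x_2x_3x_4=-1$; the paper merely organizes the cases by the signs of $\bv_2\cdot\bu,\bv_3\cdot\bu,\bv_4\cdot\bu$ and concludes $x_1$ is either coin-flipped or forced, rather than conditioning on the sign of $\bv_1\cdot\bu$ as you do. One cosmetic slip: your explicit witness for positive probability yields $t_4=0$ as written (take instead, e.g., $t_1=\varepsilon$, $t_2=t_3=(1.5-\tfrac{\delta}{2})\varepsilon$, $t_4=-\varepsilon$), but the open-set/positive-measure argument is sound and is at least as explicit as the paper's own treatment of that point.
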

\begin{proof}
Since $\bv_2 + \bv_3 + \bv_4 = (2-\delta)\bv_1$, we have that $(\bv_2 \cdot \bu) + (\bv_3 \cdot \bu) + (\bv_4 \cdot \bu) = (2-\delta)(\bv_1 \cdot \bu)$. If $x_2,x_3,x_4$ are all determined without flipping a coin then $\forall i \in \{2,3,4\}, |\bv_i \cdot \bu| \in [\varepsilon,1.5\varepsilon)$. We have the following cases for $(\bv_2 \cdot \bu)$, $(\bv_3 \cdot \bu)$, and $(\bv_4 \cdot \bu)$:
\begin{enumerate}
\item If $(\bv_2 \cdot \bu),(\bv_3 \cdot \bu),(\bv_4 \cdot \bu)$ are all in $[\varepsilon,1.5\varepsilon)$ then $(\bv_1 \cdot \bu) \in [\frac{3}{2 - \delta}\varepsilon,\frac{4.5}{2 - \delta}\varepsilon)$ so $x_4$ is determined by a coin flip. 
\item Similarly, if $(\bv_2 \cdot \bu),(\bv_3 \cdot \bu),(\bv_4 \cdot \bu)$ are all in $(-1.5\varepsilon,-\varepsilon]$ then $x_4$ is determined by a coin flip.
\item If two of the inner products $(\bv_2 \cdot \bu),(\bv_3 \cdot \bu),(\bv_4 \cdot \bu)$ are in $[\varepsilon,1.5\varepsilon)$ and the other inner product is in $(-1.5\varepsilon,-\varepsilon]$ then $(\bv_1 \cdot \bu) \in (\frac{.5}{2 - \delta}\varepsilon,\frac{2}{2 - \delta}\varepsilon)$ so either $x_1$ is determined by a coin flip or $x_1 = 1$. If $x_1 = 1$, which happens with nonzero probability, then ${x_1}{x_2}{x_3}{x_4} = -1$. 
\item Similarly, if two of the inner products $(\bv_2 \cdot \bu),(\bv_3 \cdot \bu),(\bv_4 \cdot \bu)$ are in $(-1.5\varepsilon,-\varepsilon]$ and the other inner product is in $[\varepsilon,1.5\varepsilon)$ then either $x_1$ is determined by a coin flip or $x_1 = -1$. If $x_1 = -1$ then ${x_1}{x_2}{x_3}{x_4} = -1$. \qedhere
\end{enumerate}
\end{proof}
\begin{corollary}
For this rounding scheme, taking $b_{1,i} = \frac{2 - \delta}{3}$ for all $i \in \{2,3,4\}$ and taking $b_{i,j} = \frac{(2-\delta)^2}{9} - \frac{5 + 4\delta - {\delta}^2}{18} = \frac{1 - 4\delta + {\delta}^2}{6}$ for all $i < j \in \{2,3,4\}$, $F_4(b_{1,1}, b_{1,2}, b_{1,3}, b_{2,3}, b_{2,4}, b_{3,4}) < 0$.
\end{corollary}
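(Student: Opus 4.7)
The plan is to compute $F_4 = \E[x_1 x_2 x_3 x_4]$ by conditioning on the random vector $\bu$ and using the independence of the coin flips in the rounding scheme. The preceding lemma already does the hard combinatorial work; the only remaining task is to argue that the coin-flip contributions average to zero and the sign-rule contributions are pinned to $-1$.

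For each fixed $\bu$, each variable $x_i$ is either set deterministically to $\text{sign}(\bv_i \cdot \bu)$ (when $|\bv_i \cdot \bu| \in [\eps, 1.5\eps)$) or else to $\pm 1$ via an independent fair coin. Since the four coin flips, when invoked, are mutually independent conditional on $\bu$, I have
\[
\E[x_1 x_2 x_3 x_4 \mid \bu] \;=\; \prod_{i=1}^4 \E[x_i \mid \bu],
\]
so any index $i$ that falls outside the band $[\eps, 1.5\eps)$ zeros out the entire conditional product. Thus the only $\bu$'s that contribute to $F_4$ are those in the event
\[
A \;:=\; \bigl\{\,\bu : |\bv_i \cdot \bu| \in [\eps, 1.5\eps) \text{ for every } i \in \{1,2,3,4\}\,\bigr\},
\]
and on $A$ the preceding lemma yields $x_1 x_2 x_3 x_4 = -1$ deterministically.

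Combining these two observations gives
\[
F_4(b_{1,2}, b_{1,3}, b_{1,4}, b_{2,3}, b_{2,4}, b_{3,4}) \;=\; (-1)\cdot \Pr[A] + 0 \cdot \Pr[A^c] \;=\; -\Pr[A].
\]
The only remaining point is to verify $\Pr[A] > 0$, which is already asserted inside the lemma. This is immediate: $A$ is the preimage under four linear functionals of an open box in $\RR^4$, so if the four inner products can \emph{simultaneously} be tuned into $[\eps, 1.5\eps)$ for some $\bu$, then $A$ has positive Gaussian measure by continuity. The main (and essentially only) technical check is therefore to exhibit one such $\bu$; for small enough $\delta$ this follows from picking $\bu$ with a dominant $\be_1$-component scaled so that $\bv_1 \cdot \bu \approx 1.25\eps$, together with appropriate $\be_2, \be_3$ corrections so that the three transverse contributions land in the required intervals with the correct sign pattern (e.g., two positive and one negative, matching case~3 of the lemma). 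Once $\Pr[A] > 0$ is established, $F_4 = -\Pr[A] < 0$ is immediate, completing the proof.
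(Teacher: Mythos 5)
Your argument is correct and is essentially the paper's (implicit) one: conditioned on $\bu$ the four variables are independent, any coin-flipped variable zeroes the conditional product, and on the all-determined event the preceding lemma forces the product to be $-1$, so $F_4=-\Pr[A]<0$ once $\Pr[A]>0$, which the lemma itself already asserts. One small caveat on your illustrative witness: the constraint $(\bv_2+\bv_3+\bv_4)\cdot\bu=(2-\delta)\,\bv_1\cdot\bu$ makes $\bv_1\cdot\bu\approx 1.25\eps$ infeasible (on the all-determined event one must have $|\bv_1\cdot\bu|\in[\eps,\tfrac{2}{2-\delta}\eps)$, i.e.\ just above $\eps$, with the other three inner products in the two-positive/one-negative pattern summing to $(2-\delta)\,\bv_1\cdot\bu<2\eps$), but since you ultimately rely on the lemma's ``nonzero probability'' assertion this does not affect the validity of your proof.
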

\begin{remark}
The reason why this example works is that although the pairwise biases are all positive, once we consider the components of $\bv_2,\bv_3,\bv_4$ which are orthogonal to $\bv_1$, their inner products are negative. We conjecture that if the inner products remain positive throughout the Gram-Schmidt process then $F_4$ must be non-negative.
\end{remark}

\section{\change{Existence} of \change{a} Minimizer}\label{app:minimizer}

In this appendix, we justify the claim that there exists $f : \mathbb R \to [-1, 1]$ with $\|f\|_{\infty} \le 1$ which minimizes
\[
L(f) \;=\ \frac{1-\alpha}{\alpha} \int_{-\infty}^\infty f(x)^2\phi(x)dx + \int_{\mathbb R^2}f(x)f(y) \phi_{\rho}(x, y)\,dx dy \;,
\]
where $\rho > -1$. The functional we consider for MAX NAE-$\{3\}$-SAT are justified by near-identical logic.

Consider a sequence of functions $g_i$ with $\|g\|_{\infty} \le 1$ such that
\[
\lim_{i \to \infty} L(g_i) \EQ \inf \{L(f): \|f\|_{\infty}\le 1\}\;.
\]
Note that by Claim~\ref{claim:nae3-one}, each $g_i$ can be assumed to be monotone. We shall use the topology of the space of functions to construct a function $g$ for which $L(g)$ equals this limit. In that case, $g$ will be the minimizer we desire. Such proofs are standard in functional analysis (e.g.,~\cite{leoni2017first}).

Consider the Hilbert space $L^2(\phi)$, where $\phi$ is Gaussian measure on $\mathbb R$. Note that every $f$ such that $\|f\|_{\infty} \le 1$ has norm at most $1$ in this space. By the Banach-Alaoglu theorem, we have there is a subsequence $g_{\ell_i}$ of the $g_i$'s which is \emph{weakly convergent} to a function $g$. In particular, for any other $h \in L^2(\phi)$ we have that
\[
\lim_{i \to \infty} \int_{\mathbb R} g_{\ell_i}(x)h(x)\phi(x)\,dx \EQ \int_{\mathbb R}g(x)h(x)\phi(x)\,dx\;.
\]
By taking $h$ to be the \strikeout{the} indicator \strikeout{an} \change{of} an interval $[a, b]$ and letting $a$ approach $b$, we have that $g$ is the pointwise limit of the $g_{\ell_i}$ almost everywhere.\footnote{Note that we need the $g_{\ell_i}$ are bounded and monotone to make this deduction.} Thus, since the $g_{\ell_i}$ are bounded, on every compact interval $[a, b]$, $g_{\ell_i}$ strongly converge to $g$ in $L^2([a, b])$. Therefore, for every interval $[a, b]$, we have that
\begin{align*}
\lim_{i \to \infty} &\left[\frac{1-\alpha}{\alpha} \int_{a}^b g_{\ell_i}(x)^2\phi(x)dx + \int_{[a,b]^2}g_{\ell_i}(x)g_{\ell_i}(y) \phi_{\rho}(x, y)\,dx dy\right]\\&\EQ \frac{1-\alpha}{\alpha} \int_{a}^b g(x)^2\phi(x)dx + \int_{[a,b]^2}g(x)g(y) \phi_{\rho}(x, y)\,dx dy \;.
\end{align*}
For for every $\eps > 0$, $1-\eps$ of the mass of $\phi(x)$ as well as $\phi_{\rho}(x, y)$ is in a bounded rectangle. Thus, we may send $a \to -\infty$ and $b \to -\infty$ to infer that
\[
\lim_{i \to \infty} L(g_i) \EQ  L(g)\;.
\]

\section{Heuristic argument for the optimality of \texorpdfstring{$\pm1$}{+/-1} functions}\label{sub-heuristic}

In this appendix, we give a `heuristic argument' in support of Conjecture~\ref{C-Step}, stating that the optimal rounding function~$f_K$, when $|K|\ge 2$ and $\min K=3$, is a $\pm1$ step function. We stress that the argument, as presented, is not rigorous. It might, however, give some intuition why the conjecture might be true. We focus for simplicity on the case $K=\{3,k\}$, for $k>3$.

We say that a function $f:(-\infty,\infty)\to[-1,1]$ is \emph{locally optimal} for MAX NAE-$K$-SAT if there exists $\varepsilon>0$ such that there is no continuous function $g:(-\infty,\infty)\to[-\eps,\eps]$ such that $f+g:(-\infty,\infty)\to[-1,1]$ and $\alpha_K(f+g)>\alpha_K(f)$. (See Section~\ref{sub-difficult} for the definition of $\alpha_K(f)$.) The optimal function $f_K$ must of course be locally optimal.

Suppose that $f=f_K$ does not assume the values $\pm 1$ almost everywhere. Then, there must exist two disjoint intervals $[a_1,b_1]$ and $[a_2,b_2]$ such that $-1+\delta \le f(x)\le 1-\delta $ for every $x\in [a_1,b_1]\cup [a_2,b_2]$.

Let $g_1(x)=1$, if $x\in[a_1,b_1]$, and $g_1(x)=0$, otherwise. Define $g_2(x)$ similarly for the interval $[a_2,b_2]$.
For $-\delta<\eps_1,\eps_2<\delta$, the function $f+\eps_1g_1+\eps_2g_2$ is a function from $(-\infty,\infty)$ to $[-1,1]$.
Let $d_{k,i}=\frac{d\alpha_k(f+\eps g_i)}{d\eps}$ be the derivative with respect to adding a small multiple of~$g_i$, for $i=1,2$, and where $k$ can also be $3$. For small enough $\eps_1,\eps_2$, we have $\alpha_k(f+\eps_1 g_1 + \eps_2 g_2) \approx \alpha_k(f) + d_{k,1}\eps_1 + d_{k,2}\eps_2$.

We expect $d_{3,1}$ and $d_{k,1}$, and also $d_{3,2}$ and $d_{k,2}$, to have opposite signs, if they are not~$0$, as otherwise $f$ is clearly not locally optimal, as can be seen by adding a small multiple of $g_1$, or a small multiple of $g_2$. 

What is more interesting is the possibility of improving $f$ by adding both a small multiple of $g_1$ and a small multiple of~$g_2$. We thus ask whether there exists small enough $\eps_1,\eps_2$ such that $d_{3,1}\eps_1 + d_{3,2}\eps_2\ge 0$ and $d_{k,1}\eps_1 + d_{k,2}\eps_2\ge 0$. A sufficient condition for the existence of such $\eps_1,\eps_2$ is that the matrix $D=\left(\begin{array}{cc}d_{3,1} & d_{3,2} \\ d_{k,1} & d_{k,2}\end{array}\right)$ is non-singular.

In other words, if $f$ is locally optimal and does not assume $\pm 1$ values almost everywhere, then the matrix $D=D_{[a_1,b_1],[a_2,b_2]}$ must be singular for every two intervals $[a_1,b_1]$ and $[a_2,b_2]$ in which $f$ assume intermediate values. We believe that this condition cannot be satisfied. \strikeout{But, as conceded} \change{However}, we do not have a rigorous proof.

The main reason we believe that this condition cannot be satisfied for $K=\{3,k\}$ is that the functions $\alpha_3(f)$ and $\alpha_k(f)$, for $k>3$, seem to be `pulling' in opposing directions.
If this argument can be made rigorous, the proof would need to rely on the condition $\min K=3$, as if $\min K>3$, the optimal function is \change{a} random assignment, i.e., $f_K(x)=0$, for every $x\in (-\infty,\infty)$.

\section{Equivalence of Uniform and Non-uniform MAX NAE-SAT}\label{S-equiv}

As discussed in the preliminaries, there are two natural definitions of MAX NAE-SAT: \emph{uniform} MAX NAE-SAT, where the clause lengths can grow with the number of variables, and \emph{non-uniform} MAX NAE-SAT, where the clause lengths can be an arbitrarily large constant (that is the limit of MAX NAE-$[k]$-SAT as~$k$ goes to infinity). In this appendix, we show that in terms of approximation, these two problems are essentially equivalent.  

Let $U_k(c)$ be the best soundness one can achieve in polynomial time for MAX NAE-$[k]$-SAT (assuming UGC). Let $U(c)$ be the corresponding quantities for MAX NAE-SAT. Note the following is true.

\begin{proposition}\label{prop:subtle}
For any CSP $\Lambda$, and $c, c', t \in [0, 1]$, such that $1-t+tc\ge c'  \ge tc$, we have $1-t+tU_{\Lambda}(c) \ge U_\Lambda(c')$.
\end{proposition}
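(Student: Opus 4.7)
The plan is to prove Proposition~\ref{prop:subtle} by an explicit mixing construction at the instance level. By the definition of $U_\Lambda(c)$, for every $\epsilon>0$ one can fix an instance $\Phi$ of $\Lambda$ with completeness $c$ on which no polynomial-time algorithm (assuming UGC) achieves integral value exceeding $U_\Lambda(c)+\epsilon$. From $\Phi$, I would build a new instance $\Phi'$ on a disjoint set of fresh variables by combining three independent pieces: $\Phi$ rescaled to total weight $t$; a block of ``trivially satisfied'' dummy clauses of total weight $\alpha := c'-tc$ (for MAX NAE-SAT, clauses of the form $\NAE(x,-x,\ldots)$ work, since $x$ and $-x$ are never equal); and a block of ``trivially unsatisfied'' dummy clauses of total weight $\gamma := (1-t+tc)-c'$. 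The hypothesis $tc\le c'\le 1-t+tc$ is exactly what guarantees $\alpha,\gamma\ge 0$ with $t+\alpha+\gamma=1$.

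A direct calculation then shows that $c(\Phi')=tc+\alpha\cdot 1+\gamma\cdot 0=c'$, and, because the three blocks share no variables, any polynomial-time algorithm on $\Phi'$ decomposes into independent algorithms on the three parts. Summing the best achievable value on each part yields integral value at most $tU_\Lambda(c)+\alpha+0+O(\epsilon)=tU_\Lambda(c)+(c'-tc)+O(\epsilon)\le 1-t+tU_\Lambda(c)+O(\epsilon)$, where the final inequality uses $c'-tc\le 1-t$. Letting $\epsilon\to 0$ gives $U_\Lambda(c')\le 1-t+tU_\Lambda(c)$, as desired.

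The main obstacle will be realizing the ``trivially unsatisfied'' block for a general CSP $\Lambda$, since some CSPs, including MAX NAE-SAT itself, do not contain a predicate that is identically false (any single $\NAE_k$ clause is satisfied with positive probability under any assignment). A clean workaround is to appeal to the convexity of $U_\Lambda$, which follows because disjoint mixtures of instances produce componentwise convex combinations of completeness-soundness pairs: the ``upper-endpoint'' case $c'=1-t+tc$ is handled by the construction above using only the trivially satisfied block, and the full interval then follows by convex interpolation against the endpoint $(1,1)$ (and, in the lower portion $c'\le c$, against the formal endpoint $(0,0)$, noting that for $c'\le c$ one has $(c'/c)U_\Lambda(c)\le 1-t+tU_\Lambda(c)$ since $U_\Lambda(c)\le 1$ and $c'/c\le 1$). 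Alternatively, one may replace the trivially unsatisfied block by a ``half-satisfied'' gadget such as the pair $\NAE(y_1,y_2)$, $\NAE(y_1,-y_2)$ (which together satisfy exactly half their weight under any assignment) and rescale the weights accordingly so that the construction covers every $c'\in[tc,1-t+tc]$ uniformly.
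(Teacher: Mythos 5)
Your core idea---mix a hard weight-$t$ block of completeness $c$ with a weight-$(1-t)$ filler and bound the filler's contribution by its weight---is the same as the paper's, but two steps do not hold up. First, the opening quantification is backwards: you cannot ``fix an instance $\Phi$ of completeness $c$ on which no polynomial-time algorithm achieves more than $U_\Lambda(c)+\epsilon$,'' since any single instance is solved by an algorithm with its optimal assignment hard-coded. The definition of $U_\Lambda(c)$ only gives: for every polynomial-time algorithm $A$ there is some completeness-$c$ instance on which $A$ achieves at most $U_\Lambda(c)+\eta$. The paper's proof quantifies in this order (fix an algorithm $A$ acting on the mixture, then choose a bad $\Psi_1'$ for the induced algorithm on the hard block); your argument needs the same reordering, which is routine but necessary.

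Second, and more seriously, the lower part of the range is not actually covered. Your main construction needs an identically-false block, which, as you note, does not exist for MAX NAE-SAT; but your patch does not repair this. Interpolating ``against the formal endpoint $(0,0)$'' to get $U_\Lambda(c')\le (c'/c)U_\Lambda(c)$ for $c'<c$ is circular: realizing that endpoint requires exactly the completeness-$0$ (or arbitrarily low-completeness) filler whose nonexistence you were working around. The half-gadget variant does not rescue it as stated either: with the hard block at weight $t$, a filler built from $(1,1)$ and $(1/2,1/2)$ blocks can only produce total completeness in $[\,tc+\tfrac{1-t}{2},\,tc+(1-t)\,]$, so it does not ``cover every $c'\in[tc,1-t+tc]$ uniformly''; making it work would need additional steps (e.g., observing that $U_\Lambda(c')\le c'$ disposes of all $c'\le 1-t+tU_\Lambda(c)$, and then a separate feasibility check with a different hard-block weight in the remaining window), none of which appear in the proposal. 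Your interpolation against $(1,1)$ does correctly handle $c'\in[c,\,1-t+tc]$, since there $\mu=\frac{1-c'}{1-c}\ge t$, but that is only the upper half. The paper sidesteps all of this: it takes as filler an arbitrary formula of completeness $\frac{c'-tc}{1-t}\in[0,1]$ (this is exactly what the hypothesis $tc\le c'\le 1-t+tc$ guarantees), so the mixture has completeness $c'$, and it bounds the filler's contribution simply by its weight $1-t$---no identically-true or identically-false gadgets are needed (existence of a filler with the prescribed completeness is a much milder assumption than an identically-false predicate, and it is what the intended application supplies).
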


\begin{proof}
Consider an instance formed by taking the conjunction of a formula $\Psi_1(x)$ with total weight $t$ and completeness $c$ and a second formula $\Psi_2(y)$ with total weight $1-t$ and completeness $\frac{c'-tc}{1-t}$. Let $A$ be a polynomial time algorithm which takes as input $\Psi_1$ and $\Psi_2$ and outputs an assignment $(x, y)$. By definition of $U_{\Lambda}(c)$, for any $\eta > 0$, there must be an input $(\Psi'_1, \Psi'_2)$ such that $(x, y) := A(\Psi'_1, \Psi'_2)$ satisfies at most $U_{\Lambda}(c) + \eta$ of the clauses of $\Psi'_1$. Thus, the output of $A$ satisfies at most $1-t + t(U_{\Lambda}(c) + \eta)$ fraction of the clauses. This gives the upper bound on $U_\Lambda(1-t + tc)$. 
\end{proof}

The following is the main result of this appendix.

\begin{lemma}
For all $c \in [0, 1]$,
\[
U(c) \EQ \lim_{k\to \infty} U_k(c)\;.
\]  
\end{lemma}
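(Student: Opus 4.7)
My plan is to establish the two inequalities $U(c) \le \lim_{k} U_k(c)$ and $U(c) \ge \lim_{k} U_k(c)$ separately. The first inequality is immediate: every MAX NAE-$[k]$-SAT instance is also a MAX NAE-SAT instance, so any polynomial-time algorithm for the latter yields a polynomial-time algorithm for the former, and its worst-case value over the larger class can only be smaller. Hence $U(c) \le U_k(c)$ for every $k$; since $U_k(c)$ is non-increasing in $k$, the limit exists and $U(c) \le \lim_k U_k(c)$.

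For the reverse direction, I plan to show that for every $\eps > 0$ there is a sufficiently large $K = K(\eps)$ and a polynomial-time algorithm achieving expected value at least $U_K(c) - O(\eps)$ on any MAX NAE-SAT instance $\Psi$ of completeness $c$; sending $K\to\infty$ and $\eps\to 0$ then gives the claim. Given such $\Psi$, decompose it as $\Psi_{\le K} \cup \Psi_{>K}$ with weights $\tau$ and $1-\tau$, respectively. Two facts drive the analysis: (a) since the optimal $\Psi$-assignment satisfies at most $(1-\tau)\cdot 1$ weight in $\Psi_{>K}$, the standalone completeness $c_s$ of $\Psi_{\le K}$ satisfies $\tau c_s \ge c - (1-\tau)$; and (b) a uniformly random assignment satisfies each clause of size exceeding $K$ with probability at least $1 - 2^{-K}$. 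Proposition~\ref{prop:subtle} applied to $\Lambda = $ MAX NAE-$[K]$-SAT with $t = \tau$, used with $c_s$ replaced by $\min(c_s, c/\tau)$ so that the hypothesis $tc \le c'$ is met, then yields $\tau U_K(c_s) \ge U_K(c) - (1 - \tau)$.

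The algorithm I will analyze returns the better of two candidate assignments: $\beta_R$, obtained by running Raghavendra's MAX NAE-$[K]$-SAT algorithm on $\Psi_{\le K}$ while searching over an $\eps$-net of completeness guesses, whose value on $\Psi$ is at least $\tau U_K(c_s) - O(\eps) \ge U_K(c) - (1-\tau) - O(\eps)$; and $\beta_U$, a uniformly random assignment, whose expected value on $\Psi$ is at least $\tau/2 + (1-\tau)(1-2^{-K})$. The regime $\tau \ge 1-\eps$ is handled by $\beta_R$ alone, and the regime $\tau \le 2(1-U_K(c))$ is handled by $\beta_U$ alone as soon as $K$ is large enough that $2^{-K} \le \eps$.

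The main obstacle I foresee is the intermediate regime of $\tau$, which is non-empty precisely when $U_K(c) > 1/2$ (the relevant case for near-satisfiable MAX NAE-SAT). To close the gap, my plan is to introduce a per-variable randomized mixture: independently for each variable, output $\beta_R$'s value with probability $\lambda$ and a fresh uniform $\pm 1$ bit otherwise. For any clause of size $m$, the conditional violation probability is at most $((1+\lambda)/2)^m + ((1-\lambda)/2)^m$, which decays exponentially in $m$ for any fixed $\lambda \in (0, 1)$, while the expected value on each small clause interpolates smoothly between that of $\beta_R$ and that of $\beta_U$. Choosing $\lambda$ close to $1$ as a function of $K$ preserves the Raghavendra guarantee on $\Psi_{\le K}$ up to $O(\eps)$ loss while ensuring a violation probability of at most $2^{-\Omega(K)}$ on every clause in $\Psi_{>K}$, producing a single algorithm whose value is at least $U_K(c) - O(\eps)$ uniformly in $\tau$. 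Taking $K \to \infty$ and $\eps \to 0$ then gives $U(c) \ge \lim_{k} U_k(c)$ and completes the proof.
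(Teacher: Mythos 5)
Your overall route is the same as the paper's: split $\Psi$ at a length threshold $K$, run the UGC-optimal MAX NAE-$[K]$-SAT algorithm on the short part, perturb the resulting assignment with independent per-variable noise to handle long clauses, and account for the short part's completeness via Proposition~\ref{prop:subtle}. The gap is in your final paragraph, where the two requirements you place on $\lambda$ are in tension. The only quantitative bound you state, that a clause of size $m$ is violated with probability at most $((1+\lambda)/2)^m + ((1-\lambda)/2)^m$, is useless for preserving the guarantee on $\Psi_{\le K}$ (for $m\le K$ and $\lambda$ near $1$ it is close to $1$); so, as your phrase ``$\lambda$ close to $1$ as a function of $K$'' suggests, the natural way to preserve that guarantee is to keep every clause of length at most $K$ untouched with probability $1-O(\eps)$, i.e.\ $1-\lambda = O(\eps/K)$. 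But then a clause of length $m$ only slightly larger than $K$ (say $K < m \le K/\eps$) is violated with probability roughly $\exp(-\eps m/(2K))$, which is close to $1$, not $2^{-\Omega(K)}$. This is exactly the band of ``middle-length'' clauses that forces the paper to give up the lengths between $k$ and $k'$, to choose $k$ by a pigeonhole argument over $O(1/\eps)$ length scales, and to enumerate over guesses of $k$; as written, your single mixture does not handle this band, so the claim that its value is at least $U_K(c)-O(\eps)$ uniformly in $\tau$ is unjustified.

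The fix is short, and it would actually make your argument cleaner than the paper's: for an NAE clause that $\beta_R$ already satisfies, the mixture violates it with probability at most $1-\lambda$ \emph{independently of its length}, since violation requires every literal on one of the two sides (at least one literal) to be resampled and land on the other side. Hence you may take $\lambda = 1-\Theta(\eps)$, depending on $\eps$ only: the guarantee on $\Psi_{\le K}$ degrades by at most $O(\eps)$, while every clause of length $m>K$ is violated with probability at most $(1-\eps/2)^m+(\eps/2)^m \le 2e^{-\eps K/2}$, which is below $\eps$ once $K = \Theta(\eps^{-1}\log(1/\eps))$; no intermediate regime remains. You should also note that your $\min(c_s, c/\tau)$ step tacitly uses monotonicity of $U_K$ in the completeness, which the paper sidesteps by measuring the short part's completeness with respect to a fixed optimal assignment of the whole instance, so that the hypotheses of Proposition~\ref{prop:subtle} hold directly.
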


\begin{proof}
It suffices to prove the following two inequalities for all $\epsilon > 0$,
\begin{align}
U(c) &\LE \lim_{k \to \infty} U_k(c) \label{eq:b}\\
\lim_{k \to \infty} U_k(c) &\LE U(c)+\epsilon\;. \label{eq:a}
\end{align}

The inequality (\ref{eq:b}) follows from the fact that any instance of MAX NAE-$[k]$-SAT is an instance of uniform MAX NAE-SAT. 

To prove (\ref{eq:a}), it suffices to describe an algorithm for MAX NAE-SAT for any $\epsilon > 0$. Let $\Phi$ be an instance of MAX NAE-SAT.

Define $P_k(\delta) := 1 - 2(1-\delta)^k + (1-2\delta)^k$ and $Q_k(\delta) := (1-2\delta)^k$. 

Run the following algorithm.

\begin{enumerate}
\item Guess $k = O_{\eps}(1)$ and set $\delta = Q_k^{-1}(1-\epsilon/2) = \frac{1 - (1-\epsilon/2)^{1/k}}{2}$.
\item Let $\Phi' \subset \Phi$ be the instance consisting of all clauses having length at most $k$. Run the optimal polynomial-time approximation algorithm for this instance and get a solution $x$.
\item For each variable $x_i$, $i \in [n]$. With probability $\delta$, set $y_i = 1$ and with probability $\delta$, set $y_i = -1$. Otherwise set $y_i = x_i$.
\item Output $y$.
\end{enumerate}

It suffices to prove that the above algorithm works for some $k = O_{\eps}(1)$ as then we have a polynomial time algorithm by enumerating over all $k$ at most $O_{\eps}(1)$.

First, assume $k$ is fixed. To analyze the algorithm, fix an optimal solution $z$ to $\Phi$. For each $i \ge 1$, let $w_i$ the relative weight of clauses of $\Phi$ with size $i$ (so that $\sum_i w_i = 1$) and let $c_i$ be the fraction of clauses of $\Phi$ satisfied by $x$. Now define
\[
c'_k \EQ \frac{\sum_{i=1}^k w_i c_i}{\sum_{i=1}^k w_i}\;,
\]
that is the completeness of $\Phi'$. By definition, the solution $x$ will be a $U_k(c'_k) -\frac{\epsilon}{4}$ approximate solution to~$\Phi'$.

Observe that for a clause of length $i$, it is satisfied by (3) with probability $1 - 2(1-\delta)^i + (1-2\delta)^i = P_i(\delta).$ Further, the probability that a clause of length $k$ is "untouched" (that is, no variables are changed) by (3) is $(1-2\delta)^k = Q_k(\delta)$.

For a given $k$ and $\delta$ such that $Q_k(\delta) = 1-\epsilon/2$, let $k'$ be the smallest integer such that $P_{k'}(\delta) \ge 1-\epsilon/2$. Thus, every clause of length at least $k'$ is satisfied with probability at least $1-\epsilon/2$. Thus, the fraction of clauses satisfied by our algorithm is at least
\begin{align*}
&\left(1-\frac{\epsilon}{2}\right) \sum_{i = k'}^n w_i + U_k(c'_k)\sum_{i=1}^k w_i - \frac{\epsilon}{4}\\ 
\GE\; & \sum_{i = k'}^n w_i + U_k(c'_k)\sum_{i=1}^k w_i - \frac{\epsilon}{4}\\
\EQ\; & \sum_{i = k+1}^n w_i + U_k(c'_k)\sum_{i=1}^k w_i - \sum_{i=k+1}^{k'}w_i - \frac{\epsilon}{4}\\
\GE\; & U_k\left(\sum_{i=1}^n c_iw_i\right) - \sum_{i=k+1}^{k'}w_i - \frac{\epsilon}{4}\quad(\text{Proposition \ref{prop:subtle}})\\
\EQ\; & U_k(c)-\sum_{i=k+1}^{k'}w_i - \frac{\epsilon}{4}.
\end{align*}

To finish, it suffices to show there exists $k = O_{\epsilon}(1)$ such that $\sum_{i=k+1}^{k'}w_i \le \frac{\epsilon}{4}$. 

To see why, fix $k = 3$ and then consider the sequence $k', k'', k''', \hdots k^{(5/\epsilon)}$. (Where $k^{(i+1)}$ is $k'$ for $k = k^{(i)}$.) Since $\sum_{i=1}^n w_i = 1$, there must exist $i$ such that $\sum_{i=k+1}^{k'}w_i \le \frac{\epsilon}{4}$ for $k = k^{(i)}$. Thus, we have that (\ref{eq:a}) holds. 

\end{proof}

We remark that a similar argument can be used to prove uniform and non-uniform MAX SAT are equivalent to approximate.

\change{
\section{Proof of Lemma~\ref{lem:monotono_nae_diff}}\label{app:proof_monotone}
Given a function $f: \mathbb{R}^k \to [-1, 1]$, we will use $f^{\odd}$ to denote its odd part, defined by $\bv \mapsto \frac{f(\bv) - f(-\bv)}{2}$, and $f^{\even}$ its even part, defined by $\bv \mapsto \frac{f(\bv) + f(-\bv)}{2}$. Clearly, $f^{\even}, f^{\odd}: \mathbb{R}^k \to [-1, 1]$ and $f = f^{\even} + f^{\odd}$.}
\change{The most difficult term to deal with in the performance of $Round_f$ when $f$ is not odd is $F_4[f](\rho, \rho, \rho, 0, 0, 0)$, and our strategy is to show that the gain in other terms more than compensates for what's potentially lost in $F_4[f](\rho, \rho, \rho, 0, 0, 0)$. To implement this, we first write down explicitly how these terms change if we replace $f$ with its odd part.}

\change{\begin{proposition}\label{prop:monotone_nae5_expressions}
    Let $f: \mathbb{R}^k \to [-1, 1]$ and $\rho \in [0, 1]$. We have the following equalities:
    \begin{itemize}
        \item[(a)] $F_4[f](\rho, \rho, \rho, \rho, \rho, \rho) = \E\left[(\U_{\sqrt{\rho}} f^\odd)^4\right] +\E\left[(\U_{\sqrt{\rho}} f^\even)^4\right] + 6\E\left[(\U_{\sqrt{\rho}} f^\odd)^2(\U_{\sqrt{\rho}} f^\even)^2\right]$.
        \item[(b)] $6F_2[f]\left(\rho\right) + 4F_2[f](0) = 6\E\left[(\U_{\sqrt{\rho}} f^\odd)^2\right]  + 6\E\left[(\U_{\sqrt{\rho}} f^\even)^2\right] + 4\E[f]^2$.
        \item[(c)] $F_4[f](\rho, \rho, \rho, 0, 0, 0) = \E[f] \cdot \left(\E\left[(\U_{\sqrt{\rho}} f^{\even}(\bx))^3\right] + 3\E\left[\U_{\sqrt{\rho}} f^{\even}(\bx) \cdot (\U_{\sqrt{\rho}} f^{\odd}(\bx))^2\right]\right).$
    \end{itemize}
\end{proposition}
\begin{proof}
    For (a), by Lemma~\ref{lemma:noisyf},
    \[
     F_4[f](\rho, \rho, \rho, \rho, \rho, \rho) = \E\left[(\U_{\sqrt{\rho}} f)^4\right] = \E\left[(\U_{\sqrt{\rho}} f^\odd + \U_{\sqrt{\rho}}f^\even)^4\right].
    \]
    If we expand the 4-th power, any term that has an odd power of $\U_{\sqrt{\rho}} f^\odd$ will have expectation 0, so we have
    \[
    \E\left[(\U_{\sqrt{\rho}} f)^4\right] = \E\left[(\U_{\sqrt{\rho}} f^\odd)^4\right] +\E\left[(\U_{\sqrt{\rho}} f^\even)^4\right] + \binom{4}{2}\cdot\E\left[(\U_{\sqrt{\rho}} f^\odd)^2(\U_{\sqrt{\rho}} f^\even)^2\right]. 
    \]
    Part (b) follows directly from Lemma~\ref{lemma:noisyf} and Proposition~\ref{prop:F2power}.
    For part (c), by Lemma~\ref{lemma:moment_orthogonal}, we have    
\[
F_4[f](\rho, \rho, \rho, 0, 0, 0) = F_3[f](\rho) \cdot \E[f].
\]
By Lemma~\ref{lemma:noisyf}, we have 
\[
F_3[f](\rho) = \E\left[(\U_{\sqrt{\rho}} f(\bx))^3\right] = \E\left[(\U_{\sqrt{\rho}} f^{\even}(\bx) + \U_{\sqrt{\rho}} f^{\odd}(\bx))^3\right].
\]
If we expand the cubed binomial, any term that has an odd power of $\U_{\sqrt{\rho}} f^{\odd}(\bx)$ will have expectation 0, and therefore we are left with
\[
F_3[f](\rho) = \E\left[(\U_{\sqrt{\rho}} f^{\even}(\bx))^3\right] + 3\E\left[\U_{\sqrt{\rho}} f^{\even}(\bx) \cdot (\U_{\sqrt{\rho}} f^{\odd}(\bx))^2\right]. \qedhere
\]
\end{proof}
We now show that 
\begin{proposition}\label{prop:bound_f4_rho0}
    Let $f: \mathbb{R}^k \to [-1, 1]$ and $\rho \in [0, 1]$. We have
    \begin{itemize}
        \item[(a)] $\E\left[(\U_{\sqrt{\rho}} f^\odd)^2(\U_{\sqrt{\rho}} f^\even)^2\right] + \E[f]^2 \geq \left|2\E[f]\cdot\E\left[(\U_{\sqrt{\rho}} f^\odd)^2 \cdot\U_{\sqrt{\rho}} f^\even\right]\right|$.
        \item[(b)] $\E[f]^2 + F_2[f^\even](\rho)\geq \left|2\E[f]\cdot\E\left[(\U_{\sqrt{\rho}} f^\even)^3\right]\right|$.
    \end{itemize}
\end{proposition}
\begin{proof}
     For part (a), note that for any random variables $X$ and $Y$ we have $\E\left[X^2(Y - \E[Y])^2\right] = \E[X^2Y^2] + \E[X^2]\E[Y]^2 - 2\E\left[X^2Y\right]\E[Y] \geq 0$, and by letting $X = \U_{\sqrt{\rho}} f^\odd$ and $Y = \U_{\sqrt{\rho}} f^\even$ we have
    \begin{align*}
    \E\left[(\U_{\sqrt{\rho}} f^\odd)^2(\U_{\sqrt{\rho}} f^\even)^2\right] + \E[f]^2 & \geq \E\left[(\U_{\sqrt{\rho}} f^\odd)^2(\U_{\sqrt{\rho}} f^\even)^2\right] + \E\left[(\U_{\sqrt{\rho}} f^\odd)^2\right]\E[f]^2 \\
    & \geq 2\E[f]\cdot\E\left[(\U_{\sqrt{\rho}} f^\odd)^2 \cdot\U_{\sqrt{\rho}} f^\even\right].
    \end{align*}
    By considering $\E\left[X^2(Y + \E[Y])^2\right]$, a similar argument shows that 
    \[
    \E\left[(\U_{\sqrt{\rho}} f^\odd)^2(\U_{\sqrt{\rho}} f^\even)^2\right] + \E[f]^2 \geq -2\E[f]\cdot\E\left[(\U_{\sqrt{\rho}} f^\odd)^2 \cdot\U_{\sqrt{\rho}} f^\even\right].
    \]
    Thus, part (a) follows. For part (b), we have
    \begin{align*}
    \left|\E\left[(\U_{\sqrt{\rho}} f^\even)^3\right]\right| \leq \E\left[\left|\U_{\sqrt{\rho}} f^\even\right|^3\right] \leq \E\left[\left(\U_{\sqrt{\rho}} f^\even\right)^2\right]= F_2[f^\even](\rho).
    \end{align*}
    It follows that
    \[
    \left|2\E[f]\cdot\E\left[(\U_{\sqrt{\rho}} f^\even)^3\right]\right| \leq \left|2\E[f]\cdot F_2[f^\even](\rho)\right| \leq \E[f]^2 + (F_2[f^\even](\rho))^2 \leq \E[f]^2 + F_2[f^\even](\rho). \qedhere
    \]
\end{proof}
We are now finally ready to prove  Lemma~\ref{lem:monotono_nae_diff}
\begin{proof}[Proof of Lemma~\ref{lem:monotono_nae_diff}]
The first part follows from Proposition~\ref{prop:F2power}:
\begin{equation*}
\frac{3 - 3F_2[f]\left(-\rho\right)}{4} = \frac{3 - 3F_2[f^{\even}]\left(\rho\right) + 3F_2[f^{\odd}]\left(\rho\right)}{4} \leq  \frac{3 + 3F_2[f^{\odd}]\left(\rho\right)}{4}. 
\end{equation*}
For the second part, we have by Proposition~\ref{prop:monotone_nae5_expressions}
\begin{align*}
    &\frac{15 - 6F_2[f^\odd]\left(\rho\right) - F_4[f^\odd](\rho)}{16} \\
    &\quad- \frac{15 - 6F_2[f]\left(\rho\right) - 4F_2[f](0) - F_4[f](\rho) - 4F_4[f](\rho, \rho, \rho, 0, 0, 0)}{16} \\
    =\,\,& \frac{6\E\left[(\U_{\sqrt{\rho}} f^\even)^2\right] + 4\E[f]^2 + \E\left[(\U_{\sqrt{\rho}} f^\even)^4\right] + 6\E\left[(\U_{\sqrt{\rho}} f^\odd)^2(\U_{\sqrt{\rho}} f^\even)^2\right]}{16} \\
    & \quad + \frac{4\E[f] \cdot \left(\E\left[(\U_{\sqrt{\rho}} f^{\even}(\bx))^3\right] + 3\E\left[\U_{\sqrt{\rho}} f^{\even}(\bx) \cdot (\U_{\sqrt{\rho}} f^{\odd}(\bx))^2\right]\right)}{16}\\
    \geq\,\,&\frac{2\E\left[(\U_{\sqrt{\rho}}
    f^\even)^2\right] + 8\E[f]^2 + 6\E\left[(\U_{\sqrt{\rho}} f^\odd)^2(\U_{\sqrt{\rho}} f^\even)^2\right]}{16} \\
    & \quad + \frac{4\E[f] \cdot \left(\E\left[(\U_{\sqrt{\rho}} f^{\even}(\bx))^3\right] + 3\E\left[\U_{\sqrt{\rho}} f^{\even}(\bx) \cdot (\U_{\sqrt{\rho}} f^{\odd}(\bx))^2\right]\right)}{16} \\
    =\,\,&\frac{2\E\left[(\U_{\sqrt{\rho}}
    f^\even)^2\right] + 2\E[f]^2 + 4\E[f] \cdot\E\left[(\U_{\sqrt{\rho}} f^{\even}(\bx))^3\right] }{16} \\
    & \quad + \frac{6\E\left[(\U_{\sqrt{\rho}} f^\odd)^2(\U_{\sqrt{\rho}} f^\even)^2\right] + 12\E[f] \cdot\E\left[\U_{\sqrt{\rho}} f^{\even}(\bx) \cdot (\U_{\sqrt{\rho}} f^{\odd}(\bx))^2\right] + 6\E[f]^2 }{16} \\
    \geq\,\,& 0.
\end{align*}
    Here in the first inequality we used $\E\left[(\U_{\sqrt{\rho}}
    f^\even)^2\right] \geq \E[f]^2$ and $\E\left[(\U_{\sqrt{\rho}} f^\even)^4\right] \geq 0$, while in the second inequality we used Proposition~\ref{prop:bound_f4_rho0}.
\end{proof}}

\change{
\section{Analysis of the Explicit Integrality Gap Instance}\label{appendix:explicit_gap}
This appendix is devoted to the proof of the following theorem for the explicit gap instance $\Phi_n$ constructed in Section~\ref{sub-explicit}.
\begin{theorem}
For any integral solution to $\Phi_n$, the weight of the satisfied clauses is at most $\frac{3(\sqrt{21}-4)}{2} + O(\frac{1}{n})$.
\end{theorem}
To analyze the weight of the satisfied constraints for a given solution, we consider the following distributions.
\begin{definition}
For every $k < n/2$, we define $\mathcal{D}_k$ to be the following distribution over $(V_n)^k$:
\begin{enumerate}
    \item Sample $2k+1$ distinct indices $i_1, i_2, \ldots, i_{2k+1} \in [n]$ uniformly at random.
    \item Sample $2k + 1$ independent random coin flips $b_1, \ldots, b_{2k + 1} \in \{-1, 1\}$.
    \item For every $j \in [k]$, let $\bv_j = \frac{1}{\sqrt{3}}(b_1\be_{i_1} + b_{2j}\be_{i_{2j}} + b_{2j +1}\be_{i_{2j+1}})$. Return the $k$-tuple $(\bv_1, \bv_2, \ldots, \bv_{k})$.
\end{enumerate}
\end{definition}
Informally speaking, this distribution samples $k$ vectors from $V_n$ of ``sunflower shape'' in the sense that all of them share exactly one index on which they are nonzero.
\begin{definition}
Given an assignment, we let 
\[F_2 \EQ \underset{(\bv_1, \bv_2) \sim \mathcal{D}_2}{\E}[x_{\bv_1}x_{\bv_2}] \quad,\quad
F_4 \EQ \underset{(\bv_1, \bv_2, \bv_3, \bv_4) \sim \mathcal{D}_4}{\E}[x_{\bv_1}x_{\bv_2}x_{\bv_3}x_{\bv_4}] \;. \]
\end{definition}
\begin{remark}
Here $F_2$ and $F_4$ come from an actual assignment rather than a rounding scheme, but they play the same role in the argument.
\end{remark}
\begin{proposition}
Given an assignment, the proportion of $3$-clauses which are satisfied is $\frac{3+3F_2}{4}$ and the proportion of $5$-clauses which are satisfied is $\frac{15-6F_2-F_4}{16}$ 
\end{proposition}
\begin{proof}[Proof sketch]
This can be shown by expanding out each constraint as a polynomial.
\end{proof}
By Lemma \ref{NAE35ratiolemma}, if we had that $F_4 \geq F_2^2$ then we would have that the total weight of the satisfied clauses is at most $\frac{3(\sqrt{21}-4)}{2}$. Instead, we show that $F_4 \geq F_2^2 - O(\frac{1}{n})$. Adapting the argument in Lemma \ref{NAE35ratiolemma} accordingly, this implies that the total weight of the satisfied clauses is at most $\frac{3(\sqrt{21}-4)}{2} + O(\frac{1}{n})$.
\begin{lemma}\label{lem:f2f4ineq}
For any assignment,
\[
F_4 \GE F_2^2 - O\Bigl(\frac{1}{n}\Bigr)\;.
\]
\end{lemma}
\begin{proof}
Let $k = \lfloor n/2 \rfloor - 1 < n / 2$. Sample $(\bv_1, \ldots, \bv_k) \sim \mathcal{D}_k$. Note that the marginal distribution of any pair of these vectors is exactly $\mathcal{D}_2$ and any 4 vectors exactly $\mathcal{D}_4$. Now let $X = \sum_{i = 1}^k x_{\bv_i}$. We have the inequality
\[
\Var[X^2] \EQ \E[X^4] - \left(\E[X^2]\right)^2 \GE 0\;.
\]
We have that
\[
\E\left[X^2\right] \EQ \E\left[\left(\sum_{i = 1}^k X_{\bv_i}\right)^2\right] \EQ \sum_{i=1}^k \E[X_{\bv_i}^2] + \sum_{i \neq j} \E[X_{\bv_i}X_{\bv_j}] \EQ k + k(k-1)F_2\;.
\]
Here we used the fact that $X_{\bv_i} \in \{-1, 1\}$ and $X_{\bv_i}^2 = 1$. Similarly we can compute
\[
\E\left[X^4\right] \EQ \E\left[\left(\sum_{i = 1}^k X_{\bv_i}\right)^4\right] \EQ 3k^2 - 2k + k(k-1)(6k-8)F_2 + k(k-1)(k-2)(k-3)F_4\;.
\]
Plugging in these two expressions to the inequality above, we get
\[
3k^2 - 2k + k(k-1)(6k-8)F_2 + k(k-1)(k-2)(k-3)F_4 - (k + k(k-1)F_2)^2 \GE 0.
\]
Our lemma follows by shifting terms, dividing both sides by $k(k-1)(k-2)(k-3)$, and using the fact that $k = \Theta(n)$.
\end{proof}
}

\end{document}